\setlist[enumerate]{nosep}
\setlist[itemize]{nosep}
\pgfplotsset{compat=1.18}
\newcommand\identity{1\kern-0.25em\text{l}}
\newtheorem{theorem}{Theorem}
\newtheorem{lemma}[theorem]{Lemma}
\newtheorem{corollary}[theorem]{Corollary}
\theoremstyle{plain}
\newtheorem{definition}[theorem]{Definition}
\newtheorem{problem}[theorem]{Problem}
\DeclareMathOperator{\RR}{\mathbb{R}}% reals
\DeclareMathOperator{\EX}{\mathbb{E}}% expected value
\DeclareMathOperator{\domain}{\mathcal{D}} % domain
\newcommand{\imi}{\mathrm{i}} %% roman "i"
\newcommand\restr[2]{{% we make the whole thing an ordinary symbol
  \left.\kern-\nulldelimiterspace % automatically resize the bar with \right
  #1 % the function
  \vphantom{\big|} % pretend it's a little taller at normal size
  \right|_{#2} % this is the delimiter
  }}
\newcommand{\PF}{\text{PF}}
\newcommand{\RTE}{\text{RTE}}
\newcommand{\rpf}{r_{\mathrm{PF}}}
\newcommand{\epf}{\epsilon_{\mathrm{PF}}}
\newcommand{\Bpf}{B_{\mathrm{PF}}}
\newcommand{\Brte}{B_{\mathrm{RTE}}}
\newcommand{\nmax}{n_{\mathrm{max}}}
\newcommand{\Nrte}{N_{\mathrm{RTE}}}
\title{The Practicality of Randomized Quantum Linear Systems Solvers}
\author[1,2,3,4,*]{Siddharth Hariprakash}
\author[5]{Roel Van Beeumen}
\author[1]{Katherine Klymko}
\author[1,*]{Daan Camps}
\affil[1]{National Energy Research Scientific Computing Center, Lawrence Berkeley National Laboratory, Berkeley, CA 94720, USA}
\affil[2]{Leinweber Institute for Theoretical Physics and Department of Physics, University of California, Berkeley, California 94720, USA}
\affil[3]{Physics Division, Lawrence Berkeley National Laboratory, Berkeley, CA 94720, USA}
\affil[4]{BlueQubit Inc, San Francisco, CA 94105, USA}
\affil[5]{Applied Mathematics and Computational Research Division, Lawrence Berkeley National Laboratory, Berkeley, CA 94720, USA}
\affil[*]{Corresponding author: siddharth\_hari@berkeley.edu}
\affil[*]{Corresponding author: daancamps@gmail.com}
\date{}
\begin{document}

\maketitle

\begin{abstract}

Randomized quantum algorithms have been proposed for quantum linear algebra with the goal of constructing shallower circuits than methods based on block encodings, and have been speculated to offer benefits in the early fault-tolerant era. In this work, we derive explicit, non-asymptotic error bounds on every algorithmic parameter of a randomized quantum linear systems solver that combines sampling from a Fourier series with Hamiltonian simulation, and confirm these bounds numerically. We show that even a $4 \times 4$ instance with condition number $\kappa = 100$ requires on the order of $10^{15}$ non-Clifford gates to converge, calling into question the practicality of such randomized schemes. Comparing the two Hamiltonian-simulation kernels, product formulas (PFs) and the random Taylor expansion (RTE), both our analytical bounds and experiments show RTE reaches a given target error with roughly an order of magnitude smaller total gate budget, although neither is practical. Our analysis bridges asymptotic proposals and hardware implementation.

\end{abstract}

\section{Introduction}
\label{sec:intro}

Quantum computers hold significant promise for solving complex computational problems in scientific computing, particularly within the physical sciences~\cite{qc_chem, qc_hep, Bauer:2023qgm, Bauer:2022hpo, qc_materials}. This has inspired a broader search for quantum algorithms that can effectively solve \emph{classical} computational problems arising in combinatorial optimization, machine learning, and linear algebra~\cite{Montanaro2016, Morales:2024azh}. In this work, we focus on the latter and study the topic of \emph{quantum linear algebra}.

Most existing approaches to quantum linear algebra rely on a fully coherent, oracle-based data access model with Hamiltonian simulation~\cite{hhl,Kirby:2025iiw} and block-encoding access~\cite{qsvt} as two of the most common assumptions. Hamiltonian simulation is a key quantum algorithmic primitive; it is known to be in the BQP complexity class, and is efficiently implementable on quantum computers for $k$-local and sparse Hamiltonians~\cite{trot_theory, qdrift, Berry2006, Berry2015, Low2017, Low2019hamiltonian}. Consequently, Hamiltonian simulation has become a widespread quantum routine and a foundation for many subsequent quantum algorithms. As a subroutine, it provides query access to the time evolution operator $e^{-\imi H t}$ with controllable error $\epsilon$. In particular, the first quantum algorithm to solve linear systems of equations proposed by Harrow, Hassidim, and Lloyd (HHL) makes use of Hamiltonian simulation~\cite{hhl}. On the other hand, a block encoding is an embedding of a non-unitary operator into a larger unitary, and is central to algorithms based on the quantum singular value transformation (QSVT)~\cite{qsvt, unification}. Block encodings provide query access to the operator itself or to polynomial transformations of the operator through QSVT and its variants. Quantum circuits for block encodings can be constructed explicitly for certain classes of highly structured, data-sparse problems~\cite{doi:10.1137/22M1484298, Sunderhauf2024blockencoding,Kane:2024odt,Hariprakash:2023tla,Rhodes:2024zbr,LIU2025102480} without relying on quantum random access memory (QRAM)~\cite{clader22}, or from linear combinations of unitaries (LCU)~\cite{lcu,dellachiara2025}, which is efficient for operators that are sparse in the Pauli basis.

Coherent data access via block encoding models generally requires deep and complicated quantum circuits that will remain challenging to implement for near-term quantum computers, even if some degree of fault-tolerance is achieved. In an effort to bypass the challenge of constructing block encodings, a new class of randomized algorithms have been proposed in the context of quantum linear algebra. For this approach, the computational problem is reformulated such that a classical sampling procedure combined with a Hamiltonian simulation subroutine can be used to provide Monte Carlo estimates to the desired solution upon averaging over all estimates. It has been shown that a key advantage of this approach is that the circuit complexity of each individual sample can be reduced at the cost of increasing the total number of samples required to reach a certain precision. Prominent examples of randomized quantum algorithms include methods for Hamiltonian simulation~\cite{qdrift,Granet2024}, for statistical phase estimation~\cite{Wan:2021non}, and for computing more general matrix functions~\cite{Wang:2023els,Chakraborty:2023vtr,Wang:2025atx,Martyn:2024hwl}.

\begin{figure}[t]
    \centering
    \includegraphics[width=0.7\textwidth]{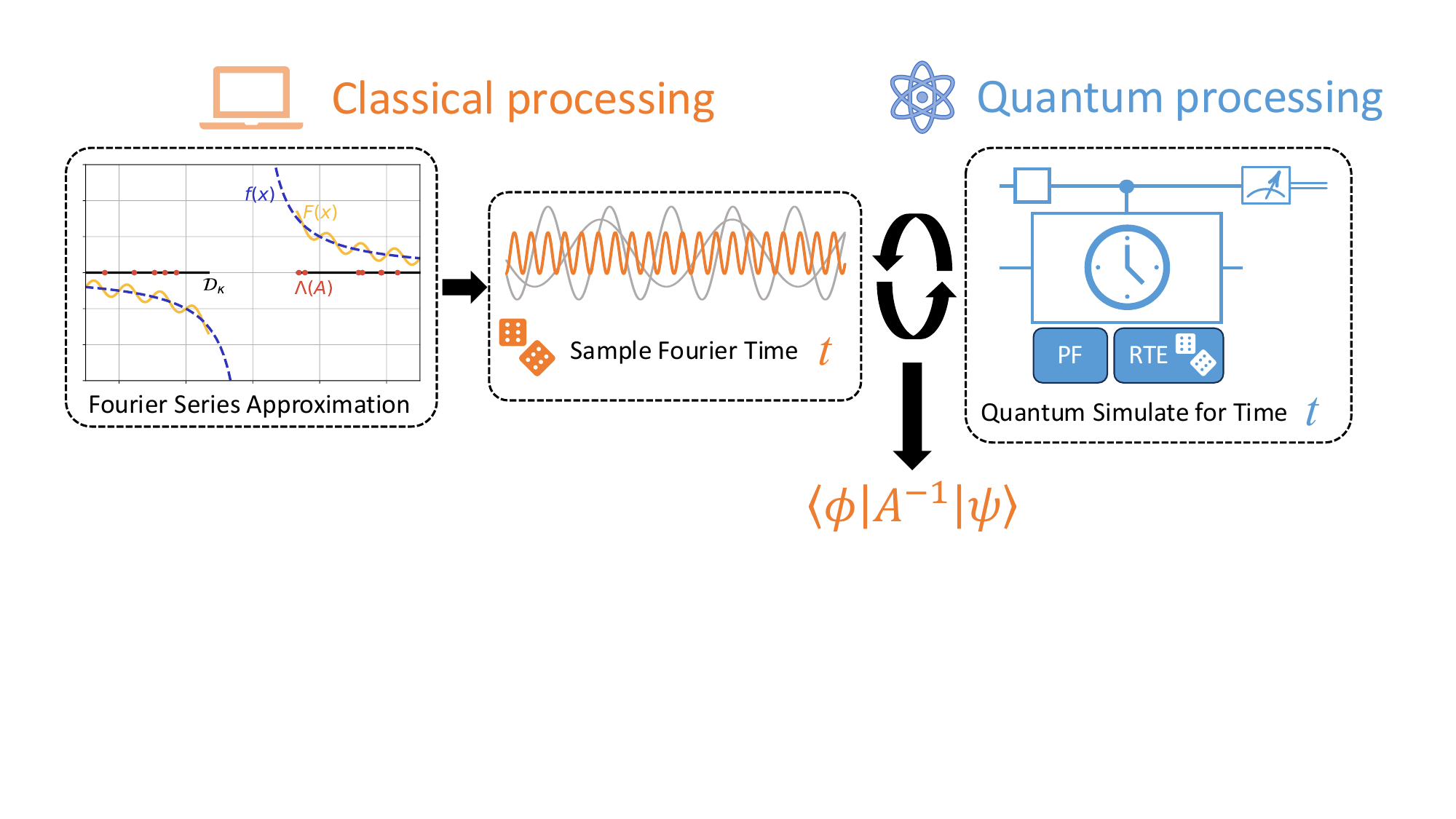}
    \caption{Randomized framework for sampling from solutions to the quantum linear systems~(QLS) problem $\bra{\phi} A^{-1} \ket{\psi}$ for a Hermitian matrix $A$, consisting of a classical processing step (\textbf{left}) which constructs a provably accurate Fourier series approximation to the inverse function, followed by repeatedly sampling a Fourier time $t$ and constructing an approximate time evolution operator $U \approx e^{-\imi At}$ using either a Product Formula (\PF) or Random Taylor Expansion (\RTE). Next the quantum processing step (\textbf{right}) uses $\log(N) + 1$ qubits to measure $\bra{\phi} U \ket{\psi}$ for each sample. The statistical average over all samples converges to $\bra{\phi} A^{-1} \ket{\psi}$.}
    \label{fig:overview_fig}
\end{figure}

In this work, we build on~\cite{Wan:2021non,Wang:2023els} and present an explicit randomized quantum algorithm for quantum linear algebra problems based on Fourier series approximations. We consider the task of estimating scalar properties of matrix functions $f(A)$ for an $N \times N$ Hermitian operator $A$, i.e., quantities of the form $\bra{\phi}f(A)\ket{\psi}$ (We do not consider the problem of computing the required normalization factor in this work, see \cite{Wang:2023els} for a careful treatment of this topic). We primarily focus on the quantum linear systems~(QLS) problem, that is $f(A) = A^{-1}$, but provide a brief overview of how to extend the methods that we present to arbitrary functions. The proposed method is summarized in \cref{fig:overview_fig} and preserves the key advantages of~\cite{Wan:2021non,Wang:2023els}: only a single ancilla qubit leading to a $\log(N) + 1$ qubit complexity for $N$-dimensional problems and tunable circuit and sampling complexities. 

We note that all previous work in the space of Fourier series based randomized QLS
solvers provide only asymptotic complexities. While useful, it is unclear what the
actual end-to-end resource requirements are (throughout, we use ``end-to-end'' to
mean explicit in all resources associated with the matrix $A$: the Fourier-series
construction, the sampling procedure, and the controlled time-evolution
circuits, given state-preparation oracles for $\ket{\psi}$ and $\ket{\phi}$, whose
implementation cost we treat as a separate, additive contribution to the per-sample gate complexity). The problem of constructing the Fourier series is often left as an open question in previous work and hence its contribution to the final algorithmic complexities have not been thoroughly studied. In this work, we accurately characterize the error made by approximating the inverse function by one possible choice for the Fourier series approximation proposed in recent work \cite{qrt24}. We also place upper bounds on the variance introduced as a result of sampling individual terms from this Fourier series. Thus, this allows us to compute end-to-end resource requirements for the full randomized algorithm including classical and quantum subroutines. We show that even for moderately sized problems, our analysis puts into question the practicality of such randomized schemes.  

After defining the problem and giving an overview of the randomized procedure, our main contributions include:
\begin{enumerate}[label=\roman*]
    \item A detailed and complete analysis of the truncation and discretization errors (\cref{thm:truncation_parameters,thm:fourier_parameters}) stemming from approximating the inverse function as a Fourier series. This includes \emph{non-asymptotic} choices for the parameters used to construct
the Fourier series given some error tolerance, using the discretization scheme
of~\cite{qrt24}, which requires exponentially fewer quadrature nodes---logarithmic rather than polynomial in $1/\epsilon$---than a
low-order discretization.
    \item \emph{Non-asymptotic} algorithmic resource estimates given an error budget. Specifically, we compute the number of samples required by the randomization procedure as well as the non-Clifford gate depth of each sample (\cref{thm:trotter_sample_complexity,thm:random_sample_complexity,cor:trotter_rmse,cor:rte_rmse}). We present results for two different strategies to implement Hamiltonian simulation. The two approaches are (1) product formulas~(\PF) \cite{trot_theory,Childs:2018xxl} and (2) an implementation based on \emph{random Taylor expansion} (\RTE)~\cite{Wan:2021non,Wang:2023els}.
    \item Numerical demonstrations provided for a set of relevant small-scale examples that empirically confirm the validity of our results while also questioning the practicality of such randomized schemes.
\end{enumerate}

We make explicit which elements of our work are inherited and which are new. From prior work
we take: the integral representation of the inverse function
\eqref{eq:inverse_function_double_integral}, used in~\cite{AndrewM:2015yvx}; the
Gauss--Legendre--trapezoidal discretization of that representation, proposed
in~\cite{qrt24}; the randomized Fourier-sampling framework---sampling Fourier
times, applying a Hadamard test, and averaging---developed
in~\cite{Wan:2021non, Wang:2023els}; and the random Taylor expansion (random
compiler) lemma and its sampling procedure from~\cite{Wan:2021non}. The genuinely
new contributions of this work are: (i) \emph{explicit, non-asymptotic} choices
for every parameter of the Fourier construction
given a target error, which prior treatments left
asymptotic or unspecified; (ii) explicit non-asymptotic bounds on the sampling
normalization and the resulting sample complexity; (iii)
resource analysis of two distinct Hamiltonian-simulation kernels---a second-order
product formula and the random Taylor expansion---in both high-probability and
RMSE convergence modes, including the bias-amplification mechanism that fixes the
per-sample simulation cost; and (iv) numerical demonstrations of the
two kernels on matched matrix instances.

\section{Results}
\label{sec:results}

\subsection{Problem statement}

Let $\mathcal{H}$ be an $N = 2^n$-dimensional Hilbert space and $A \in \mathbb{C}^{N \times N}$: $\mathcal{H} \rightarrow \mathcal{H}$ be some Hermitian operator on this space. We assume that the Pauli decomposition of $A$ is known and given by,
\begin{equation}
A = \sum_{\ell \in [L]} c_{\ell}P_{\ell},
\label{eq:paulidecomp}
\end{equation} 
where $[L] := \{0, 1, \dots, L-1\}$, each $P_{\ell} \in \{I,X,Y,Z\}^{\otimes n}$ (i.e., a length-$n$ Pauli string), and each coefficient $c_{\ell} \in \mathbb{R}$. We assume efficient classical access to the coefficients and classical representation of the Paulis, $\{ (c_{\ell}, P_\ell) \}_{\ell \in [L]}$. Each length-$n$ Pauli string requires $2n$ classical bits of storage and each coefficient is assumed to be stored in some floating point representation requiring a constant amount of storage per coefficient. We also assume access to the Pauli weight of $A$, defined as  $\lambda := \sum_{\ell} |c_\ell|$. Furthermore, we assume that $\lVert A \rVert \geq 1$. From the unitarity of the Pauli matrices and the triangle inequality, this automatically implies that $\lambda \geq 1$. We also assume that the condition number of $A$, $\kappa := \kappa(A) = \lVert A^{-1} \rVert \lVert A \rVert$, is known or that we have an upper bound $\kappa^* \geq \kappa$. 
Finally, we also assume access to state preparation oracles for the pair of states $\ket{\psi},\ket{\phi} \in \mathcal{H}$, which means that we have unitaries $U_{\psi},U_{\phi}$ such that for a reference state $\ket{g} \in \mathcal{H}$,
\begin{align}
    \ket{\psi} = U_{\psi}\ket{g}, \hspace{1mm}  \ket{\phi} = U_{\phi}\ket{g}.
\label{eq:stateprep}
\end{align}
For the remainder of this paper, we assume that the cost associated with state preparation is constant and, for simplicity, we keep $\ket{g} = \ket{0}$. The problem statement that we study in this paper can be summarized as follows:

\begin{problem}
\label{problem_statement}
Given classical access to the Pauli decomposition \eqref{eq:paulidecomp} for an $N \times N$ Hermitian operator $A$ with Pauli weight $\lambda$, condition number upper bounded by $\kappa^*$, state preparation oracles $U_{\psi},U_{\phi}$ for $\ket{\psi}, \ket{\phi}$, and approximation error $\epsilon > 0$, construct an estimator $\hat{Z}$ where $\EX[\hat{Z}] \approx \bra{\phi} A^{-1} \ket{\psi} $ . The number of samples $N_{\rm S}$ is to be chosen such that the statistical average $Z_{N_{\rm S}} := N_{\rm S}^{-1}\sum_{i\in[N_{\rm S}]} \hat{Z}_i$ satisfies
\begin{equation}
\left| Z_{N_{\rm S}} - \bra{\phi} A^{-1} \ket{\psi} \right| \leq \epsilon,
\label{eq:problem}
\end{equation}
with probability greater than $1 - \delta$ for $\delta \in (0, 1)$.
\end{problem}

\subsection{Description of the randomized algorithm}
\label{subsec:alg_description}

The following provides an overview of the different steps of the algorithm with detailed results presented in the following sections. We highlight whether each individual step requires classical or quantum resources (or both). 

The algorithm begins with a series of \textbf{classical pre-processing} steps, the first of which is to approximate 
the inverse function, $f(x) = x^{-1}$, over some domain $\domain \subset \RR_0$
by a Fourier series approximation $F(x)$,
\begin{equation}
    \label{eq:fourier_series_def}
    \restr{f(x)}{\domain} \approx \sum_{s \in [S]} \alpha_s e^{-\imi x t_s} =: F(x),
\end{equation}
that satisfies,
\begin{equation}
\sup_{x \in \domain} \left| f(x) - F(x) \right| \leq \epsilon_F,
\label{eq:scalar_approx}
\end{equation}
where $S$ is the number of terms in the Fourier series approximation, $t_s \in \mathbb{R}$ is the Fourier time, $\alpha_s \in \mathbb{C}$ is the Fourier coefficient, and $\epsilon_F$ is the error on the Fourier series approximation. If the spectral range of the operator $A$ falls within the domain $\domain$, then \cref{eq:scalar_approx} implies that
\begin{equation}
\sup_{x \in \domain}
\lVert A^{-1} - F(A) \rVert \leq \epsilon_F.
\label{eq:op_approx}
\end{equation}
Given $\kappa^*$ and $\lambda$, we show how to construct $\domain$ and then $F(A)$ given an error tolerance $\epsilon_F$. We show that the complexity of this step is independent of the Hilbert space dimension.

After this initial pre-processing, we use a combination of \textbf{classical and quantum processing}. In particular, we use an iterative and randomized routine such that at each step $j$ we sample a Fourier time $t_j \in \{t_s\}_{s\in[S]}$ with probability $|\alpha_j|/\sum_{s\in[S]}|\alpha_s|$ and consider the exponential $e^{-\imi At_j}$. We now seek a method to implement an approximate controlled version of the unitary to derive an estimate for $\bra{\phi} e^{-\imi At_j} \ket{\psi}$ using a Hadamard test. At this juncture, we consider two approaches for constructing a unitary that serves as either an approximation or estimator (respectively) for the controlled version of $e^{-\imi At_j}$ that we use for implementing the Hadamard test circuit:

\begin{itemize}

\item \textit{Product Formula} (\PF): We can approximate $e^{-\imi At_j}$ with a product/Trotter formula by splitting the full exponential into products of ``simpler'' exponentials such that each exponential involves only a single $n$-qubit Pauli string. For example, one can use a $2^{\mathrm{nd}}$ order \PF\ approximation as follows:
\begin{align}
    e^{-\imi At_j} \approx U(t_j) := e^{-\imi c_0P_0t_j / 2}e^{-\imi c_1P_1t_j / 2}\dots e^{-\imi c_{L-1}P_{L-1}t_j}e^{-\imi c_{L-2}P_{L-2}t_j / 2}\dots e^{-\imi c_0P_0t_j / 2}.
\end{align}
We note that to implement a controlled version of $U(t_j)$ on quantum hardware, we can control each of the ``simpler'' exponentials individually.

\item \textit{Random Taylor Expansion} (\RTE): Alternatively, we can use Lemma 2 from \cite{Wan:2021non} to approximately express the time evolution operator $e^{-\imi At_j}$ as a linear combination of unitaries (LCU):
\begin{align}
    e^{-\imi At_j} \approx \sum_{m\in\mathcal{M}}d_{mj}U_{mj},
\end{align}
for some appropriately defined index set $\mathcal{M}$. From this LCU we sample an individual unitary $U(t_j)\in\{U_{mj}\}_{m\in\mathcal{M}}$ (by interpreting the coefficients in the LCU as being proportional to probabilities) such that $\EX[U(t_j)] \approx e^{-\imi At_j}$, thus introducing a second level of randomization. We show that for appropriately chosen algorithmic parameters the unitary $U(t_j)$ can be expressed as a series of Pauli rotations and an insignificant number of Clifford operations. Thus, the primary source of the gate complexity to construct a controlled call to $U(t_j)$ on quantum hardware comes from controlling a series of Pauli rotations.
\end{itemize}

For each Fourier sample, let $U(t_j)$ be the unitary obtained from one of the two subroutines outlined above. We define $N_{\rm CP}$ as the number of controlled Pauli rotations required to implement a controlled call to $U(t_j)$ and discuss how to choose this parameter given a certain error budget. We set $\tilde{U}(t_j) := U_{\phi}^{\dagger} U(t_j) U_{\psi}$. We now use \emph{two shots} (see Supplementary Information~\ref{app:1_shot_Hadamard} for why we choose this method over performing multiple shots for each Fourier sample), one each for the real part and imaginary part of the overlap, of the Hadamard test to obtain estimates for the overlap $\bra{0}\tilde{U}(t_j)\ket{0} = \bra{\phi} U(t_j) \ket{\psi}$. We denote the result as $\hat Z(t_j) \in \mathbb{C}$, where we note that $\EX[\hat Z(t_j)] = \bra{\phi} U(t_j) \ket{\psi}$. This completes a single iteration $j$ of this hybrid workflow. We denote the total number of iterations, and thus the total number of samples of the form $\hat Z(t_j)$, by $N_{\rm S}$. Each such $\hat Z(t_j)$ is then stored on the classical device. We note that all such iterations are independent of each other and can be conveniently parallelized.

The final step involves \textbf{classical post-processing} to compute the statistical average of all the samples that are collected:  
 \begin{align}
    \label{eq:z_hat}
     Z_{N_{\rm S}} = N_{\rm S}^{-1} \sum_{j \in [N_{\rm S}]} \hat Z(t_j).
 \end{align}
The quantities $N_{\rm S}$ and $N_{\rm CP}$ are chosen such that
\begin{align}
\label{eq:error_ineq}
    \left| Z_{N_{\rm S}} - \bra{\phi}A^{-1}\ket{\psi} \right| &\leq \left| Z_{N_{\rm S}} -  \bra{\phi}
    F(A)
    \ket{\psi} \right| + \left|\bra{\phi}
    F(A)
    \ket{\psi} - \bra{\phi}A^{-1}\ket{\psi} \right|
    \leq \epsilon_S + \epsilon_F \leq \epsilon,
\end{align}
with probability greater than $1-\delta$, thus solving \cref{problem_statement}.

\cref{eq:error_ineq} decomposes the total approximation error into two sources: the sampling error $\epsilon_S$ from estimating $\bra{\phi} F(A) \ket{\psi}$, and the Fourier series approximation error $\epsilon_F$. Each of these can be further attributed to more detailed sources of error, which we control independently throughout our analysis using the balanced budget split $\epsilon_S, \epsilon_F \leq \epsilon/2$. The key \emph{figures of merit} (or cost model) we consider are the sample
complexity $N_{\rm S}$ and the number of controlled Pauli rotations per sample
$N_{\rm CP}$, where $N_{\rm CP}$ does not include the cost of preparing
$\ket{\psi}$ and $\ket{\phi}$. We assume access to state-preparation oracles
$U_\psi, U_\phi$ at unit cost throughout; if one instead wishes to account for
their implementation, the corresponding gate cost enters $N_{\rm CP}$ additively
on a per-sample basis and propagates through all subsequent resource estimates
without otherwise altering the analysis, since each sample applies $U_\psi$ and
$U_\phi$ exactly once. We derive lower bounds on these figures of merit for each of the two Hamiltonian simulation subroutines, and we will see that there is a trade-off between them: in general, increasing $N_{\rm S}$ allows us to decrease $N_{\rm CP}$ and vice versa.

While our analysis throughout produces non-asymptotic bounds with explicit prefactors, it is useful to extract the asymptotic dependence on the key problem parameters for the purposes of comparison and discussion. \cref{tab:resource_summary} summarizes the asymptotic scaling of $N_{\rm S}$ and $N_{\rm CP}$ for the two Hamiltonian-simulation subroutines derived in this work, with pointers to the cited theorems and corollaries for the explicit non-asymptotic formulas. For each subroutine we consider two modes of convergence: a high-probability bound on the deviation $|Z_{N_{\rm S}} - \bra{\phi} A^{-1} \ket{\psi}|$ obtained via Hoeffding's inequality, as required by \cref{problem_statement}, and a bound on the root mean squared error (RMSE) of the same deviation, which trades the logarithmic dependence on $1/\delta$ for tighter constants. The two modes yield identical $N_{\rm CP}$ and differ in $N_{\rm S}$ only by the $\ln(1/\delta)$ factor present in the high-probability mode; the table therefore reports the high-probability entries, and the caption records the corresponding RMSE statements. The \RTE\ entries are stated for the regime $r = \mathcal{O}(t_{\max}^2)$, where $r$ denotes the number of equal segments into which each sampled Fourier time is partitioned for time-evolution simulation and $t_{\max} := \max_{s \in [S]} |t_s|$ is the largest Fourier time on the grid; this regime suppresses an exponential prefactor in $N_{\rm S}$ that would otherwise appear, at the cost of a quadratic dependence of $N_{\rm CP}$ on $t_{\max}$. The \PF\ subroutine has an analogous parameter $r$ controlling the second-order product formula construction.

The table reveals two structural features. First, for both subroutines and both convergence modes the sample complexity scales as $\widetilde{\mathcal{O}}(\kappa^2/\epsilon^2)$, where $\widetilde{\mathcal{O}}$ hides factors polylogarithmic in $\kappa$, $\lambda$, and $1/\epsilon$. Second, the per-sample non-Clifford gate complexity $N_{\rm CP}$ also scales quadratically in $\kappa$ for both subroutines, but \PF\ and \RTE\ differ in their dependence on the Pauli weight $\lambda$, the Pauli sparsity $L$, and the approximation error $\epsilon$.

\begin{table}[h]
\centering
\renewcommand{\arraystretch}{1.6}
\resizebox{\textwidth}{!}{%
\begin{tabular}{c|c|c|c|c}
\toprule
\textbf{Subroutine}
  & $\boldsymbol{N_{\rm S}}$ \textbf{(RMSE)}
  & $\boldsymbol{N_{\rm S}}$ \textbf{(High prob.)}
  & $\boldsymbol{N_{\rm CP}}$
  & \textbf{Total} $\boldsymbol{\mathcal{O}(N_{\rm S} N_{\rm CP})}$ \\
\midrule
\PF
  & $\widetilde{\mathcal{O}}\!\left(\dfrac{\kappa^2}{\epsilon^2}\right)$
  & $\widetilde{\mathcal{O}}\!\left(\dfrac{\kappa^2}{\epsilon^2} \ln\dfrac{1}{\delta}\right)$
  & $\widetilde{\mathcal{O}}\!\left(\dfrac{L \lambda^{3/2} \kappa^2}{\sqrt{\epsilon}}\right)$
  & $\widetilde{\mathcal{O}}\!\left(\dfrac{L \lambda^{3/2} \kappa^4}{\epsilon^{5/2}} \ln\dfrac{1}{\delta}\right)$ \\
\midrule
\RTE
  & $\widetilde{\mathcal{O}}\!\left(\dfrac{\kappa^2}{\epsilon^2}\right)$
  & $\widetilde{\mathcal{O}}\!\left(\dfrac{\kappa^2}{\epsilon^2} \ln\dfrac{1}{\delta}\right)$
  & $\widetilde{\mathcal{O}}\!\left(\lambda^2 \kappa^2\right)$
  & $\widetilde{\mathcal{O}}\!\left(\dfrac{\lambda^2 \kappa^4}{\epsilon^2} \ln\dfrac{1}{\delta}\right)$ \\
\bottomrule
\end{tabular}%
}
\caption{Asymptotic dependence of the sample complexity $N_{\rm S}$, the
per-sample non-Clifford gate complexity $N_{\rm CP}$, and the total
non-Clifford gate complexity $\mathcal{O}(N_{\rm S} N_{\rm CP})$ on the
condition number upper bound $\kappa^*$, the Pauli weight $\lambda$, the Pauli
sparsity $L$, the error tolerance $\epsilon$, and the failure probability
$\delta$, extracted from the non-asymptotic bounds shown in \cref{thm:trotter_sample_complexity,thm:random_sample_complexity,cor:trotter_rmse,cor:rte_rmse}. The $\widetilde{\mathcal{O}}$ notation hides factors
polylogarithmic in $\kappa$, $\lambda$, and $1/\epsilon$. The sample complexity
is shown in two convergence modes: a \emph{root-mean-squared-error} (RMSE)
bound (\PF: \cref{cor:trotter_rmse}; \RTE: \cref{cor:rte_rmse}), and a
\emph{high-probability} bound based on Hoeffding's inequality
(\PF: \cref{thm:trotter_sample_complexity}; \RTE:
\cref{thm:random_sample_complexity}); the latter carries an additional
$\ln(1/\delta)$ factor in exchange for a stronger mode of convergence. The
$N_{\rm CP}$ column is common to both modes. The \textbf{Total} column reports
the product $N_{\rm S} N_{\rm CP}$ in the high-probability mode. \RTE\ entries
assume the segment number is chosen as $r = \mathcal{O}(t_{\max}^2)$ to suppress
the exponential prefactor $e^{2 t_{\max}^2/r}$ that would otherwise appear in
$N_{\rm S}$; choosing $r = \mathcal{O}(t_{\max})$ instead yields $N_{\rm CP} =
\widetilde{\mathcal{O}}(\lambda \kappa)$ but introduces a factor
$e^{\mathcal{O}(\lambda\kappa\log(\lambda\kappa/\epsilon_T))}$ in $N_{\rm S}$.
Explicit non-asymptotic formulas, including all prefactors and polylogarithmic
factors, are given in the cited theorems and corollaries.}
\label{tab:resource_summary}
\end{table}

Multiplying the two figures of merit gives the total non-Clifford gate
complexity $\mathcal{O}(N_{\rm S} N_{\rm CP})$, which evaluates to
$\widetilde{\mathcal{O}}(L \lambda^{3/2} \kappa^4 / \epsilon^{5/2})$ for \PF\
and $\widetilde{\mathcal{O}}(\lambda^2 \kappa^4 / \epsilon^2)$ for \RTE:
both subroutines scale as $\kappa^4$, arising as the product of the quadratic
sampling complexity $N_{\rm S} = \widetilde{\mathcal{O}}(\kappa^2)$ and a
quadratic per-sample simulation cost $N_{\rm CP} = \widetilde{\mathcal{O}}(\kappa^2)$. This $\kappa^4$
scaling, together with the large $\kappa$-dependent prefactors entering
$N_y N_z$ and $t_{\max}$, is ultimately what renders the approach impractical:
as we show in our numerical results, even a $4 \times 4$ instance with
$\kappa = 100$ and target error $\epsilon \sim 10^{-3}$ already requires on the
order of $10^{15}$ controlled Pauli rotations to converge.

We provide a high-level flow chart of the full randomized quantum algorithm in \cref{fig:alg_outline} that summarizes the content of this section, and we numerically study the \RTE\ and \PF\ methods by performing classical simulations for a selection of small-scale examples. We note that these methods are not specific to the inverse problem and can be used to estimate scalar properties of any matrix function $f(A)$ for which a provably accurate Fourier series approximation can be constructed.

\begin{figure}[h]
    \centering
    \input{outline_figure}
    \caption{Flowchart outlining the randomized quantum algorithm for generating approximate solutions to \cref{problem_statement}. Orange boxes are classical data processing steps, with boxes having a white background describing inputs, intermediate results, and outputs, and those filled with an orange background describing a classical algorithm. The blue box shows the Hadamard test circuit which is the only step that requires a QPU. The dashed lines indicate the iterative portion of the algorithm that uses a combination of classical and quantum processing. The dashed arrows indicate a choice in the method used to perform Hamiltonian simulation.}
    \label{fig:alg_outline}
\end{figure}

\subsection{Fourier series approximation to the inverse function with explicit error bounds}
\label{sec:fourier}

We start with the following integral representation for the inverse function, which was also used in~\cite{AndrewM:2015yvx},
\begin{equation}
    \label{eq:inverse_function_double_integral}
    x^{-1} = \frac{\imi}{\sqrt{2 \pi}} \int_{0}^{\infty} dy \int_{- \infty}^{\infty} dz \left( z e^{-z^2/2}e^{-\imi xyz} \right) =: I(x).
\end{equation}
We define the domain $\domain_\kappa = \left[-1, -1/\kappa \right] \cup \left[1/\kappa, 1 \right]$, with $\kappa \geq 1$, which excludes the singularity at zero, and construct a finite linear combination of unitaries (LCU) approximation---equivalently, a Fourier series approximation---to $I(x)$ over $\domain_\kappa$ in two steps: truncating the integration ranges in \cref{eq:inverse_function_double_integral}, and then discretizing the truncated integral. We improve upon the construction in~\cite{AndrewM:2015yvx} by adopting a more efficient discretization, following~\cite{qrt24}. We make these two steps precise via the following definitions.

\begin{definition}[$\epsilon_T$-close integral truncation]
\label{def:integral_truncation}
For positive real \emph{truncation parameters} $(y_{\max}, z_{\max}) \in \RR^+ \times \RR^+$, the truncated integral $I_T: \domain_\kappa \rightarrow \mathbb{C}$ is defined as
\begin{equation}
\label{eq:inverse_function_truncated_double_integral}
    I_T(x) := \frac{\imi}{\sqrt{2\pi}} \int_{0}^{y_{\max}} dy \int_{-z_{\max}}^{z_{\max}} dz \left( z e^{-z^2/2}e^{-\imi xyz} \right).
\end{equation}
The truncation is said to be $\epsilon_T$-close to $I$ if the parameters $(y_{\max}, z_{\max})$ satisfy
\begin{equation}
    \label{eq:truncation_error}
    \sup_{x \in \domain_\kappa} \left| I_T(x) - I(x) \right| \leq \epsilon_T,
\end{equation}
in which case we write $I_T^{(\epsilon_T)}$ to make this dependence explicit.
\end{definition}

We next discretize the truncated integral. This requires positive \emph{weight functions} $w_y: [0, y_{\max}] \rightarrow \RR^+$ and $w_z: [-z_{\max}, z_{\max}] \rightarrow \RR^+$, together with quadrature nodes $y_j \in [0, y_{\max}]$ and $z_k \in [-z_{\max}, z_{\max}]$.

\begin{definition}[$\epsilon_D$-close integral discretization]
\label{def:integral_discretization}
For \emph{Fourier parameters} $(J, K) \in \mathbb{N} \times \mathbb{N}$, the discretization $I_D: \domain_\kappa \rightarrow \mathbb{C}$ of $I_T$ is defined as
\begin{equation}
\label{eq:fourier_approx_def}
I_D(x) := \frac{\imi}{\sqrt{2\pi}} \sum_{j \in [J]} \sum_{k \in [K]} w_y^{(j)} w_z^{(k)} z_k e^{-z_k^2/2} e^{-\imi xy_jz_k},
\end{equation}
where $w_y^{(j)} := w_y(y_j)$ and $w_z^{(k)} := w_z(z_k)$. The discretization is said to be $\epsilon_D$-close to $I_T$ if the parameters $(J, K)$ satisfy
\begin{equation}
    \label{eq:discretization_error}
    \sup_{x \in \domain_\kappa} \left| I_D(x) - I_T(x) \right| \leq \epsilon_D,
\end{equation}
in which case we write $I_D^{(\epsilon_D, \epsilon_T)}$ to make explicit both the discretization tolerance and the underlying truncation tolerance of $I_T$.
\end{definition}

The discretization in \cref{eq:fourier_approx_def} is a Fourier series approximation to the inverse function with $JK$ terms: each product $y_j z_k$ plays the role of a \emph{Fourier time}, and each $\imi w_y^{(j)} w_z^{(k)} z_k e^{-z_k^2/2}/\sqrt{2\pi}$ is the corresponding (complex) \emph{Fourier coefficient}.

Combining \cref{def:integral_truncation,def:integral_discretization}, the total error in approximating $I(x)$ by $I_D^{(\epsilon_D, \epsilon_T)}(x)$, which we denote by $\epsilon_F$ in accordance with \cref{eq:error_ineq}, decomposes via the triangle inequality into truncation and discretization contributions:
\begin{align*}
    \epsilon_F := \sup_{x \in \domain_\kappa} \left| I_D^{(\epsilon_D, \epsilon_T)}(x) - I(x) \right|
    &\leq \sup_{x \in \domain_\kappa} \left| I_D^{(\epsilon_D, \epsilon_T)}(x) - I_T^{(\epsilon_T)}(x) \right| + \sup_{x \in \domain_\kappa} \left| I_T^{(\epsilon_T)}(x) - I(x) \right| \\
    &\leq \epsilon_D + \epsilon_T.
\end{align*}
Our goal in the remainder of this section is to construct a provably accurate Fourier series approximation that requires as few terms $JK$ and as short a maximum simulation time $t_{\max} := y_{\max} z_{\max}$ as possible, given a target approximation error $\epsilon$; the results are summarized in \cref{alg:fourier_series_construction}. To meet the balanced choice $\epsilon_F \leq \epsilon/2$ made earlier, we apply \cref{alg:fourier_series_construction} with parameters chosen so that $\epsilon_D + \epsilon_T \leq \epsilon/2$. 

The following theorem characterizes the truncation error $\epsilon_T$ and describes how to choose the truncation parameters $(y_{\max},z_{\max})$ for each of the integrals appearing in \cref{eq:inverse_function_truncated_double_integral}.

\begin{theorem}
\label{thm:truncation_parameters}
Given a truncation error $\epsilon_T > 0$, the integration bounds $(y_{\max}, z_{\max}) \in \mathbb{R}^{+} \times \mathbb{R}^{+}$ given by 
\begin{align}
y_{\max} = \kappa \sqrt{2 \ln{\left(\frac{3\kappa}{\epsilon_T}\right)}}, \hspace{2mm}
z_{\max} = \sqrt{2 \ln{\left(\frac{3\kappa}{\epsilon_T}\right)}}
\label{eq:trunc_params}
\end{align}
provide a truncated integral approximation $I_T^{(\epsilon_T)}$, as defined in~\cref{def:integral_truncation}, to $I(x)$ over the domain $\domain_{\kappa}$.
\end{theorem}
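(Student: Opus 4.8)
The plan is to bound the two truncation errors separately: the error from replacing the inner $z$-integral $\int_{-\infty}^{\infty}$ by $\int_{-z_{\max}}^{z_{\max}}$, and the error from replacing the outer $y$-integral $\int_{0}^{\infty}$ by $\int_{0}^{y_{\max}}$. Write $I(x) - I_T^{(\epsilon_T)}(x) = E_z(x) + E_y(x)$ where $E_z$ accounts for the $|z|>z_{\max}$ tail (integrated over all $y\in[0,\infty)$, or over $[0,y_{\max}]$ — whichever decomposition is cleanest) and $E_y$ accounts for the $y>y_{\max}$ tail of the already-$z$-truncated integral. Then it suffices to show each piece is at most $\epsilon_T/2$, and since the two bounds will turn out to have the same form, the stated parameters (each a $\sqrt{2\ln(3\kappa/\epsilon_T)}$, with $y_{\max}$ carrying the extra factor $\kappa$) should come out with a little room to spare, absorbing the constant $3$ in the logarithm.

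For the $z$-tail: $\bigl|\int_{|z|>z_{\max}} z e^{-z^2/2} e^{-\imi xyz}\,dz\bigr| \le 2\int_{z_{\max}}^{\infty} z e^{-z^2/2}\,dz = 2 e^{-z_{\max}^2/2}$, which is independent of $x$ and $y$. This is the one genuinely clean estimate — the antiderivative of $z e^{-z^2/2}$ is elementary. The subtlety is that this bound then gets integrated over $y$, so to keep $E_z$ finite I should first note that for fixed $x$ with $|x|\ge 1/\kappa$ the inner $z$-integral of $z e^{-z^2/2} e^{-\imi xyz}$ decays in $y$ (it equals, up to constants, $\imi xy\, e^{-(xy)^2/2}$ by the standard Gaussian-derivative Fourier identity), so the relevant object is really the tail of $\int_0^{\infty} |xy| e^{-(xy)^2/2}\,dy$, which after the substitution $u = xy$ is $O(1/|x|) \le \kappa$. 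Combining, $|E_z(x)| \lesssim \kappa\, e^{-z_{\max}^2/2}$ and setting $z_{\max}^2/2 = \ln(3\kappa/\epsilon_T)$ makes this $\le \epsilon_T/3 \le \epsilon_T/2$.

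For the $y$-tail: using the same Gaussian-derivative identity, the truncated $z$-integral is close to $c\,\imi xy\,e^{-(xy)^2/2}$ (with an $O(e^{-z_{\max}^2/2})$ correction already controlled above), so $|E_y(x)| \lesssim \int_{y_{\max}}^{\infty} |x| y\, e^{-(xy)^2/2}\,dy$. Substituting $u=xy$ and using $|x|\le 1$ (upper end of the spectral domain) gives $|x|\ge$ nothing useful here — instead I use $|x| \ge 1/\kappa$ so that the lower limit of the $u$-integral is $x y_{\max} \ge y_{\max}/\kappa$; thus $|E_y(x)| \lesssim \int_{y_{\max}/\kappa}^{\infty} u e^{-u^2/2}\,du = e^{-(y_{\max}/\kappa)^2/2}$. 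Choosing $y_{\max} = \kappa\sqrt{2\ln(3\kappa/\epsilon_T)}$ makes $(y_{\max}/\kappa)^2/2 = \ln(3\kappa/\epsilon_T)$, so this term is also $\le \epsilon_T/3$. Adding the two (plus the cross term from the correction, also $O(\epsilon_T/\kappa \cdot \kappa) = O(\epsilon_T)$) and checking that the constants fit under $\epsilon_T$ finishes the proof.

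The main obstacle I anticipate is bookkeeping the interchange of the two truncations and the Gaussian-derivative rewrite cleanly: one must be careful that the "correction" term from $z$-truncation, once integrated against the $y$-tail, does not reintroduce a $\kappa$-dependence that breaks the constant, and that the order in which the two one-dimensional tail bounds are applied is consistent (truncate $z$ first on the full $y$-range, then truncate $y$ on the $z$-truncated integrand, versus the reverse). I expect the cleanest route is to bound $|I(x)-I_T(x)| \le |I(x) - \tilde I(x)| + |\tilde I(x) - I_T(x)|$ where $\tilde I$ truncates only $z$, handling the first term with the $x$-uniform estimate $2\int_0^\infty\!\int_{z_{\max}}^\infty z e^{-z^2/2}dz\,$ — wait, that diverges in $y$, so in fact the correct intermediate object must truncate $y$ as well; the right decomposition is $\tilde I(x) = \frac{\imi}{\sqrt{2\pi}}\int_0^{y_{\max}}\!\int_{-\infty}^{\infty}$, giving $|I - \tilde I|$ = pure $y$-tail of the \emph{exact} inner integral $= \int_{y_{\max}}^\infty |x| y e^{-(xy)^2/2}dy$ directly (no correction term at all), and $|\tilde I - I_T|$ = $z$-tail bounded uniformly and then integrated over the now-\emph{finite} range $[0,y_{\max}]$, contributing $\le \frac{1}{\sqrt{2\pi}} y_{\max}\cdot 2 e^{-z_{\max}^2/2}$. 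That introduces a $y_{\max} \sim \kappa\sqrt{\ln}$ prefactor, which is mildly annoying but still absorbed into the logarithm since $\kappa\sqrt{\ln(3\kappa/\epsilon_T)} \cdot e^{-z_{\max}^2/2} = \kappa\sqrt{\ln(3\kappa/\epsilon_T)}\cdot \epsilon_T/(3\kappa)$, which is $\le \epsilon_T/2$ for the relevant parameter range (and if one wants it airtight, a slightly larger constant than $3$ inside the log, or a $\log$-of-$\log$ adjustment, suffices). I would flag this prefactor explicitly rather than hide it.
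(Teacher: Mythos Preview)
Your overall strategy---decompose into a $y$-tail and a $z$-tail, compute the inner Gaussian integral exactly for the $y$-tail, and bound the $z$-tail crudely over the finite $y$-range---is sound and is essentially what underlies the paper's argument. The paper's proof, however, is much shorter: it simply quotes Lemma~12 of Childs--Kothari--Somma (the reference \texttt{AndrewM:2015yvx}), which already states
\[
\left| I_T(x) - \tfrac{1}{x} \right| \;\le\; \frac{e^{-(x y_{\max})^2/2}}{|x|} \;+\; \frac{2\,e^{-z_{\max}^2/2}}{|x|},
\]
and then uses $|x|\ge 1/\kappa$ together with the split $\epsilon_T = \epsilon_T/3 + 2\epsilon_T/3$ to read off the stated $y_{\max}$ and $z_{\max}$.

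Two points of comparison. First, in your $y$-tail computation you dropped the Jacobian $1/|x|$ in the substitution $u=xy$: the integral $\int_{y_{\max}}^\infty |x|\,y\,e^{-(xy)^2/2}\,dy$ equals $\tfrac{1}{|x|}e^{-(xy_{\max})^2/2}$, not $e^{-(y_{\max}/\kappa)^2/2}$. Restoring the factor and using $|x|\ge 1/\kappa$ gives $\kappa\,e^{-(y_{\max}/\kappa)^2/2}$, which is exactly the paper's first term and still yields $\epsilon_T/3$ after plugging in your $y_{\max}$---so this slip is harmless.

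Second, and more substantively, your $z$-tail bound $\tfrac{2}{\sqrt{2\pi}}\,y_{\max}\,e^{-z_{\max}^2/2}$ is genuinely weaker than the cited lemma's $\tfrac{2}{|x|}e^{-z_{\max}^2/2}\le 2\kappa\,e^{-z_{\max}^2/2}$: your prefactor carries an extra $\sqrt{\ln(3\kappa/\epsilon_T)}$. You are right to flag it, and you are also right that no fixed constant inside the logarithm absorbs it---one really needs a $\log\log$ correction to $z_{\max}$, so your argument as written proves a slightly weaker statement than the theorem. The Childs--Kothari--Somma bound avoids this because it extracts a $1/|x|$ from the oscillatory $y$-integration of the $z$-tail rather than using only $|e^{-\imi xyz}|=1$ over the finite window $[0,y_{\max}]$. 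If you want a fully self-contained proof matching the stated constants, you would need to reproduce that sharper estimate for the $z$-tail; otherwise your route gives the theorem up to the $\log\log$ slack you identified.
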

\begin{proof}
    The proof is given in Supplementary Information~\ref{app:thm3}
\end{proof}
We note that \cref{thm:truncation_parameters}  allows us to compute the maximal Fourier time parameter $t_{\max}$ for the Fourier series $I_T^{(\epsilon_T)}$ as follows,
\begin{equation}
\label{eq:max_fourier_time}
t_{\max} = 2\kappa\ln{\left(\frac{3\kappa}{\epsilon_T}\right)}.
\end{equation}

Next, we present lower bounds on the Fourier parameters $(J,K)$ that appear in each of the sums in \cref{eq:fourier_approx_def} such that the discretization error to approximate an integral of the form $I_T^{(\epsilon_T)}$ is bounded from above by a given error tolerance $\epsilon_D$. We note that this analysis requires specific choices for the quadrature rules for both the $z$- and $y$-integral appearing in \cref{eq:inverse_function_truncated_double_integral}. We follow the proposal introduced in~\cite{qrt24} and combine discretization by the trapezoidal rule for the $z$-integral and the Gauss-Legendre rule for the $y$-integral to select the quadrature nodes and weights. 

For the $z$-integral appearing in \cref{eq:inverse_function_truncated_double_integral}, we observe that the integrand is quasi-periodic, rapidly decaying (Gaussian), and analytic. It has been established that under these conditions the trapezoidal rule  convergences exponentially~\cite{trefethen14}. In the notation of \cref{eq:fourier_approx_def}, for an appropriately chosen value of the Fourier parameter $K$ (given some error tolerance), we can discretize the $z$-integral using the trapezoidal rule with the constant weight function,
\begin{align}
    \label{eq:trap_weight}
    w_z^{(k)} = \frac{2z_{\max}}{(K-1)}  =: \Delta z,
\end{align}
with quadrature nodes $z_k = \Delta z (k - (K-1)/2)$ for $k \in [K]$. For the $y$-integral appearing in \cref{eq:inverse_function_truncated_double_integral}, the integrand is a purely oscillatory function that does not decay and hence the trapezoidal rule does not converge exponentially fast. As proposed in~\cite{qrt24}, we instead make use of the Gauss-Legendre quadrature rule for the $y$-integral. We use the change of variable $y \mapsto y_{\max}(1+y)/2$ to rescale and write \cref{eq:inverse_function_truncated_double_integral} as,
\begin{equation}
    \label{eq:GL_transformed_integral}
    I_T^{(\epsilon)}(x) = \frac{\imi y_{\max}}{2\sqrt{2 \pi}} \int_{-1}^{1} dy \int_{- z_{\max}}^{z_{\max}} dz \left( z e^{-z^2/2}e^{-\imi xy_{\mathrm{max}}(1+y)z/2} \right).
\end{equation}
In the notation of \cref{eq:fourier_approx_def}, we can thus discretize the $y$-integral in \cref{eq:GL_transformed_integral} by using $J$ many Legendre nodes and setting $y_j = y_{\max}(1+c_j)/2$, where $c_j \in [-1,1]$ is the $j^{\mathrm{th}}$ root of the $J^{\mathrm{th}}$ degree Legendre polynomial. We also set $w_y^{(j)}$ to be the usual Gauss-Legendre quadrature weight scaled by the factor $y_{\max}/2$, i.e.,
\begin{equation}
    \label{eq:gl_weight}
    w_y^{(j)} = \frac{y_{\max}}{2}\frac{2}{(1-y_j^2)(P_J''(y_j))^2},
\end{equation}
where $P_J$ denotes the degree-$J$ Legendre polynomial.

Thus, we have effectively approximated the integral appearing in \cref{eq:inverse_function_truncated_double_integral} using a discretization of the form appearing in \cref{eq:fourier_approx_def}. We use the error bounds presented in \cite{trefethen14,qrt24} to arrive at the following theorem on how to choose the Fourier parameters $(J,K)$ to control the total discretization error.

\begin{theorem}
\label{thm:fourier_parameters}
Given a discretization error $\epsilon_D > 0$, the Fourier parameters $(J, K) \in \mathbb{N} \times \mathbb{N}$,
\begin{align}
K &= 1 + \frac{1}{\pi} \left( \ln{(2)} + \frac{z_{\max}^2}{2} + 2\kappa\ln\left(\frac{3\kappa}{\epsilon_T}\right) + \ln{\left(\frac{2}{\epsilon_D}\right)} + \ln{\left(1 + \frac{2}{z_{\max}\sqrt{2\pi}}\right)} \right), \label{eq:fpK}\\
J &= \frac{1}{\ln(2)} \left( \ln\left(\frac{K}{K-1}\right) + \ln\left(\frac{32 y_{\max} z_{\max}^2}{\epsilon_D \sqrt{2\pi}}\right) + \frac{\kappa \ln\left(\frac{3\kappa}{\epsilon_T}\right)}{2\sqrt{2}} \right), \label{eq:fpJ}
\end{align}
provide an integral discretization $I_D^{(\epsilon_D, \epsilon_T)}$, as defined in \cref{def:integral_discretization}, of $I_T^{(\epsilon_T)}(x)$ over the domain $\domain_\kappa$.
\end{theorem}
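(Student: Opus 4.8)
The plan is to bound the discretization error $\sup_{x\in\domain_\kappa}|I_D^{(\epsilon_D,\epsilon_T)}(x) - I_T^{(\epsilon_T)}(x)|$ by separating the contribution of the trapezoidal rule in $z$ from that of the Gauss--Legendre rule in $y$, allocating $\epsilon_D/2$ to each, and invoking the explicit quadrature error bounds of \cite{trefethen14,qrt24} after estimating the size of the analytically continued integrand in each case. Concretely, introduce the intermediate function obtained from $I_T^{(\epsilon_T)}$ by replacing only the inner $z$-integral with its $K$-point trapezoidal sum while keeping the $y$-integral exact; by the triangle inequality the total discretization error is then at most the Gauss--Legendre error for the resulting ($z$-discretized) integrand plus the $\sup_x$, $\sup_y$ trapezoidal error for the inner integral, so it suffices to make each of these at most $\epsilon_D/2$ and to solve the two resulting inequalities for $K$ and $J$.

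For the trapezoidal term I would use the exponentially convergent trapezoidal-rule estimate of \cite{trefethen14} (in the form stated in \cite{qrt24} for a truncated interval), noting that $z\mapsto z e^{-z^2/2}e^{-\imi xyz}$ is entire and Gaussian-decaying. Taking strip half-width $a = z_{\max}$ (which is near-optimal given the Gaussian factor) and node spacing $\Delta z = 2z_{\max}/(K-1)$, so that $2\pi a/\Delta z = \pi(K-1)$, one bounds the integrand on the strip $|\mathrm{Im}\,z|\le z_{\max}$, integrates over $z$ to obtain a factor $(2 + z_{\max}\sqrt{2\pi})\,e^{z_{\max}^2/2}\,e^{|xy|z_{\max}}$, and uses $|x|\le 1$ on $\domain_\kappa$ to replace $e^{|xy|z_{\max}}$ by $e^{yz_{\max}}$. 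Multiplying by the $\imi/\sqrt{2\pi}$ prefactor, integrating over $y\in[0,y_{\max}]$ (which contributes $(e^{y_{\max}z_{\max}}-1)/z_{\max}$), and substituting $z_{\max}^2/2 = \ln(3\kappa/\epsilon_T)$ and $y_{\max}z_{\max} = 2\kappa\ln(3\kappa/\epsilon_T)$ from \cref{eq:trunc_params}, the trapezoidal error is of order $(1 + 2/(z_{\max}\sqrt{2\pi}))\,e^{z_{\max}^2/2 + 2\kappa\ln(3\kappa/\epsilon_T) - \pi(K-1)}$; setting this at most $\epsilon_D/2$ and solving for $K$ yields \cref{eq:fpK}. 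The residual of the truncated trapezoidal rule from cutting $\mathbb{R}$ down to $[-z_{\max},z_{\max}]$ is of the same order as the truncation error already handled in \cref{thm:truncation_parameters} through the choice of $z_{\max}$, so it does not enlarge the bound.

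For the Gauss--Legendre term I would first apply the rescaling $y\mapsto y_{\max}(1+y)/2$ of \cref{eq:GL_transformed_integral} so the $y$-integration is over $[-1,1]$, and then bound the $z$-discretized integrand $G$ on a Bernstein ellipse $E_\rho$. On $E_\rho$ one has $|\mathrm{Im}\,y|\le(\rho-\rho^{-1})/2$, so the oscillatory factor contributes at most $e^{\gamma\,y_{\max}z_{\max}}$ for an explicit $\rho$-dependent constant $\gamma$, while the trapezoidal weights satisfy $\sum_k w_z^{(k)} = K\Delta z = \frac{2K}{K-1}z_{\max}$; bounding $|z_k|\le z_{\max}$ and $e^{-z_k^2/2}\le 1$ then gives $\sup_{E_\rho}|G|\le \frac{2K}{K-1}z_{\max}^2\,e^{\gamma\,y_{\max}z_{\max}}$. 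Inserting this into the geometric Gauss--Legendre error bound of \cite{trefethen14,qrt24} together with the $\imi y_{\max}/(2\sqrt{2\pi})$ prefactor, and fixing $\rho$ so that the geometric rate equals $2^{-J}$, one finds an error of order $\frac{K}{K-1}\cdot\frac{y_{\max}z_{\max}^2}{\epsilon_D\sqrt{2\pi}}\cdot e^{\gamma\,y_{\max}z_{\max}}\cdot 2^{-J}$; requiring this to be at most $\epsilon_D/2$, using $\kappa\ln(3\kappa/\epsilon_T) = \frac12 y_{\max}z_{\max}$, and solving for $J$ reproduces \cref{eq:fpJ}, with the $\ln(K/(K-1))$ term being exactly the logarithm of the trapezoidal-weight sum.

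I expect the main obstacle to be obtaining sufficiently tight and fully explicit bounds on the analytically continued integrands: the factor $e^{-\imi xyz}$ grows exponentially off the real axis, so in the $z$-variable it must be played off against the Gaussian $e^{-z^2/2}$ (which is why the strip width $z_{\max}$ is the right choice) and in the $y$-variable against the bounded extent of the rescaled interval (which is why the truncation step of \cref{thm:truncation_parameters} must be carried out first). It is essential here that $x$ ranges only over $\domain_\kappa$, so $|x|\le 1$, and that truncation has already confined $y$ to $[0,y_{\max}]$ and $z$ to $[-z_{\max},z_{\max}]$. A secondary but unavoidable difficulty is bookkeeping: since the statement is non-asymptotic, every constant from the two cited quadrature bounds, from the integrand estimates, from the ancillary factors $(2 + z_{\max}\sqrt{2\pi})$ and $K/(K-1)$, and from the $\epsilon_D/2$ split must be tracked so that the closed forms for $K$ and $J$ come out exactly as in \cref{eq:fpK,eq:fpJ}.
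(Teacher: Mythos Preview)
Your proposal is correct and follows the same route as the paper: the same intermediate decomposition into a trapezoidal contribution (integrated in $y$) plus a Gauss--Legendre contribution (summed over the $z$-nodes), the same strip and Bernstein-ellipse bounds from \cite{trefethen14,qrt24}, the choice of strip half-width $\rho=z_{\max}$ and ellipse parameter $\sigma=\sqrt{2}$, and the balanced $\epsilon_D/2$ split. The only discrepancy is that your Bernstein estimate $|\mathrm{Im}\,y|\le(\rho-\rho^{-1})/2$ is sharper than what the paper actually uses (it effectively takes $(\sigma+\sigma^{-1})/2$ in the exponent), so carrying your argument through gives a smaller sufficient $J$ than \cref{eq:fpJ}; by monotonicity of the Gauss--Legendre error in $J$ this still proves the stated choice, but you will not reproduce the constant $3/(2\sqrt{2})$ exactly.
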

\begin{proof}
    The proof is given in Supplementary Information~\ref{app:thm4}
\end{proof}

We can apply the results presented in this section to the task of approximating the inverse of a Hermitian matrix. We remind the reader that we assume $\lVert A\rVert \geq 1$  and that we know an upper bound on the condition number $\kappa \leq \kappa^*$. We distinguish between two cases. If $\lVert A\rVert=1$, then $\Lambda(A) \subseteq \domain_{\kappa^*}$, where $\Lambda(A)$ indicates the set of eigenvalues of $A$, and the previous results readily apply. Alternatively, if $\lVert A\rVert > 1$, then it does not necessarily hold that $\Lambda(A) \subseteq \domain_{\kappa^*}$. In this case, we can rescale the problem as follows.

\begin{lemma}
    \label{lemma:mat_rescaling}
    Let A be a Hermitian matrix with $\lVert A\rVert > 1$, upper bound on the condition number $\kappa^*$, and Pauli weight $\lambda$. Define the rescaled matrix $\Tilde{A}$ and the rescaled upper bound on the condition number $\Tilde{\kappa}^*$ as follows:
    \begin{align*}
        \Tilde{A} &:= \lambda^{-1}A, & \Tilde{\kappa}^* &:= \lambda \kappa^*.
    \end{align*}
    Then $\Lambda(\Tilde{A}) \subseteq \domain_{\Tilde{\kappa}^*}$.
\end{lemma}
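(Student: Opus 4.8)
The plan is to reduce the set inclusion $\Lambda(\Tilde{A}) \subseteq \domain_{\Tilde{\kappa}^*}$ to two scalar inequalities on the eigenvalues of $\Tilde{A}$ and then verify each from the standard relations among $\lVert A\rVert$, $\lVert A^{-1}\rVert$, $\kappa$, and $\lambda$ that are already collected in \cref{sec:set_up}. Since $A$ is Hermitian, so is $\Tilde{A} = A/\lambda$, and $\Lambda(\Tilde{A}) = \{\mu/\lambda : \mu \in \Lambda(A)\}$. Because $\domain_{\Tilde{\kappa}^*} = [-1,-1/\Tilde{\kappa}^*] \cup [1/\Tilde{\kappa}^*,1]$, it suffices to show that every $\nu \in \Lambda(\Tilde{A})$ obeys $1/\Tilde{\kappa}^* \leq |\nu| \leq 1$. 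As a preliminary sanity check I would note that $\Tilde{\kappa}^* = \lambda\kappa^* \geq 1$, using $\lambda \geq 1$ and $\kappa^* \geq \kappa \geq 1$, so that $\domain_{\Tilde{\kappa}^*}$ is the intended two-interval set.

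For the upper bound I would invoke the triangle inequality together with the unitarity of the Pauli strings: $\lVert A\rVert = \lVert \sum_{\ell} c_\ell P_\ell \rVert \leq \sum_\ell |c_\ell| = \lambda$, hence $\lVert \Tilde{A}\rVert = \lVert A\rVert/\lambda \leq 1$ and therefore $|\nu| \leq 1$ for every eigenvalue of $\Tilde{A}$. For the lower bound I would use that for an invertible Hermitian matrix the smallest-magnitude eigenvalue equals $\lVert A^{-1}\rVert^{-1}$, so that $\min_{\mu \in \Lambda(A)} |\mu| = \lVert A^{-1}\rVert^{-1} = \lVert A\rVert/\kappa$. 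Dividing by $\lambda$ and using $\kappa \leq \kappa^*$ and $\lVert A\rVert > 1$ gives $\min_{\nu \in \Lambda(\Tilde{A})} |\nu| = \lVert A\rVert/(\kappa\lambda) \geq \lVert A\rVert/(\kappa^*\lambda) \geq 1/(\lambda\kappa^*) = 1/\Tilde{\kappa}^*$. Combining the two bounds places every eigenvalue of $\Tilde{A}$ in $\domain_{\Tilde{\kappa}^*}$, which is the claim.

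All of these steps are elementary consequences of the hypotheses, so I do not expect a genuine obstacle; the only point requiring mild care is bookkeeping of which assumption is used where — the Pauli-weight bound $\lambda \geq \lVert A\rVert$ for the upper bound, $\lVert A\rVert > 1$ (strict, though $\geq 1$ would already suffice) for the lower bound, $\kappa \leq \kappa^*$ throughout — together with the preliminary observation $\Tilde{\kappa}^* \geq 1$ so that the target domain is nondegenerate.
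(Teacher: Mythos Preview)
Your proposal is correct and follows essentially the same route as the paper's proof: bound $|\max\mathrm{spec}(\Tilde{A})| \leq 1$ via $\lVert A\rVert \leq \lambda$ and bound $|\min\mathrm{spec}(\Tilde{A})| \geq 1/(\lambda\kappa^*)$ via $\lVert A\rVert \geq 1$ together with the definition of the condition number. If anything, your write-up is slightly more careful than the paper's, which works with the exact $\kappa$ rather than the upper bound $\kappa^*$ and leaves the final inclusion $\domain_{\lambda\kappa} \subseteq \domain_{\Tilde{\kappa}^*}$ implicit.
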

\begin{proof}
    The proof is given in Supplementary Information~\ref{app:matrix_rescaling}.
\end{proof}
Note that the Pauli weight of $\Tilde{A} = 1$. Thus, we may apply \cref{thm:truncation_parameters,thm:fourier_parameters} to $\Tilde{A}$ on the domain $\domain_{\tilde{\kappa}^*}$ to obtain a Fourier series approximation for $\Tilde{A}^{-1}$. We can then multiply this Fourier series by $\lambda^{-1}$ to obtain an approximation of $A^{-1}$. Note that the error on the approximation to $A^{-1}$ is also scaled by $\lambda^{-1}$.

In the rest of this work we present our results in terms of the rescaled upper bound on the condition number $\Tilde{\kappa}^*$. \cref{alg:fourier_series_construction} summarizes the content of this section and combines it with the previous results from this section.

\begin{algorithm}[h]
\caption{\label{alg:fourier_series_construction} Fourier series approximation to the inverse function $A^{-1}$.}
    \KwIn{\begin{itemize}
        \item[-] $\kappa^*$: upper bound on the condition number of $A$,
        \item[-] $\lambda$: Pauli weight of $A$,
        \item[-] $\epsilon_T$, $\epsilon_D$ : truncation and discretization error tolerances respectively such that $\epsilon_T + \epsilon_D \leq \epsilon/2$.
    \end{itemize}}
    \KwOut{Fourier coefficients and time parameters $\{\alpha_s,t_s\}_{s \in [S]}$ such that $\lVert A^{-1} - \lambda^{-1}\sum_{s \in [S]} \alpha_se^{-\imi \Tilde{A} t_s} \rVert \leq \frac{(\epsilon_T + \epsilon_D)}{\lambda}$}

$\Tilde{\kappa}^* \gets \lambda\kappa^*$\;
$y_{\max} \gets \Tilde{\kappa}^*\sqrt{2 \ln{\left(\frac{3\Tilde{\kappa}^*}{\epsilon_T}\right)}}$, $z_{\max} \leftarrow \sqrt{2 \ln{\left(\frac{3\Tilde{\kappa}^*}{\epsilon_T}\right)}}$\tcp*{\cref{eq:trunc_params}}
$K \gets 1 + \frac{1}{\pi} \left( \ln{(2)} + \frac{z_{\max}^2}{2} + 2\Tilde{\kappa}^*\ln\left(\frac{3\Tilde{\kappa}^*}{\epsilon_T}\right) + \ln{\left(\frac{2}{\epsilon_D}\right)} + \ln{\left(1 + \frac{2}{z_{\max}\sqrt{2\pi}}\right)} \right)$\tcp*{\cref{eq:fpK}}
$J \gets \frac{1}{\ln\left(2\right)} \left(\ln\left(\frac{K}{K-1}\right) + \ln\left(\frac{32y_{\max}z_{\max}^2}{\epsilon_D\sqrt{2\pi}}\right) + \frac{\Tilde{\kappa}^*\ln\left(\frac{3\Tilde{\kappa}^*}{\epsilon_T}\right)}{2\sqrt{2}} \right)$\tcp*{\cref{eq:fpJ}}
$\Delta z \gets \frac{2z_{\max}}{K-1}$\;
\ForEach{$(j,k) \in [J]\times[K]$}{
$x_j \gets$ $j$-th root of the $J$-th degree Legendre polynomial\;
$y_j \gets y_{\max}(1+x_j)/2$\;
$w_y^{(j)} \gets \frac{y_{\max}}{2}\frac{2}{(1-y_j^2)(P_J''(y_j))^2}$ \tcp*{\cref{eq:gl_weight}}
$z_k \gets \Delta z \left(k - (K-1)/2 \right)$\;
$\alpha_{jk} \leftarrow \left(\frac{\imi}{\sqrt{2 \pi}} w_y^{(j)} \Delta z\right) z_k e^{-z_k^2/2}$\;
$t_{jk} \gets y_jz_k$\;
}
\KwRet{$\{\alpha_{jk}, t_{jk}\}_{(j,k) \in [J] \times [K]}$}\;
\end{algorithm}

\subsection{Sampling from the Fourier series approximation}
\label{sec:sampling}

Having characterized the Fourier series approximation error $\epsilon_F$ appearing in \cref{eq:error_ineq}, 
we shift our attention over the next two sections to characterizing the sampling error $\epsilon_S$.
In the current section, we introduce the routine we use to sample from the Fourier series approximation. To set up sampling access and interpret the Fourier coefficients as probabilities, we have to normalize this set of coefficients. In particular, we define the following normalization parameters:
\begin{align}
    N_y := \frac{1}{\sqrt{2\pi}} \sum_{j\in[J]}\left|w_y^{(j)}\right|, \hspace{2mm}
    N_z := \sum_{k\in[K]}\left|w_z^{(k)}z_ke^{-z_k^2/2}\right|,
\label{eq:normalization_parameters}
\end{align}
    where $w_y^{(j)}$ is given by \cref{eq:gl_weight} and $w_z^{(k)}$ is given by \cref{eq:trap_weight}. Since the Gauss--Legendre weights $w_y^{(j)}$ and the constant trapezoidal weight $w_z^{(k)} = \Delta z$ are positive, the modulus of the Fourier coefficient $\alpha_{jk}$ factorizes across $j$ and $k$, and we define probability distributions
\begin{align}
    p_y^{(j)} := \frac{w_y^{(j)}}{\sum_{j' \in [J]} w_y^{(j')}} = \frac{w_y^{(j)}}{\sqrt{2\pi}\, N_y}, \hspace{2mm}
    p_z^{(k)} := \frac{w_z^{(k)} |z_k| e^{-z_k^2/2}}{N_z},
    \label{eq:probs}
\end{align}
each summing to unity. We rewrite the Fourier series generated by \cref{alg:fourier_series_construction} as follows:
\begin{align}
    \label{eq:fourier_series_}
    \sum_{s\in[S]} \alpha_s e^{-\imi \tilde{A} t_s} 
    &= I_{D}^{(\epsilon_D,\epsilon_T)}(\tilde{A}) \notag \\ 
    &= N_y N_z \sum_{j\in[J]} \sum_{k\in[K]} \omega(j,k)\, p_y^{(j)} p_z^{(k)}\, e^{-\imi\tilde{A}t_{jk}}.
\end{align}
where each $\omega(j,k)$ is a unit-modulus phase factor,
\begin{equation}
    \omega(j,k) := \imi \cdot \mathrm{sign}(z_k).
\end{equation}
We can sample indices $j' \in [J]$ with probability $p_y^{(j')}$ and indices $k' \in [K]$ with probability $p_z^{(k')}$, and thus effectively sample exponentials of the form $e^{-\imi\tilde{A}t_{j'k'}}$ from \cref{eq:fourier_series_}. As we will show, the product $N_y N_z / \lambda$ controls the magnitude of the per-sample estimators constructed in the following sections and therefore governs the sample complexity of the overall algorithm. The following Lemma uses \cref{thm:truncation_parameters} to express $N_y$ exactly and to upper bound $N_z$ in terms of parameters introduced previously:
\begin{lemma}
    \label{lemma:normalization_bounds}
    The normalization parameters $N_y, N_z$ defined in \cref{eq:normalization_parameters} satisfy:
    \begin{align}
        N_y = \frac{y_{\mathrm{max}}}{\sqrt{2\pi}} = \frac{\Tilde{\kappa}^*}{\sqrt{2\pi}} \sqrt{2\ln\left(\frac{3\Tilde{\kappa}^*}{\epsilon_T}\right)}, \hspace{2mm}
        N_z \leq z_{\mathrm{max}}\sqrt{2\pi} = \sqrt{2\pi}\sqrt{2\ln\left(\frac{3\Tilde{\kappa}^*}{\epsilon_T}\right)}.
    \end{align}
\end{lemma}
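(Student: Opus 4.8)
The plan is to handle the two normalization parameters separately, since $N_y$ is an exact identity while $N_z$ only calls for an upper bound, and to feed the truncation parameters from \cref{thm:truncation_parameters} (applied to the rescaled matrix, i.e.\ with $\kappa\mapsto\tilde\kappa^*$ via \cref{lemma:mat_rescaling}) into both at the end.

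For $N_y$ I would use the two structural facts about Gauss--Legendre quadrature on $[-1,1]$: the weights are strictly positive, and the rule integrates the constant function exactly, so the unscaled weights $\tilde w^{(j)}$ satisfy $\sum_{j\in[J]}\tilde w^{(j)} = \int_{-1}^{1}1\,dy = 2$. Since $w_y^{(j)}$ in \cref{eq:gl_weight} is precisely $\tilde w^{(j)}$ rescaled by $y_{\max}/2$, positivity lets me drop the absolute values and sum exactly: $\sum_{j\in[J]}\lvert w_y^{(j)}\rvert = \sum_{j\in[J]} w_y^{(j)} = \tfrac{y_{\max}}{2}\cdot 2 = y_{\max}$. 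Plugging into the definition $N_y = \tfrac{1}{\sqrt{2\pi}}\sum_{j}\lvert w_y^{(j)}\rvert$ gives the equality $N_y = y_{\max}/\sqrt{2\pi}$, and substituting $y_{\max} = \tilde\kappa^*\sqrt{2\ln(3\tilde\kappa^*/\epsilon_T)}$ from \cref{thm:truncation_parameters} yields the stated closed form $N_y = \tfrac{\tilde\kappa^*}{\sqrt{2\pi}}\sqrt{2\ln(3\tilde\kappa^*/\epsilon_T)}$.

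For $N_z$ the trapezoidal weight \cref{eq:trap_weight} is the positive constant $w_z^{(k)} = \Delta z$, so I would first factor it out to write $N_z = \Delta z\sum_{k\in[K]}\lvert z_k\rvert\,e^{-z_k^2/2}$. Using that every node satisfies $\lvert z_k\rvert\le z_{\max}$, I peel off the linear factor to get $N_z \le z_{\max}\,\Delta z\sum_{k\in[K]}e^{-z_k^2/2}$. The crux is then to control the Gaussian sum by $\sum_{k\in[K]}e^{-z_k^2/2}\le\sqrt{2\pi}$, which immediately gives $N_z \le z_{\max}\sqrt{2\pi}\,\Delta z$; inserting $\Delta z = 2z_{\max}/(K-1)$ and $z_{\max}^2 = 2\ln(3\tilde\kappa^*/\epsilon_T)$ then produces the claimed bound $N_z \le \tfrac{2z_{\max}^2\sqrt{2\pi}}{K-1} = \tfrac{4\sqrt{2\pi}}{K-1}\ln(3\tilde\kappa^*/\epsilon_T)$.

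The main obstacle is exactly this last comparison between the discrete Gaussian sum $\sum_{k}e^{-z_k^2/2}$ over the symmetric equispaced nodes $z_k = \Delta z(k-(K-1)/2)$ and the Gaussian integral $\int_{-\infty}^{\infty}e^{-z^2/2}\,dz = \sqrt{2\pi}$. I would establish it by bounding the sum against the integral of the unimodal integrand $e^{-z^2/2}$, treating the increasing branch $z\le 0$ and the decreasing branch $z\ge 0$ separately so that the overshoot of the sum relative to the integral is controlled by boundary terms (equivalently, an Euler--Maclaurin estimate). Making this quadrature-versus-integral step rigorous, and tracking precisely which endpoint contributions appear, is the only nontrivial part of the argument; everything else reduces to the positivity/normalization of the quadrature weights and the parameter substitutions from \cref{thm:truncation_parameters,thm:fourier_parameters}.
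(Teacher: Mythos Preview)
Your treatment of $N_y$ is correct and is exactly the paper's argument: positivity of the Gauss--Legendre weights plus exactness on constants gives $\sum_j w_y^{(j)}=y_{\max}$, hence $N_y=y_{\max}/\sqrt{2\pi}$, and the closed form follows from \cref{thm:truncation_parameters}.

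For $N_z$ your overall strategy also mirrors the paper's---pull out $|z_k|\le z_{\max}$ and compare the remaining Gaussian sum to the Gaussian integral---but the target inequality you aim for, $\sum_{k\in[K]}e^{-z_k^2/2}\le\sqrt{2\pi}$, is false for the parameters that actually arise. An Euler--Maclaurin or monotone Riemann-sum comparison for the unimodal integrand controls the \emph{weighted} sum $\Delta z\sum_{k}e^{-z_k^2/2}$ by $\int_{-\infty}^{\infty}e^{-z^2/2}\,dz=\sqrt{2\pi}$ (up to boundary terms of size $O(1)$), not the bare sum. The bare sum is of order $\sqrt{2\pi}/\Delta z$, and since $\Delta z=2z_{\max}/(K-1)\ll 1$ for the $K$ dictated by \cref{thm:fourier_parameters}, it is much larger than $\sqrt{2\pi}$. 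With the $\Delta z$ attached where the integral comparison actually puts it, your chain of inequalities yields only $N_z\lesssim z_{\max}\sqrt{2\pi}$, not $N_z\le z_{\max}\sqrt{2\pi}\,\Delta z=\tfrac{2z_{\max}^2\sqrt{2\pi}}{K-1}$.

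The paper's own proof makes exactly this same replacement of $\sum_k e^{-z_k^2/2}$ by $\int e^{-z^2/2}\,dz$, flagged only by an ``$\approx$'' rather than an inequality, so you have faithfully reproduced its approach and correctly isolated the one nontrivial step. Just be aware that this step is heuristic in both your write-up and the paper's: the Riemann-sum argument you sketch does not deliver the stated bound with the extra factor of $\Delta z$.
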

\begin{proof}
    The proof is given in Supplementary Information~\ref{app:normalization_bounds}
\end{proof}
\cref{lemma:normalization_bounds} shows that $N_y$ scales linearly with $\tilde{\kappa}^*$ up to logarithmic factors, while $N_z$ is poly-logarithmic in $\tilde{\kappa}^*$. The product $N_y N_z$ therefore scales linearly with $\tilde{\kappa}^*$ up to logarithmic factors. We show that this translates to a sample complexity that depends polynomially on $\tilde{\kappa}^*$ (\cref{thm:trotter_sample_complexity,thm:random_sample_complexity}).

\subsection{Error bounds for the product formula subroutine}
\label{sec:resource_req_pf}

Let $\tau$ be a Fourier time of the form $t_{j'k'}$ that we sample using the established protocol. We now consider the task of implementing controlled versions of operators of the form $e^{-\imi\tilde{A}\tau}$. In particular, we consider two subroutines to accomplish this task: (i) a product formula (or \PF) approach to construct approximate time evolution operators, and (ii) the random Taylor expansion (or \RTE) scheme to sample from the LCU decomposition for controlled time evolution operators, which introduces a second randomized procedure into the overall algorithm. We derive the number of quantum circuit samples and gate depth per sample required to solve \cref{problem_statement} for each subroutine. This section in particular focuses on the \PF\ routine.

We consider the rescaled matrix $\tilde{A}$ and write it as,
\begin{align}
    \tilde{A} = \lambda^{-1}\sum_{\ell\in[L]}c_{\ell}P_{\ell} := \sum_{\ell\in[L]}\tilde{P}_{\ell},
\end{align}
where we have absorbed the coefficients $c_{\ell}/\lambda$ into the modified Paulis $\tilde{P}_{\ell}$. The \PF\ method allows one to construct the exponential $e^{-\imi\tilde{A}\tau}$ by decomposing the full exponential into products of exponentials each involving only a single $\tilde{P}_{\ell}$. The exponential of a single $\tilde{P}_{\ell}$ can then be implemented using the standard CNOT ladder construction~\cite{Nielsen_Chuang_2010}. The controlled version of the exponential of a single $\tilde{P}_{\ell}$ thus requires a controlled single-qubit Pauli rotation and $4\log(N)$ Cliffords~\cite{Wan:2021non}. The most common method to construct a \PF\ is using the Lie/Suzuki Trotter construction. For explicit constructions we refer the reader to \cite{trot_theory}. The number of terms in a \PF\ grows exponentially with the order of the formula, and in practice only low order product formulas are used. In the rest of this work we will consider the $2^{\mathrm{nd}}$ order formula given by,
\begin{align}
    \label{eq:2nd_order_pf}
    S(\tau) := \prod_{\ell\in[L]}e^{-\imi\tilde{P}_{\ell}\tau/2}\prod_{\ell\in[L]}e^{-\imi\tilde{P}_{L-\ell}\tau/2}.
\end{align}
We note that the following analysis generalizes to higher-order product
formulas~\cite{trot_theory}; we restrict to second order because our aim is an
explicit, non-asymptotic resource estimate, and a clean computable error prefactor
is readily available at second order, whereas the higher-order commutator bounds
carry prefactors that are not as easily evaluated in closed form for general
matrices. We leave this direction to future work.

As shown in \cite{trot_theory}, \cref{eq:2nd_order_pf} provides a good approximation to the true exponential when $\tau$ is small since the error terms scale as a polynomial in $\tau$. For finite $\tau$, we can split the full interval $\tau$ into $r$ smaller steps $\tau/r$ and use a $2^{\mathrm{nd}}$ order \PF\ approximation for each step, i.e.,
\begin{align}
    e^{-\imi \tilde{A} \tau}
    \approx \left(S(\tau/r)\right)^{r} := S_{r}(\tau).
\label{eq:pf_split}
\end{align}
The quantity $r\in\mathbb{N}$ is known as the Trotter number. The question we ask now is how to choose $r$ to ensure the approximation error,
\begin{align}
    \label{eq:trotter_error}
    \epf := \|e^{-\imi\tilde{A}\tau} - S_{r}(\tau)\|,
\end{align}
is bounded. The following lemma establishes an upper bound on $\epf$ for a given choice $r$:
\begin{lemma}
    \label{lemma:trotter_number}
    For a given Trotter number $r$, the error in operator norm $\epf$ defined in \cref{eq:trotter_error} for a $2^{\mathrm{nd}}$ order \PF\ satisfies the following upper bound:
    \begin{align}
        \label{eq:trotter_error_ub}
        \epf \leq \frac{f \tau^3}{r^2},
    \end{align}
    where $f \in \mathbb{C}$ is defined as follows:
    \begin{align}
        f := \frac{1}{12}\sum_{\ell_0 = 0}^{L-1} \left\| \left[ \sum_{\ell_2 = \ell_0}^{L-1} \tilde{P}_{\ell_2}, \left[ \sum_{\ell_1 = \ell_0+1}^{L-1} \tilde{P}_{\ell_1}, \tilde{P}_{\ell_0} \right] \right] \right\| 
        + \frac{1}{24} \sum_{\ell_0 = 0}^{L-1} \left\| \left[ \tilde{P}_{\ell_0}, \sum_{\ell_1 = \ell_0+1}^{L-1} \tilde{P}_{\ell_1} \right] \right\|.
    \end{align}
\end{lemma}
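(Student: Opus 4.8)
The plan is to split the argument into a single-step estimate and a telescoping argument that lifts it to the $r$-step formula. Writing $\tilde A = \sum_{\ell\in[L]}\tilde P_\ell$ as in \cref{eq:2nd_order_pf}, I would first show that for any step size $s\in\RR$ the one-step Strang splitting obeys $\lVert e^{-\imi\tilde A s} - S(s)\rVert \le f\,|s|^3$ with $f$ exactly the nested-commutator combination in the statement, and then use unitarity to convert this into a bound on $S_r(\tau) = (S(\tau/r))^{r}$ from \cref{eq:pf_split}.

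For the single-step bound I would invoke the tight commutator-scaling error analysis of second-order product formulas from~\cite{trot_theory}, specialized to $H_\gamma \mapsto \tilde P_\ell$. The mechanism is standard: one expands both $e^{-\imi\tilde A s}$ and the symmetric product $S(s)$ about $s=0$; by the symmetry of the Strang splitting the terms of order $s^0$, $s^1$, and $s^2$ agree, so the leading discrepancy is $O(s^3)$, and the third-order remainder, written in variation-of-parameters (integral) form, becomes a multiple integral of an operator which, after grouping the summands $\tilde P_\ell$ by their position in the ordered product, is a sum of the nested commutators $\bigl[\sum_{\ell_2\ge\ell_0}\tilde P_{\ell_2},[\sum_{\ell_1>\ell_0}\tilde P_{\ell_1},\tilde P_{\ell_0}]\bigr]$ together with the associated lower-order commutator term; carrying out the elementary $s$-integrations of the two remainder pieces produces the prefactors $1/12$ and $1/24$. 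Rather than redo this bookkeeping, I would cite the corresponding proposition of~\cite{trot_theory} and simply record that its commutator expression evaluates to the $f$ in the lemma.

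For the telescoping step, note that since $\tilde A$ is Hermitian the operator $V := e^{-\imi\tilde A\tau/r}$ is unitary, and $W := S(\tau/r)$ is a product of unitaries, hence also unitary. Then $e^{-\imi\tilde A\tau} - S_r(\tau) = V^{r} - W^{r} = \sum_{k=0}^{r-1} V^{\,r-1-k}(V-W)W^{k}$, and submultiplicativity of the operator norm with $\lVert V^{j}\rVert = \lVert W^{k}\rVert = 1$ gives $\epf \le r\,\lVert V-W\rVert \le r\,f\,(\tau/r)^3 = f\tau^3/r^2$ (with $|\tau|$ in place of $\tau$ when $\tau<0$), which is \cref{eq:trotter_error_ub}. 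The only nontrivial ingredient is the single-step estimate; I expect the main obstacle there to be matching the grouping of the commutators and the explicit constants $1/12$ and $1/24$ to the stated $f$, and this is handled entirely by the Trotter-error machinery of~\cite{trot_theory}. The telescoping and the $\tau/r$ rescaling are routine.
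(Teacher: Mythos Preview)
Your proposal is correct and matches the paper's approach essentially verbatim: the paper's proof simply cites Proposition~16 of~\cite{trot_theory} for the single-step commutator bound and then ``applies the triangle inequality $r$ times,'' which is exactly your telescoping argument $V^{r}-W^{r}=\sum_{k=0}^{r-1}V^{r-1-k}(V-W)W^{k}$ combined with unitarity.
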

\begin{proof}
     The proof of this lemma follows from applying the triangle inequality $r$ times to the result in Proposition 16 from \cite{trot_theory}.
\end{proof}

We can use \cref{lemma:trotter_number} to construct provably accurate \PF\ approximations to exponentials of the form $e^{-\imi\tilde{A}\tau}$. In particular, given an upper bound on $\epf$, which we call $\epf*$, choosing
\begin{align}
    \label{eq:trotter_r_lb}
    r \geq \sqrt{\frac{f\tau^3}{\epf^*}},
\end{align}
ensures that $\epf \leq \epf^*$. We point out that while this provides a valid choice for the Trotter number $r$,
empirical evidence shows that in many cases of interest this bound is pessimistic and an error $\epf$ can be reached at significantly smaller Trotter numbers than that implied by \cref{eq:trotter_r_lb} (for example see \cite{Childs:2018xxl,trot_theory,Kane:2025ybw}). However, tighter analytic bounds require knowledge of the specific matrix of interest, so we stick to~\cref{eq:trotter_r_lb} for our analysis. We also note that the gate complexity, in terms of the number of single-qubit Pauli rotations, of implementing a \PF\ formula scales linearly with the total number of terms in the \PF\ and hence with $r$. In particular, we require $2rL$-many single-qubit Pauli rotations for the $2^{\mathrm{nd}}$ order \PF\ $S_r(\tau)$. We summarize the \PF\ construction routine in \cref{alg:pf}.

\begin{algorithm}[htp!]
\caption{\label{alg:pf} Construct a $2^{\mathrm{nd}}$ order \PF\ approximation to $e^{-\imi\tilde{A}\tau}$}
\KwIn{\begin{itemize}
        \item[-] Pauli decomposition $\tilde{A} = \sum_{\ell \in [L]}\tilde{P}_{\ell}$
        \item[-] $\tau \in \mathbb{R}$
        \item[-] $\epf^* \in \mathbb{R}^+$
\end{itemize}}
\KwOut{Unitary $S_{r}(\tau)$ such that:
\begin{itemize}
    \item[-] the non-Clifford cost of controlling $S_{r}(\tau)$ is equal to that of controlling \\ $2rL$ Pauli rotations 
    \item[-] $\epf \leq \epf^*$
\end{itemize}}

$U \gets \identity$

$r \gets \sqrt{f\tau^3/\epf^*}$\tcp*{\cref{eq:trotter_r_lb}}

$x \gets \tau/r$

\ForEach{$\ell \in [L]$}{
$U \gets Ue^{-i\tilde{P}_{\ell}x/2}$
}
\ForEach{$\ell \in [L]$}{
$U \gets Ue^{-i\tilde{P}_{L-\ell}x/2}$
}

$S_{\rpf}(t_{j'k'}) \gets U^{\rpf}$

\KwRet $S_{r}(\tau)$
\end{algorithm}

Having constructed $S_r(\tau)$, we estimate the overlap $\bra{\phi} S_r(\tau) \ket{\psi}$ using the Hadamard test circuit shown in \cref{fig:alg_outline} with controlled unitary $U = U_{\phi}^{\dagger} S_r(\tau) U_{\psi}$. We perform the test twice --- once for the real part and once for the imaginary part of the overlap --- yielding single-shot outcomes $\hat X_\tau, \hat Y_\tau \in \{-1, +1\}$ that are unbiased estimators of these parts conditional on $\tau$:
\begin{align}
    \mathbb{E}[\hat X_\tau \mid \tau] &= \Re \bra{0} U_{\phi}^{\dagger} S_r(\tau) U_{\psi} \ket{0}, &
    \mathbb{E}[\hat Y_\tau \mid \tau] &= \Im \bra{0} U_{\phi}^{\dagger} S_r(\tau) U_{\psi} \ket{0}.
\end{align}
Combining these with the phase factor $\omega_\tau$ and normalization $N_y N_z / \lambda$ from \cref{eq:fourier_series_}, we form the single-sample estimator
\begin{align}
    \label{eq:PF_estimator}
    \hat Z := \omega_\tau \frac{N_y N_z}{\lambda} \left( \hat X_\tau + \imi \hat Y_\tau \right),
\end{align}
where the Fourier time $\tau$ is itself sampled from the joint distribution defined in \cref{eq:probs}. The randomness in $\hat Z$ thus comprises both the sampled Fourier time and the two Hadamard outcomes; the expectation $\mathbb{E}[\hat Z]$ is taken jointly over both. We define the bias
\begin{align}
    \label{eq:trotter_bias}
    B_{\rm PF} := \frac{\bra{\phi} I_D^{(\epsilon_D, \epsilon_T)}(\tilde A) \ket{\psi}}{\lambda} - \mathbb{E}[\hat Z],
\end{align}
which captures the systematic error introduced by replacing $e^{-\imi \tilde A \tau}$ with its second-order PF approximation $S_r(\tau)$ and vanishes as $r \to \infty$. We bound this bias as follows.

\begin{theorem}
    \label{thm:trotter_bias_bound}
    The norm of the bias term $\Bpf$ appearing in \cref{eq:trotter_bias} can be upper bounded as follows:
    \begin{align}
        \label{eq:trotter_bias_ub}
        |\Bpf| \leq \lambda^{-1} N_y N_z \frac{f t_{\mathrm{max}}^3}{r^2},
    \end{align}
    where $N_y$ and $N_z$ are given by \cref{lemma:normalization_bounds}, and $t_{\mathrm{max}}$ is given by \cref{eq:max_fourier_time}.
\end{theorem}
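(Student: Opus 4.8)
The plan is to compute the bias $B_{\mathrm{PF}}$ explicitly as a probability-weighted average of the per-sample Trotter errors and then bound it by the single worst-case term. First I would unfold the two layers of randomness in the estimator $\hat Z_\tau$ of \cref{eq:PF_estimator} by the law of total expectation: conditioning on the sampled Fourier time $\tau = t_{j'k'}$ (drawn with probability $p_y^{(j')}p_z^{(k')}$), the two Hadamard-test shots give $\mathbb{E}[\hat X_\tau + \imi\hat Y_\tau \mid \tau] = \bra{\phi}S_r(\tau)\ket{\psi}$, so that
\begin{align*}
\mathbb{E}[\hat Z_\tau] = \lambda^{-1} N_y N_z \sum_{j'\in[J]}\sum_{k'\in[K]} p_y^{(j')}p_z^{(k')}\,\omega(j',k')\,\bra{\phi}S_r(t_{j'k'})\ket{\psi}.
\end{align*}
On the other hand, \cref{eq:fourier_series_} expresses $\lambda^{-1}\bra{\phi}I_D^{(\epsilon_D,\epsilon_T)}(\tilde A)\ket{\psi}$ as the same probability-weighted, phase-decorated sum but with $S_r(t_{j'k'})$ replaced by $e^{-\imi\tilde A t_{j'k'}}$.

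Subtracting the two expressions according to the definition \cref{eq:trotter_bias} of the bias, the factors $\lambda^{-1}N_yN_z$, $p_y^{(j')}p_z^{(k')}$ and $\omega(j',k')$ match term by term, leaving
\begin{align*}
B_{\mathrm{PF}} = \lambda^{-1} N_y N_z \sum_{j'\in[J]}\sum_{k'\in[K]} p_y^{(j')}p_z^{(k')}\,\omega(j',k')\,\bra{\phi}\bigl(e^{-\imi\tilde A t_{j'k'}} - S_r(t_{j'k'})\bigr)\ket{\psi}.
\end{align*}
Next I apply the triangle inequality, use $|\omega(j',k')| = 1$ together with $|\bra{\phi}M\ket{\psi}| \le \lVert M\rVert$ for the unit vectors $\ket{\phi},\ket{\psi}$, and invoke \cref{lemma:trotter_number} in the form $\lVert e^{-\imi\tilde A\tau} - S_r(\tau)\rVert \le f\lvert\tau\rvert^3/r^2$ (valid for either sign of $\tau$, since $f$ is a sum of operator norms and the Trotter error depends only on $\lvert\tau\rvert$).

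The final step is a uniform bound on the sampled times: since $t_{j'k'} = y_{j'}z_{k'}$ with $y_{j'}\in[0,y_{\max}]$ and $\lvert z_{k'}\rvert \le z_{\max}$, we have $\lvert t_{j'k'}\rvert \le y_{\max}z_{\max} = t_{\max}$, the maximal Fourier time of \cref{eq:max_fourier_time}. Hence every term obeys $\lVert e^{-\imi\tilde A t_{j'k'}} - S_r(t_{j'k'})\rVert \le f t_{\max}^3/r^2$, and because $\sum_{j',k'}p_y^{(j')}p_z^{(k')} = 1$ the weighted sum collapses to $|B_{\mathrm{PF}}| \le \lambda^{-1}N_yN_z\,f t_{\max}^3/r^2$, which is the claimed bound; the explicit forms of $N_y$ and (the upper bound on) $N_z$ are then substituted from \cref{lemma:normalization_bounds} if desired. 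The argument is essentially bookkeeping rather than a genuine obstacle; the only points needing care are keeping the Fourier-time sampling and the Hadamard-test shots as separate sources of randomness, checking that the phases $\omega(j',k')$ in the estimator coincide with those in \cref{eq:fourier_series_} so that they cancel cleanly in the subtraction, and remembering that $t_{j'k'}$ may be negative, so one bounds by $\lvert t_{j'k'}\rvert^3 \le t_{\max}^3$ rather than $t_{j'k'}^3$.
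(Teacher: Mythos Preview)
Your proposal is correct and follows essentially the same approach as the paper: both write $B_{\mathrm{PF}}$ as the difference between $\lambda^{-1}\bra{\phi}I_D^{(\epsilon_D,\epsilon_T)}(\tilde A)\ket{\psi}$ and $\mathbb{E}[\hat Z_\tau]$, expand each as the same $N_yN_z\lambda^{-1}$-scaled, probability-weighted sum over Fourier indices (with $e^{-\imi\tilde A t_{jk}}$ versus $S_r(t_{jk})$), apply the triangle inequality, bound each inner product by the operator norm, and invoke \cref{lemma:trotter_number} with the uniform bound $|t_{jk}|\le t_{\max}$. Your write-up is in fact slightly more careful than the paper's in separating the two layers of randomness via the law of total expectation and in noting that $t_{jk}$ can be negative so one must use $|t_{jk}|^3$.
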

\begin{proof}
    The proof is given in Supplementary Information~\ref{app:thm_trotter_bias_bound}.
\end{proof}
The following theorem provides a way to choose the number of samples and controlled Pauli rotations required per sample when utilizing the \PF\ subroutine:
\begin{theorem}
    \label{thm:trotter_sample_complexity}
    Let $Z_{N_{\rm S}}$ be the random variable obtained by averaging $N_{\rm S}$ independent samples of the biased estimator $\hat Z$ defined in \cref{eq:PF_estimator}, and assume the bias defined in \cref{eq:trotter_bias} satisfies $|B_{\rm PF}| < \epsilon/2$ for some choice of Trotter number $r$ (\cref{thm:trotter_bias_bound}). If $\epsilon_D + \epsilon_T \leq \epsilon/2$, then $Z_{N_{\rm S}}$ solves \cref{problem_statement} provided
    \begin{align}
        \label{eq:trotter_complexities}
        N_{\rm S} = 1 + \ln\!\left(\frac{4}{\delta}\right) \frac{16 (N_y N_z)^2}{\lambda^2 (\epsilon/2 - |B_{\rm PF}|)^2}, \qquad
        N_{\rm CP} = 2 r L,
    \end{align}
    where $N_{\rm CP}$ provides an upper bound on the number of controlled Pauli rotations required per sample and $N_y, N_z$ are given by \cref{lemma:normalization_bounds}.
\end{theorem}
\begin{proof}
The proof is given in Supplementary Information~\ref{app:thm_trotter_sample_complexity}.
\end{proof}

We observe that there exists a trade-off between $N_{\rm S}$ and $N_{\rm CP}$. For instance, if we increase the value of the Trotter number $r$, we observe that this would increase $N_{\rm CP}$. However, from \cref{thm:trotter_bias_bound}, we see that this will also decrease the upper bound on $|B_{\rm PF}|$ and hence (in general) decrease the number of samples $N_{\rm S}$. Similarly, the effect of decreasing $r$ will be to increase $N_{\rm S}$. Thus, using a more accurate \PF\ decreases the overall sample complexity while increasing the non-Clifford gate complexity and vice versa. Using \cref{thm:trotter_bias_bound}, we observe that the condition $|B_{\rm PF}| < \epsilon/2$ implies a lower bound on $r$,
\begin{equation}
    \label{eq:trotter_number_lower_bound}
    r > \sqrt{\frac{2 N_y N_z f t_{\max}^3}{\lambda \epsilon}},
\end{equation}
which yields the following lower bound on $N_{\rm CP}$ using \cref{thm:trotter_sample_complexity},
\begin{equation}
    \label{eq:ncp_lower_bound}
    N_{\rm CP} > 2L\sqrt{\frac{2 N_y N_z f t_{\max}^3}{\lambda \epsilon}}.
\end{equation}

An alternative figure of merit for a Monte Carlo estimator is the root mean squared error (RMSE), which can be bounded directly via the second moment of the estimator. The following corollary translates \cref{thm:trotter_sample_complexity} into an RMSE-based statement, yielding tighter constants in terms of $\epsilon$ at the cost of a weaker mode of convergence.

\begin{corollary}
\label{cor:trotter_rmse}
Let $Z_{N_{\rm S}}$ be the average of $N_{\rm S}$ independent samples of the biased estimator $\hat Z$ defined in \cref{eq:PF_estimator}, and let
\begin{equation}
    \mathrm{RMSE}(Z_{N_{\rm S}}) := \sqrt{\,\mathbb{E}\!\left[\,\left| Z_{N_{\rm S}} - \frac{\bra{\phi} I_D^{(\epsilon_D,\epsilon_T)}(\tilde A) \ket{\psi}}{\lambda} \right|^2\,\right]\,}.
\end{equation}
Then
\begin{equation}
    \mathrm{RMSE}(Z_{N_{\rm S}}) \;\leq\; \sqrt{\,|B_{\rm PF}|^2 + \frac{2 N_y^2 N_z^2}{\lambda^2\, N_{\rm S}}\,}.
\end{equation}
In particular, assuming the bias satisfies $|B_{\rm PF}| < \epsilon/2$ (\cref{thm:trotter_bias_bound}), the choice
\begin{equation}
    N_{\rm S} \;\geq\; \frac{8\, N_y^2 N_z^2}{\lambda^2\, ((\epsilon/2)^2 - |B_{\rm PF}|^2)}
\end{equation}
ensures $\mathrm{RMSE}(Z_{N_{\rm S}}) \leq \epsilon/2$. Combined with $\epsilon_D + \epsilon_T \leq \epsilon/2$, this guarantees an RMSE of at most $\epsilon$ on the target $\bra{\phi} A^{-1} \ket{\psi}$. The number of controlled Pauli rotations per sample is $N_{\rm CP} = 2 r L$, as in \cref{thm:trotter_sample_complexity}.
\end{corollary}
\begin{proof}
    The proof is given in Supplementary Information~\ref{app:cor_trotter_rmse}
\end{proof}

We have stated \cref{cor:trotter_rmse} with the bias condition $|B_{\rm PF}| < \epsilon/2$ to match the hypothesis of \cref{thm:trotter_sample_complexity}, but this is one of many admissible choices. The corollary controls the MSE budget $(\epsilon/2)^2$, which can be allocated freely between the bias-squared term $|B_{\rm PF}|^2$ and the variance term $2 N_y^2 N_z^2/(\lambda^2 N_{\rm S})$, and any allocation summing to $(\epsilon/2)^2$ yields a valid set of conditions on $r$ and $N_{\rm S}$.

Different allocations correspond to different points along the trade-off between $N_{\rm S}$ and $N_{\rm CP}$. From \cref{thm:trotter_bias_bound}, $|B_{\rm PF}| \propto r^{-2}$, so allocating less of the budget to the bias requires a larger $r$ and thus a larger $N_{\rm CP} = 2rL$, while simultaneously freeing up budget for the variance and reducing $N_{\rm S}$. Minimizing the total complexity $N_{\rm S} \cdot N_{\rm CP}$ over this trade-off identifies an optimal allocation of $|B_{\rm PF}|^2 = (\epsilon/2)^2/5$ for the bias and $4(\epsilon/2)^2/5$ for the variance.

\subsection{Error bounds for the random Taylor expansion subroutine}
\label{sec:resource_req_rte}

We now describe the second method used to perform controlled time evolution, i.e. the \RTE\ approach. In particular, we consider the sampling based approach presented in \cite{Wan:2021non} to construct an estimator for the exponential $e^{-\imi \tilde{A}\tau}$. We use a Taylor series expansion to approximately represent the exponential as a Linear Combination of Unitaries (LCU) and sample individual unitaries according to the coefficients of the LCU. The controlled version of each such unitary can then be used in the Hadamard test circuit step shown in \cref{fig:alg_outline} to estimate the overlap $\braket{\phi|\tilde A^{-1}|\psi}$. This approach was also used in~\cite{Wang:2023els} for randomized quantum linear algebra algorithms.

Similar to the \PF\ method (see~\cref{eq:pf_split}), we divide the Fourier time $\tau$ into $r$ segments and write
\begin{align}
    e^{-\imi \tilde{A}\tau} = \left(e^{-\imi \tilde{A}\tau/r}\right)^r.
\end{align}
As shown in Supplementary Information~\ref{app_rte} and~\cite{Wan:2021non}, we can Taylor expand each exponential $e^{-\imi \tilde{A}\tau/r}$, pair consecutive terms in the expansion to absorb the imaginary unit into a Pauli rotation, and truncate the resulting infinite series to obtain the approximation
\begin{align}
    \label{eq:rte_finite}
    e^{-\imi \tilde{A}\tau/r}
    & \approx \sum_{\substack{n=0\\n \text{ even}}}^{\nmax} \frac{1}{n!} \left(\frac{\imi\tau}{r}\right)^n \sqrt{1 + \left(\frac{\tau / r}{n+1}\right)^2} \left(\sum_{\ell \in [L]} \frac{c_{\ell}}{\lambda}P_{\ell}\right)^{n}\sum_{\ell \in [L]} \frac{c_{\ell}}{\lambda}\, e^{-\imi \theta_n P_{\ell}},
\end{align}
where $\nmax \in \{0, 2, 4, \dots\}$ is an even truncation parameter and $\theta_n$ is given by
\begin{align}
    \theta_n =  \cos^{-1}\!\left(\left(1 + \left(\frac{\tau / r}{n+1}\right)^2\right)^{-1/2}\right).
\end{align}
Since $n$ is even, the factor $(\imi\tau/r)^n$ is real, so that each summand in \cref{eq:rte_finite} has the structure of a (signed) scalar times a product of Paulis times a single-qubit Pauli rotation $e^{-\imi\theta_n P_\ell}$.

We can therefore rewrite \cref{eq:rte_finite} as a finite linear combination of unitaries (LCU),
\begin{align}
    \label{eq:rte_r}
    e^{-\imi \tilde{A}\tau/r} \approx \sum_{m \in \mathcal{M}'}d_{m\tau}'U_{m\tau}',
\end{align}
where $\mathcal{M}'$ is an appropriately defined index set and each coefficient $d_{m\tau}' \in \mathbb{R}^+$ is obtained by absorbing any sign or phase into the corresponding unitary $U_{m\tau}' \sim P_{\ell_1}P_{\ell_2}\cdots P_{\ell_n}\, e^{-\imi \theta_n P_{\ell_{n+1}}}$, with $\ell_1, \ell_2, \dots, \ell_{n+1} \in [L]$ and $n \in \{0, 2, \dots, \nmax\}$ (so for $n=0$ the leading product of Paulis is empty and only the rotation $e^{-\imi\theta_0 P_{\ell_1}}$ remains). Each unitary $U_{m\tau}'$ thus consists of at most $\nmax$ Clifford operations together with a single multi-qubit Pauli rotation, the latter of which can be efficiently synthesized into a sequence of Clifford operations (CNOTs) and one single-qubit Pauli rotation via the standard CNOT ladder construction~\cite{Nielsen_Chuang_2010}. It follows that the non-Clifford cost of a controlled-$U_{m\tau}'$ is equivalent to that of controlling one single-qubit Pauli rotation.

The LCU approximation for $e^{-\imi\tilde{A}\tau}$ is then obtained by raising \cref{eq:rte_r} to the $r^{\text{th}}$ power,
\begin{equation}
    \label{eq:rte_lcu}
    e^{-\imi\tilde{A}\tau} \approx \left(\sum_{m \in \mathcal{M}'}d_{m\tau}'U_{m\tau}'\right)^r =: \sum_{m\in\mathcal{M}}d_{m\tau}U_{m\tau},
\end{equation}
for the appropriately defined index set $\mathcal{M} \cong \left(\mathcal{M}'\right)^r$. A controlled-$U_{m\tau}$ thus has non-Clifford cost equivalent to controlling $r$ single-qubit Pauli rotations.

We now consider the problem of sampling individual unitaries of the form $U_{m\tau}$ from the LCU in \cref{eq:rte_lcu}. We begin by rewriting \cref{eq:rte_r} as
\begin{align}
    \label{eq:lcu_prob}
    e^{-\imi \tilde{A}\tau/r} &\approx \alpha(\tau) \sum_{m \in \mathcal{M}'}\frac{d_{m\tau}'}{\alpha(\tau)}U_{m\tau}'
    =: \alpha(\tau) \sum_{m \in \mathcal{M}'}p_{m\tau}'U_{m\tau}',
\end{align}
where $\alpha(\tau) := \sum_{m \in \mathcal{M}'}d_{m\tau}'$ is the one-norm of the per-segment LCU coefficients and $\{p_{m\tau}'\}_{m\in\mathcal{M}'}$ defines a probability distribution over $\mathcal{M}'$. To sample a unitary $U_{m\tau}$ from \cref{eq:rte_lcu} such that $\mathbb{E}[U_{m\tau}] = e^{-\imi \tilde{A}\tau}$, we draw $r$ unitaries $U_{m\tau}'$ independently according to the probabilities $p_{m\tau}'$ and form their product. The overall sampling normalization constant for this procedure, denoted $\Nrte$, is
\begin{align}
    \label{eq:nrte_ub_0}
    \Nrte = \alpha(\tau)^r \leq e^{\tau^2/r},
\end{align}
where the upper bound is established in Lemma~2 of~\cite{Wan:2021non}. Algorithm~2 of~\cite{Wan:2021non} provides an efficient classical procedure for sampling and returning a description of $r$ such unitaries $U_{m\tau}'$ according to the probabilities $\{p_{m\tau}'\}$ without having to construct the full set of probabilities $\{p_{m\tau}'\}_{m \in \mathcal{M}'}$ explicitly.

Up to this point, all computations are performed classically. As in the product formula case, we construct an estimator for $\bra{\phi}U_{m\tau}\ket{\psi}$ using the Hadamard test circuit applied to $U_{\phi}^{\dagger}U_{m\tau}U_{\psi}$. We perform the test twice---once for the real part and once for the imaginary part---yielding single-shot outcomes $\hat X_{\tau}, \hat Y_{\tau} \in \{-1, +1\}$ that are unbiased estimators of these parts conditional on $\tau$ and the sampled unitary $U_{m\tau}$:
\begin{align}
    \label{eq:rte_estimators}
    \mathbb{E}[\hat X_{\tau} \mid \tau, U_{m\tau}] &= \Re \bra{0}U_{\phi}^{\dagger}U_{m\tau}U_{\psi}\ket{0}, &
    \mathbb{E}[\hat Y_{\tau} \mid \tau, U_{m\tau}] &= \Im \bra{0}U_{\phi}^{\dagger}U_{m\tau}U_{\psi}\ket{0}.
\end{align}
Combining these with the phase factor $\omega_{\tau}$, the sampling normalization $\Nrte$ from \cref{eq:nrte_ub_0}, and the prefactor $N_y N_z / \lambda$ from \cref{eq:fourier_series_}, we form the single-sample estimator
\begin{align}
    \label{eq:biased_estimator}
    \hat Z := \omega_{\tau} \frac{\Nrte\, N_y N_z}{\lambda} \left(\hat X_{\tau} + \imi\, \hat Y_{\tau}\right),
\end{align}
where the randomness in $\hat Z$ comprises the sampled Fourier time $\tau$, the sampled unitary $U_{m\tau}$, and the two Hadamard outcomes; the expectation $\mathbb{E}[\hat Z]$ is taken jointly over all three. We define the bias
\begin{align}
    \label{eq:bias}
    B_{\rm RTE} := \frac{\bra{\phi} I_D^{(\epsilon_D, \epsilon_T)}(\tilde A) \ket{\psi}}{\lambda} - \mathbb{E}[\hat Z],
\end{align}
which captures the systematic error introduced by truncating the Taylor expansion of the time evolution operator at order $\nmax$ in \cref{eq:rte_finite}. We bound this bias as follows.

\begin{theorem}
    \label{thm:bias_bound}
    For any $r \geq t_{\max}$ and any even truncation parameter $\nmax \in \{0, 2, 4, \dots\}$, the bias defined in \cref{eq:bias} satisfies
    \begin{align}
    \label{eq:simple_rc_bias}
    |B_{\rm RTE}| \leq \frac{\sqrt{2}\, r\, N_y N_z}{\lambda}\, e^{t_{\max}^2/r} \left(\frac{e\, t_{\max}}{r\, (\nmax+1)}\right)^{\nmax+1},
    \end{align}
    where $N_y$ and $N_z$ are given by \cref{lemma:normalization_bounds}, and $t_{\max}$ is given by \cref{eq:max_fourier_time}.
\end{theorem}
\begin{proof}
    The proof is given in Supplementary Information~\ref{app:thm_bias_bound}.
\end{proof}
We observe that for fixed $\tau, r, N_y, N_z$, and $\lambda$ the upper bound on $\Brte$ falls off super exponentially as a function of $\nmax$. We now present efficiently computable choices for the sample and non-Clifford gate complexities required by the \RTE\ approach given some error budget:
\begin{theorem}
    \label{thm:random_sample_complexity}
    Let $Z_{N_{\rm S}}$ be the random variable obtained by averaging $N_{\rm S}$ independent samples of the biased estimator $\hat Z$ defined in \cref{eq:biased_estimator}, and assume the bias defined in \cref{eq:bias} satisfies $|B_{\rm RTE}| < \epsilon/2$ (\cref{thm:bias_bound}). If $\epsilon_D + \epsilon_T \leq \epsilon/2$, then $Z_{N_{\rm S}}$ solves \cref{problem_statement} provided
    \begin{align}
        \label{eq:rc_complexities}
        N_{\rm S} = 1 + \ln\!\left(\frac{4}{\delta}\right) \frac{16\, e^{2 t_{\max}^2/r}\, N_y^2 N_z^2}{\lambda^2 (\epsilon/2 - |B_{\rm RTE}|)^2}, \qquad
        N_{\rm CP} = r,
    \end{align}
    where $N_{\rm CP}$ provides an upper bound on the number of controlled Pauli rotations required per sample and $N_y, N_z$ are given by \cref{lemma:normalization_bounds}.
\end{theorem}
\begin{proof}
The proof is given in Supplementary Information~\ref{app:random_sample_complexity}.
\end{proof}

If we choose $r \geq t_{\max}$, the condition $|B_{\rm RTE}| < \epsilon/2$ can be satisfied by setting the upper bound from \cref{eq:simple_rc_bias} below $\epsilon/2$, namely
\begin{align}
    \label{eq:rte_bias_condition}
    \frac{\sqrt{2}\, r\, N_y N_z}{\lambda}\, e^{t_{\max}^2 / r}\!\left(\frac{e\, t_{\max}}{r\, (\nmax+1)}\right)^{\nmax+1} < \frac{\epsilon}{2},
\end{align}
from which an appropriate value of $\nmax$ can be determined.

A salient feature of \cref{thm:random_sample_complexity} is that $N_{\rm S}$ scales exponentially in the ratio $t_{\max}^2 / r$. Choosing $r = t_{\max}$ thus implies the need for exponentially many samples in $t_{\max}$ while keeping the per-sample gate depth constant. Choosing $r \propto t_{\max}^2$ instead, together with sufficiently large $\nmax$ via \cref{thm:bias_bound}, essentially strips $N_{\rm S}$ of its dependence on $t_{\max}^2$ at the cost of giving $N_{\rm CP}$ a quadratic dependence on $t_{\max}$. There is therefore an explicit trade-off between the sample complexity and the per-sample gate complexity which can be navigated by selecting $r$ adaptively based on each sampled Fourier time rather than fixing it as a function of $t_{\max}$. As noted in Appendix~D of~\cite{Wan:2021non}, one can in principle compute the elements $\{r_{j,k}\}_{(j,k) \in [J] \times [K]}$, assigning a unique value of $r$ to each Fourier time, that minimize either the total complexity $N_{\rm S}\, N_{\rm CP}$ or one of $N_{\rm S}, N_{\rm CP}$ subject to a constraint on the other, by solving a one-dimensional optimization problem. Conversely, choosing $r = \mathcal{O}(1)$ independent of $t_{\max}$ renders the
task of generating $N_{\rm S}$ samples computationally intractable for many
problems of practical interest, due to the exponential dependence on $t_{\max}^2$;
this exponential is the only one appearing anywhere in our analysis, and the
choice $r \propto t_{\max}^2$ suppresses it to a constant, after
which all costs are polynomial in $\kappa$.

\Cref{thm:random_sample_complexity} guarantees $|Z_{N_{\rm S}} - \lambda^{-1}\bra{\phi} I_D^{(\epsilon_D,\epsilon_T)}(\tilde A) \ket{\psi}| \leq \epsilon/2$ with probability at least $1 - \delta$, at the cost of a logarithmic overhead in $\delta$ inherited from Hoeffding's inequality. The following corollary translates \cref{thm:random_sample_complexity} into an RMSE-based statement, paralleling \cref{cor:trotter_rmse} for the PF subroutine.

\begin{corollary}
\label{cor:rte_rmse}
Let $Z_{N_{\rm S}}$ be the average of $N_{\rm S}$ independent samples of the biased estimator $\hat Z$ defined in \cref{eq:biased_estimator}, and let
\begin{equation}
    \mathrm{RMSE}(Z_{N_{\rm S}}) := \sqrt{\,\mathbb{E}\!\left[\,\left| Z_{N_{\rm S}} - \frac{\bra{\phi} I_D^{(\epsilon_D,\epsilon_T)}(\tilde A) \ket{\psi}}{\lambda} \right|^2\,\right]\,}.
\end{equation}
Then
\begin{equation}
    \mathrm{RMSE}(Z_{N_{\rm S}}) \;\leq\; \sqrt{\,|B_{\rm RTE}|^2 + \frac{2\, e^{2 t_{\max}^2/r}\, N_y^2 N_z^2}{\lambda^2\, N_{\rm S}}\,}.
\end{equation}
In particular, assuming the bias satisfies $|B_{\rm RTE}| < \epsilon/2$ (\cref{thm:bias_bound}), the choice
\begin{equation}
    N_{\rm S} \;\geq\; \frac{8\, e^{2 t_{\max}^2/r}\, N_y^2 N_z^2}{\lambda^2\, \big((\epsilon/2)^2 - |B_{\rm RTE}|^2\big)}
\end{equation}
ensures $\mathrm{RMSE}(Z_{N_{\rm S}}) \leq \epsilon/2$. Combined with $\epsilon_D + \epsilon_T \leq \epsilon/2$, this guarantees an RMSE of at most $\epsilon$ on the target $\bra{\phi} A^{-1} \ket{\psi}$. The number of controlled Pauli rotations per sample is $N_{\rm CP} = r$, as in \cref{thm:random_sample_complexity}.
\end{corollary}
\begin{proof}
The proof is given in Supplementary Information~\ref{app:cor_rte_rmse}.
\end{proof}

As with \cref{cor:trotter_rmse}, we have stated \cref{cor:rte_rmse} with the bias condition $|B_{\rm RTE}| < \epsilon/2$ to match the hypothesis of \cref{thm:random_sample_complexity}, but the MSE budget $(\epsilon/2)^2$ can be allocated freely between the bias-squared term $|B_{\rm RTE}|^2$ and the variance term $2 e^{2 t_{\max}^2/r} N_y^2 N_z^2/(\lambda^2 N_{\rm S})$.

Unlike the PF case, however, allocating less of the budget to the bias does not increase the gate complexity per sample: \cref{thm:bias_bound} shows that $|B_{\rm RTE}|$ can be made arbitrarily small by increasing the truncation parameter $\nmax$, which does not enter $N_{\rm CP}$ but only the classical sampling cost of \cref{eq:rte_lcu}. The optimal allocation for $N_{\rm CP}$-aware total complexity is therefore $|B_{\rm RTE}| \to 0$, with the entire MSE budget devoted to the variance term.

\subsection{Numerical results}
\label{sec:numerics}

We first compare the theoretical bounds on the Fourier series approximation derived in \cref{thm:truncation_parameters,thm:fourier_parameters} with empirical results for randomly generated Hermitian matrices of prescribed condition number. We construct each test matrix as $A := UDU^{\dagger}$, where $D$ is a diagonal matrix with two entries pinned to $1$ and $1/\kappa$ and any remaining entries drawn uniformly from $[-1, -1/\kappa] \cup [1/\kappa, 1]$. The pinned entries ensure $\lVert A\rVert = 1$ and $\lVert A^{-1}\rVert = \kappa$ exactly, so that $\kappa(A) = \kappa$, while the remaining entries make the construction non-trivial. The unitary $U$ is sampled from the Haar measure on $U(N)$. Since the bounds in \cref{thm:truncation_parameters,thm:fourier_parameters} have no explicit dependence on the dimension $N$, we test their dimensional independence by repeating the experiment at two different sizes, $N = 4$ and $N = 16$. We generate $100$ random matrices at each size for each pair
$$(\kappa, \epsilon_F) \in \lbrace 10, 10^2, 10^3 \rbrace \times \lbrace 10^{-2}, 10^{-3}, 10^{-4} \rbrace,$$
and for each matrix compute the Pauli weight $\lambda$ and the rescaled condition number $\tilde{\kappa}^* = \lambda \kappa$. Across the $900$ trials at $N = 4$ we find $\lambda \in [1.43, 3.08]$ and hence $\tilde\kappa^* \in [14.3, 3.1 \times 10^3]$; across the $900$ trials at $N = 16$ we find $\lambda \in [4.89, 9.91]$ and hence $\tilde\kappa^* \in [54, 9.3 \times 10^3]$.

We set $\epsilon_T = \epsilon_D = \epsilon_F/2$ and apply \cref{alg:fourier_series_construction} to construct the Fourier series approximation
\begin{align}
    A^{-1} \approx F(A) := \lambda^{-1} I_D^{(\epsilon_F/2, \epsilon_F/2)}(\tilde{A}),
\end{align}
where $\tilde A = \lambda^{-1} A$. By \cref{alg:fourier_series_construction}, the spectral-norm error satisfies
\begin{align}
    \label{eq:numerics_fourier_error}
    \lambda \cdot \big\|A^{-1} - \lambda^{-1} I_D^{(\epsilon_F/2,\epsilon_F/2)}(\tilde{A})\big\| \leq \epsilon_F.
\end{align}
For each trial we evaluate the left-hand side of \cref{eq:numerics_fourier_error} directly. Inequality \cref{eq:numerics_fourier_error} is satisfied for every one of the $1800$ trial matrices across both dimensions. \cref{table:num_fourier_error_bounds_4x4,table:num_fourier_error_bounds_16x16} report the Fourier parameters $(J, K)$ and the minimum, mean, and maximum empirical errors per cell at $N = 4$ and $N = 16$, respectively. The reported $(J, K)$ are those produced by \cref{thm:fourier_parameters} at the largest $\tilde\kappa^*$ realized in the cell, so that the same $(J, K)$ suffices for every matrix in the cell.

Three observations follow. First, the empirical errors are tightly concentrated within each cell: the spread between the minimum and maximum across $100$ trials never exceeds $11\%$ of the mean at either dimension, indicating that the worst-case error in each cell is statistically meaningful and not driven by outlier matrices. Second, the empirical maximum is consistently between $17\%$ and $21\%$ of the bound $\epsilon_F$ across all cells and both dimensions, with average $19\%$, demonstrating that the bounds in \cref{thm:truncation_parameters,thm:fourier_parameters} are valid. Third, the ratio $\epsilon_{\max}/\epsilon_F$ shows no systematic trend with $N$: the bound is satisfied at the same fractional margin at $N = 16$ as at $N = 4$, confirming the dimensional independence predicted by the theory. As expected, however, the number of Fourier terms $JK$ required to reach a given $\epsilon_F$ at fixed $\kappa$ is larger at $N = 16$ than at $N = 4$, reflecting the larger typical $\tilde\kappa^* = \lambda \kappa$ at the higher dimension.

\begin{table}[h!tbp]
    \centering
    \begin{tabular}{>{\columncolor{gray!20}}cc|ccccc}
        \toprule
        \rowcolor{gray!40} \textbf{$\kappa$} & \textbf{$\epsilon_F$} & \textbf{$J$} & \textbf{$K$} & \textbf{$\epsilon_{\min}$} & \textbf{$\epsilon_{\rm mean}$} & \textbf{$\epsilon_{\max}$} \\
        \midrule
            10 & \multirow{3}{*}{$10^{-2}$} &   158 &   175 & $1.73 \times 10^{-3}$ & $1.80 \times 10^{-3}$ & $1.90 \times 10^{-3}$ \\
        $10^2$ &                            &  1703 &  2100 & $1.75 \times 10^{-3}$ & $1.82 \times 10^{-3}$ & $1.93 \times 10^{-3}$ \\
        $10^3$ &                            & 20997 & 26177 & $1.97 \times 10^{-3}$ & $2.06 \times 10^{-3}$ & $2.09 \times 10^{-3}$ \\
        \midrule
            10 & \multirow{3}{*}{$10^{-3}$} &   212 &   239 & $1.73 \times 10^{-4}$ & $1.79 \times 10^{-4}$ & $1.86 \times 10^{-4}$ \\
        $10^2$ &                            &  2068 &  2552 & $1.97 \times 10^{-4}$ & $2.05 \times 10^{-4}$ & $2.08 \times 10^{-4}$ \\
        $10^3$ &                            & 26338 & 32840 & $1.86 \times 10^{-4}$ & $1.95 \times 10^{-4}$ & $2.04 \times 10^{-4}$ \\
        \midrule
            10 & \multirow{3}{*}{$10^{-4}$} &   239 &   270 & $1.92 \times 10^{-5}$ & $1.99 \times 10^{-5}$ & $2.01 \times 10^{-5}$ \\
        $10^2$ &                            &  2319 &  2862 & $1.85 \times 10^{-5}$ & $1.94 \times 10^{-5}$ & $2.02 \times 10^{-5}$ \\
        $10^3$ &                            & 24983 & 31146 & $1.72 \times 10^{-5}$ & $1.73 \times 10^{-5}$ & $1.77 \times 10^{-5}$ \\
        \bottomrule
    \end{tabular}
    \caption{Numerical verification of the bounds on the Fourier series approximation to the matrix inverse derived in \cref{thm:truncation_parameters,thm:fourier_parameters} for $4 \times 4$ Hermitian matrices with condition number $\kappa$ and approximation tolerance $\epsilon_F$. For each cell, $100$ random matrices are generated and approximated using \cref{alg:fourier_series_construction} with $\epsilon_T = \epsilon_D = \epsilon_F/2$. The table reports the Fourier parameters $(J, K)$ corresponding to the largest rescaled condition number $\tilde\kappa^* = \lambda \kappa$ realized in the cell, together with the minimum, mean, and maximum of the empirical spectral-norm error $\lambda \|A^{-1} - \lambda^{-1} I_D^{(\epsilon_F/2,\epsilon_F/2)}(\tilde A)\|$ across the $100$ trials.}
    \label{table:num_fourier_error_bounds_4x4}
\end{table}

\begin{table}[h!tbp]
    \centering
    \begin{tabular}{>{\columncolor{gray!20}}cc|ccccc}
        \toprule
        \rowcolor{gray!40} \textbf{$\kappa$} & \textbf{$\epsilon_F$} & \textbf{$J$} & \textbf{$K$} & \textbf{$\epsilon_{\min}$} & \textbf{$\epsilon_{\rm mean}$} & \textbf{$\epsilon_{\max}$} \\
        \midrule
            10 & \multirow{3}{*}{$10^{-2}$} &   559 &    674 & $2.08 \times 10^{-3}$ & $2.11 \times 10^{-3}$ & $2.12 \times 10^{-3}$ \\
        $10^2$ &                            &  6068 &   7546 & $1.83 \times 10^{-3}$ & $1.93 \times 10^{-3}$ & $2.00 \times 10^{-3}$ \\
        $10^3$ &                            & 72979 &  91053 & $1.73 \times 10^{-3}$ & $1.73 \times 10^{-3}$ & $1.78 \times 10^{-3}$ \\
        \midrule
            10 & \multirow{3}{*}{$10^{-3}$} &   647 &    781 & $1.87 \times 10^{-4}$ & $1.94 \times 10^{-4}$ & $2.00 \times 10^{-4}$ \\
        $10^2$ &                            &  7423 &   9234 & $1.73 \times 10^{-4}$ & $1.73 \times 10^{-4}$ & $1.79 \times 10^{-4}$ \\
        $10^3$ &                            & 84522 & 105458 & $1.83 \times 10^{-4}$ & $1.91 \times 10^{-4}$ & $2.01 \times 10^{-4}$ \\
        \midrule
            10 & \multirow{3}{*}{$10^{-4}$} &   821 &    994 & $1.73 \times 10^{-5}$ & $1.73 \times 10^{-5}$ & $1.75 \times 10^{-5}$ \\
        $10^2$ &                            &  8388 &  10435 & $1.84 \times 10^{-5}$ & $1.90 \times 10^{-5}$ & $2.01 \times 10^{-5}$ \\
        $10^3$ &                            & 93124 & 116190 & $1.95 \times 10^{-5}$ & $2.01 \times 10^{-5}$ & $2.05 \times 10^{-5}$ \\
        \bottomrule
    \end{tabular}
    \caption{Numerical verification of the bounds on the Fourier series approximation to the matrix inverse derived in \cref{thm:truncation_parameters,thm:fourier_parameters} for $16 \times 16$ Hermitian matrices, with all parameters and conventions as in \cref{table:num_fourier_error_bounds_4x4}. The Fourier parameters $(J, K)$ are larger than the corresponding entries at $N = 4$ due to the larger Pauli weight typical of higher-dimensional matrices, but the empirical errors satisfy the same bound $\epsilon_F$ at the same fractional margin.}
    \label{table:num_fourier_error_bounds_16x16}
\end{table}

We now turn to the numerical verification of the RMSE bound of
\cref{cor:trotter_rmse} for the second-order PF estimator $\hat Z$ defined
in \cref{eq:PF_estimator}, on small-scale problem instances drawn from the
same family used in the verification of the Fourier series error bounds earlier in this section. We restrict to
$N = 4$, fix $\kappa = 100$ and $\epsilon_F = 2 \times 10^{-3}$, and
generate $10$ independent random matrices. We take
$\ket{\psi} = \ket{\phi} = \ket{0}$, so the target scalar is the
diagonal matrix element $\langle 0 | A^{-1} | 0 \rangle$ and the
state-preparation unitaries $U_\psi, U_\phi$ in \cref{eq:PF_estimator}
reduce to identity, isolating the empirical convergence behaviour from
the cost of state preparation. For each instance we set
$\epsilon_T = \epsilon_D = \epsilon_F/2$ and apply
\cref{alg:fourier_series_construction} to obtain the Fourier parameters
$(N_y, N_z, J, K)$ and the joint sampling distribution $p_y \otimes p_z$
over Fourier nodes from \cref{eq:probs}. Across the $10$ instances we find
$\lambda \in [1.52,\, 2.77]$, hence $\tilde\kappa^* = \lambda \kappa \in
[152,\, 277]$, and $t_{\max} \in [3970,\, 7540]$.

In our numerical simulations we make the practical choice of selecting the
Trotter number adaptively per sample, taking
\begin{align}
    \label{eq:adaptive_pf_rule}
    r(\tau) := \max\!\left(1,\, \left\lceil c\, |\tau|^{3/2} \right\rceil\right),
    \qquad
    c := \sqrt{\frac{2 N_y N_z f}{\lambda \epsilon_F}},
\end{align}
where $f$ is the second-order Trotter constant from \cref{lemma:trotter_number}.
This rule sets $r(t_{\max})$ equal to the constant lower bound in
\cref{eq:trotter_number_lower_bound} and uniformizes the per-sample Trotter
error, $f \tau^3 / r(\tau)^2 \leq f / c^2$, across the Fourier distribution.
The bias bound from \cref{thm:trotter_bias_bound} is therefore satisfied at
the same level $|B_{\rm PF}| \leq \epsilon_F/2$ as under the constant choice,
with smaller average gate count per sample.

For each instance we precompute the $J \times K$ grid of
expectation values $\langle 0 | S_{r(t_{jk})}(t_{jk}) | 0 \rangle$, then
simulate $N_S$ samples of $\hat Z$ by drawing $(j, k)$ from
$p_y \otimes p_z$ and adding independent unit-variance Gaussian noise to
each grid value as a conservative model of the single-shot Hadamard
variance, which is at most unity per shot. We sweep $N_S$ on
a logarithmic grid from $10^2$ to $10^8$ and compute the empirical RMSE of
$\Re(\hat Z_{N_S})$ against the target $\langle 0 | A^{-1} | 0 \rangle$
from $20$ independent Monte Carlo runs at each $N_S$.

The dashed curve in \cref{fig:pf_estimator_convergence} reports the
analytical upper bound on the RMSE of $\hat Z_{N_S}$ relative to the true
target $\langle 0|A^{-1}|0\rangle$ that the analysis of
\cref{cor:trotter_rmse} guarantees. We combine the worst-case Trotter bias
allocation $|B_{\rm PF}| \leq \epsilon_F/2$ from
\cref{thm:trotter_bias_bound}, the closed-form variance bound from \cref{cor:trotter_rmse},
and the Fourier approximation error $\|F(A) - A^{-1}\| \leq \epsilon_F$ from
\cref{alg:fourier_series_construction} via the triangle inequality, to compute the per-instance bound. We plot the maximum of this expression over the $10$
instances so that a single curve uniformly upper-bounds all $10$ empirical
trajectories. In particular, we plot
\begin{align}
    \label{eq:max_analytical_bound_pf}
    \mathrm{Max}\hspace{1mm}\mathrm{analytical}\hspace{1mm}\mathrm{bound}\hspace{1mm}\mathrm{across}\hspace{1mm}\mathrm{instances}\hspace{1mm} = \max_{i\in[10]}\left(\sqrt{(\epsilon_F/2)^2 + 2 (N_y^{(i)} N_z^{(i)})^2 / ((\lambda^{(i)})^2 N_S)}\right)
+ \epsilon_F.
\end{align}

\begin{figure}[h]
    \centering
    \includegraphics[width=0.8\linewidth]{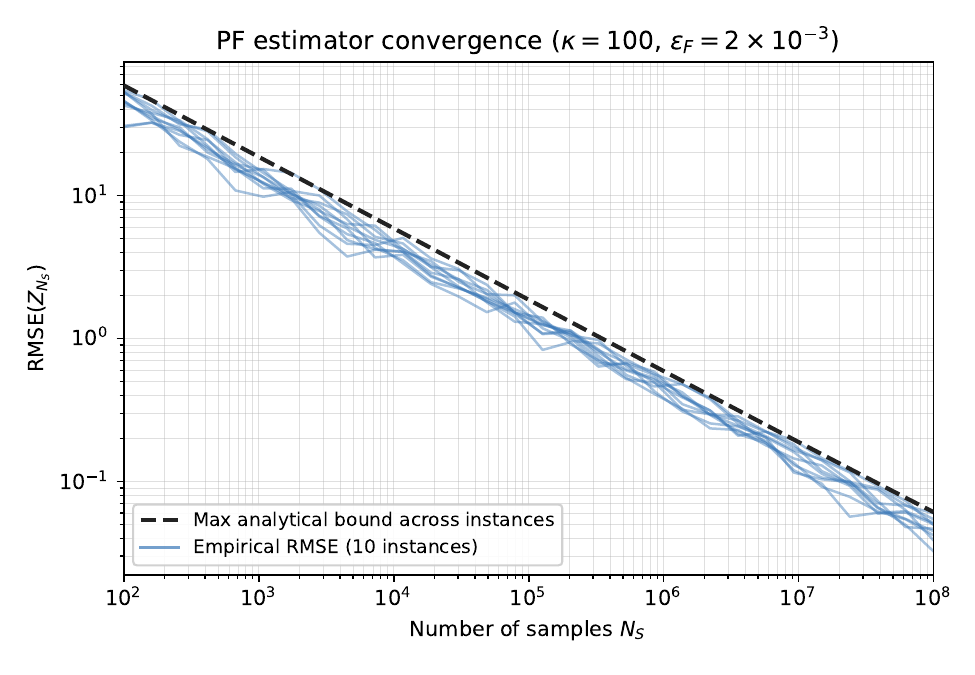}
    \caption{Empirical convergence of the PF estimator $\hat Z$ defined in
    \cref{eq:PF_estimator}, on $10$ random Hermitian matrices of dimension
    $N = 4$ with condition number $\kappa = 100$ and Fourier error
    $\epsilon_F = 2 \times 10^{-3}$, using the adaptive Trotter rule
    \cref{eq:adaptive_pf_rule}. Each thin curve shows the empirical RMSE
    of $\Re(\hat Z_{N_S})$ against the target
    $\langle 0 | A^{-1} | 0 \rangle$, computed from $20$ independent Monte
    Carlo runs at each $N_S$. The dashed curve is the maximum across the
    $10$ instances of the analytical RMSE upper bound combining
    \cref{cor:trotter_rmse} and the Fourier approximation error of
    \cref{alg:fourier_series_construction}, and provides a uniform upper
    bound on the empirical RMSE relative to the true target
    $\langle 0|A^{-1}|0\rangle$.}
    \label{fig:pf_estimator_convergence}
\end{figure}

\Cref{fig:pf_estimator_convergence} reports the $10$ empirical RMSE curves
together with the analytical RMSE bound described above. Across all $10$
instances and all $N_S$ on the grid, the empirical RMSE tracks the
analytical bound: the curves descend at the variance-controlled rate
$\propto 1/\sqrt{N_S}$, cluster tightly across instances, and lie below
the bound at the overwhelming majority of datapoints, with less than $1\%$ of the measured points exceeding the bound. These rare crossings are consistent with the Monte-Carlo
fluctuation of the empirical RMSE estimator itself, which is computed
from only $20$ independent runs at each $N_S$, along with our conservative choice for the variance of the single-shot Hadamard test. The mean empirical
RMSE-to-bound ratio of $0.75$ across all datapoints reflects three
sources of slack in the analytical bound: the worst-case Trotter bias
allocation $|B_{\rm PF}| \leq \epsilon_F/2$ over-bounds the actual
deterministic bias of the Fourier-summed estimator; the input Fourier
error $\epsilon_F$ over-bounds the actual approximation error
$\|F(A) - A^{-1}\|$ (as shown earlier in this section); and the triangle inequality
that combines the deterministic-bias and stochastic-variance
contributions is loose when the two contributions partially cancel
rather than add. \Cref{cor:trotter_rmse} is therefore empirically
validated for the PF subroutine, with the analytical bound holding
uniformly across the $10$ instances and remaining within a factor of
order unity of the empirical RMSE throughout the $N_S$ sweep.

We now turn to the Fourier-time distribution shown in \cref{fig:fourier_time_distribution}: across the $10$ instances,
the median sampled time satisfies $|\tau| \in [384,\, 715]$, the
$99$th-percentile time satisfies $|\tau| \in [1.8 \times 10^3,\,
3.3 \times 10^3]$, and $t_{\max}$ ranges up to $\sim 7.5 \times 10^3$.
This distribution governs the per-sample gate cost of \emph{both}
estimators on a quantum device, since each draws a Fourier time from
$p_y \otimes p_z$ and constructs a controlled time evolution at that time.

\begin{figure}[h]
    \centering
    \includegraphics[width=0.8\linewidth]{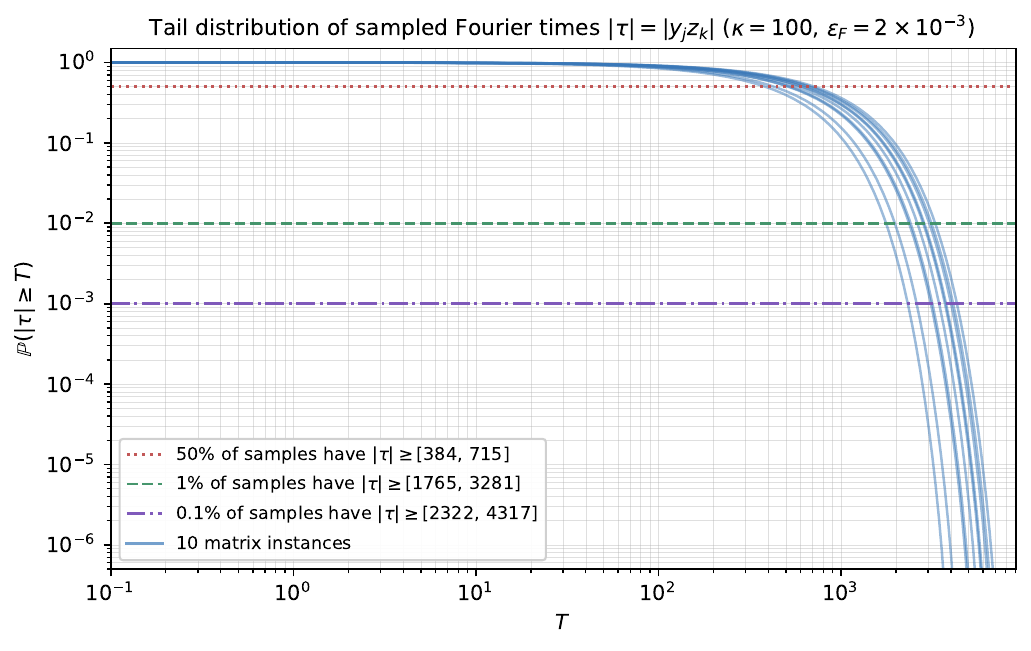}
    \caption{Tail distribution of the absolute value of the sampled Fourier times
    $|\tau|$ under the joint sampling distribution
    $p_y \otimes p_z$ from \cref{eq:probs}, for the same $10$ random
    Hermitian matrices as in \cref{fig:pf_estimator_convergence}. Each thin
    curve gives the complementary cumulative distribution function
    $\mathbb{P}(|\tau| \geq T)$ for one matrix instance. The horizontal
    lines mark the corresponding ranges of the $50\%$, $1\%$, and $0.1\%$
    quantiles of $|\tau|$ across the $10$ instances.}
    \label{fig:fourier_time_distribution}
\end{figure}

For the PF subroutine, the adaptive rule \cref{eq:adaptive_pf_rule} sets
the per-sample gate count to $N_{\rm CP}^{\rm PF}(\tau) = 2 L\, r(\tau)$
controlled-Pauli rotations. Combining this with the distribution in
\cref{fig:fourier_time_distribution} yields per-sample expected gate
counts $\mathbb{E}_{\tau}[N_{\rm CP}^{\rm PF}(\tau)] \in [3.4 \times 10^7,\,
1.1 \times 10^8]$ across the $10$ instances, with the worst-case sample
(at $\tau = t_{\max}$) requiring up to $\sim 2 \times 10^9$ rotations.
For the RTE subroutine, controlling the LCU sampling normalization
$\Nrte(\tau) = \alpha(\tau)^r \leq e^{\tau^2/r}$ from \cref{eq:nrte_ub_0}
--- which enters the per-sample variance as $\Nrte(\tau)^2$ ---
uniformly across the Fourier distribution requires the per-sample
Trotter count to satisfy $r_{\rm RTE}(\tau) \gtrsim \tau^2$. With the
choice $r_{\rm RTE}(\tau) = \lceil \tau^2 \rceil$, the per-sample gate
count $N_{\rm CP}^{\rm RTE}(\tau) = r_{\rm RTE}(\tau)$ exceeds $1.5 \times
10^5$ for more than half of all samples, exceeds $3 \times 10^6$ for
$1\%$ of samples, and reaches up to $\sim 5 \times 10^7$ at the maximum
sampled time. The expected per-sample gate count is
$\mathbb{E}_{\tau}[r_{\rm RTE}(\tau)] = \mathbb{E}_{\tau}[\tau^2] \in
[4.0 \times 10^5,\, 1.4 \times 10^6]$. Both subroutines therefore demand
per-sample gate counts in the range $10^5$--$10^9$ for the typical and
worst-case Fourier samples, and combining either with the
$N_S \sim 10^8$ samples for which we have demonstrated the PF convergence
in \cref{fig:pf_estimator_convergence} brings into question the actual practicality of such randomized techniques.

\Cref{fig:fourier_time_distribution} also explains why we are unable to
present an analogous numerical study for the RTE estimator at the same
$\kappa$. The classical-simulation costs of the two estimators differ by
many orders of magnitude. The PF estimator's per-sample operator
$S_{r(\tau)}(\tau) = \big(S(\tau / r(\tau))\big)^{r(\tau)}$ is the
$r(\tau)$-th power of a deterministic factor $S(\tau / r(\tau))$ and can
be evaluated in $O(\log r(\tau))$ matrix multiplications by repeated
squaring. The RTE estimator's per-sample operator, by contrast, is a
product of $r_{\rm RTE}(\tau)$ \emph{independent} unitaries drawn from the
LCU distribution in \cref{eq:rte_lcu}, with no repeated factor to exploit;
classical evaluation requires $O(r_{\rm RTE}(\tau))$ matrix
multiplications per sample. With $N_S = 10^8$ samples per instance, the
total classical work to produce a counterpart to
\cref{fig:pf_estimator_convergence} for the RTE estimator scales as
$N_S \sum_{i=1}^{10} \mathbb{E}_{\tau}[\tau^2]_i \sim 10^{15}$ matrix
multiplications, six orders of magnitude beyond the $\sim 10^9$ matrix
multiplications required by the PF precomputation underlying
\cref{fig:pf_estimator_convergence}. This classical-simulation asymmetry
between $O(\log r)$ and $O(r)$ per-sample cost is a feature of how each
estimator is simulated and does not, on its own, reflect any fundamental
quantum-resource advantage of one method over the other.

To validate \cref{cor:rte_rmse} empirically and compare both subroutines on
the same matrix instances, we now repeat the experiment at a smaller
condition number $\kappa = 3$, where $t_{\max}$ is roughly two orders of
magnitude smaller and the per-sample RTE work $\mathbb{E}_\tau[\tau^2]$
falls into a classically tractable range. All other parameters are
unchanged from above: $\epsilon_F = 2 \times 10^{-3}$,
$\ket{\psi} = \ket{\phi} = \ket{0}$, $20$ Monte Carlo runs per $N_S$, and
$10$ random Hermitian matrices, with the latter now reused for both the
PF and the RTE estimator. Across the $10$ instances we find
$\lambda \in [1.32,\, 2.55]$, $t_{\max} \in [74,\, 154]$, and
$J K \in [1271,\, 3528]$.
 
For the PF estimator we retain the adaptive Trotter rule
\cref{eq:adaptive_pf_rule}. For the RTE estimator we use the
adaptive rule
\begin{align}
    \label{eq:adaptive_rte_rule}
    r_{\rm RTE}(\tau) := \max\!\left(1,\, \left\lceil c_{\rm RTE}\, \tau^2 \right\rceil\right),
\end{align}
which makes the per-sample LCU normalization
$\Nrte(\tau) = \alpha(\tau)^{r_{\rm RTE}(\tau)} \leq e^{\tau^2 / r_{\rm RTE}(\tau)} \leq e^{1/c_{\rm RTE}}$
uniform across the Fourier distribution and reduces the variance
prefactor in \cref{cor:rte_rmse} to the constant $\exp(2/c_{\rm RTE})$.
We fix $c_{\rm RTE} = 1$ for the empirical run, which carries the largest
variance prefactor ($\exp(2) \approx 7.39$) but the smallest per-sample
gate count. The truncation order $n_{\max}$ in \cref{eq:rte_finite} is
selected to be $n_{\max} = 4$ across instances. As with \cref{eq:max_analytical_bound_pf} for PF, we combine
\cref{cor:rte_rmse} with the
worst-case bias allocation $|B_{\rm RTE}| \leq \epsilon_F/2$ and the
Fourier error $\|F(A) - A^{-1}\| \leq \epsilon_F$ via the triangle
inequality, and take the max across the $10$ instances. 
 
\begin{figure}[h]
    \centering
    \includegraphics[width=0.8\linewidth]{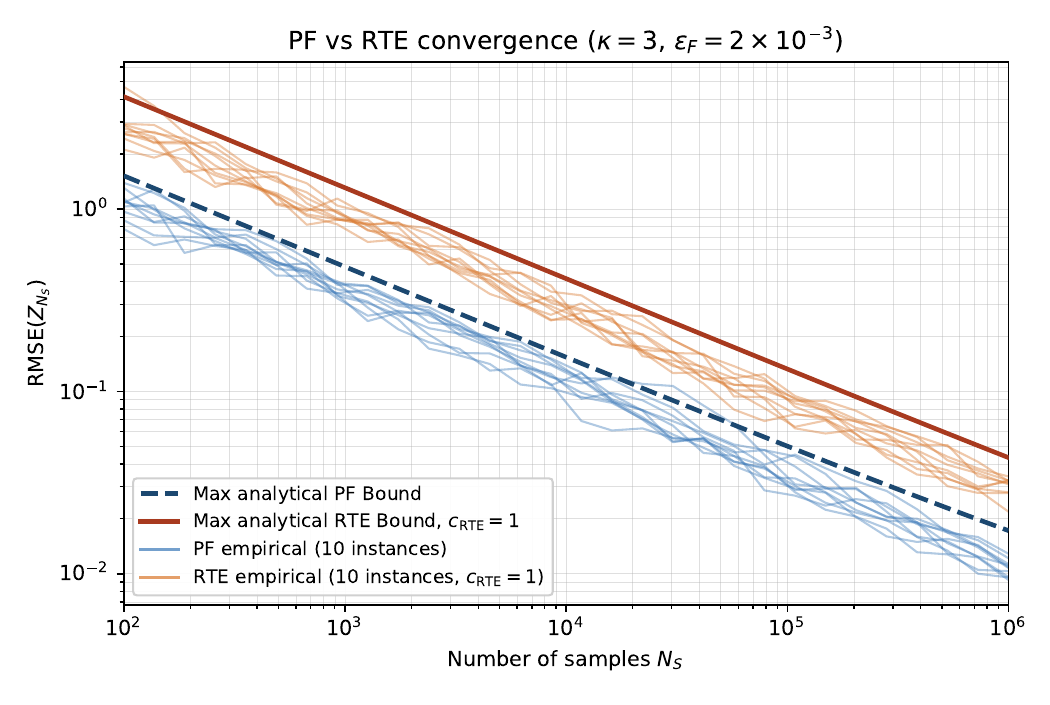}
    \caption{Empirical convergence of the PF and RTE estimators at condition number
    $\kappa = 3$ and an upper bound on the Fourier series error of $\epsilon_F = 2 \times 10^{-3}$ on the same $10$
    random Hermitian matrices. Thin blue curves show the empirical RMSE
    of the PF estimator under the adaptive rule \cref{eq:adaptive_pf_rule};
    thin orange curves show the empirical RMSE of the RTE estimator under
    \cref{eq:adaptive_rte_rule} with $c_{\rm RTE} = 1$. The dashed blue
    curve is the maximum across the $10$ instances of the analytical RMSE
    upper bound combining \cref{cor:trotter_rmse} with the Fourier
    approximation error of \cref{alg:fourier_series_construction} via the
    triangle inequality. The red curves are the analogous max-over-instances
    bounds for the RTE estimator from \cref{cor:rte_rmse} at three values
    of $c_{\rm RTE}$; the solid red curve corresponds to the
    $c_{\rm RTE} = 1$ used in the empirical run.}
    \label{fig:pf_rte_comparison}
\end{figure}
 
\Cref{fig:pf_rte_comparison} shows the empirical bundles which
lie below their respective max-over-instances bounds at nearly all measured points: only $4$ of $300$ measured PF points and $2$ of $300$
RTE points (at most $1.3\%$ in either family) exceed their respective
bounds, again due to the Monte-Carlo fluctuation of the empirical RMSE
estimator at $20$ runs per $N_S$ and our conservative choice for the single-shot Hadamard test. The mean empirical-to-bound ratios are
$0.73$ for PF and $0.69$ for RTE. We observe that in the variance-dominated regime both bounds decay as
$1/\sqrt{N_S}$, but the RTE bound at $c_{\rm RTE} = 1$ sits a factor
$\exp(1/c_{\rm RTE}) = e \approx 2.72$ above the PF bound; increasing
$c_{\rm RTE}$ tightens this gap (it falls as $\exp(1/c_{\rm RTE})$ and
approaches unity only as $c_{\rm RTE} \to \infty$) at the price of a
larger per-sample Trotter count $r_{\rm RTE}(\tau) = \lceil c_{\rm RTE}\,
\tau^2 \rceil$. The PF estimator therefore enjoys a strictly tighter
analytical bound at any finite $c_{\rm RTE}$. This statistical
advantage, however, must be weighed against the per-sample gate counts.

The per-instance $|\tau|$ distribution at $\kappa = 3$ is shown in
\cref{fig:fourier_time_distribution_kappa3}: it has the same qualitative
shape as \cref{fig:fourier_time_distribution} but at the smaller scale
set by $t_{\max} \in [74,\, 154]$, with median $|\tau| \in [9,\, 17]$ and
$99\%$-quantile $|\tau| \in [39,\, 78]$ across the $10$ instances. The
relevant moments are
$\mathbb{E}_\tau[r_{\rm PF}(\tau)] \in [3.7 \times 10^2,\, 2.2 \times 10^3]$
and $\mathbb{E}_\tau[\tau^2] \in [1.97 \times 10^2,\, 7.85 \times 10^2]$.
With $L = 16$ Pauli terms per matrix, the per-sample expected gate
counts are
$\mathbb{E}_\tau[N_{\rm CP}^{\rm PF}] = 2 L\, \mathbb{E}_\tau[r_{\rm PF}]
\in [1.2 \times 10^4,\, 7.0 \times 10^4]$ for PF and
$\mathbb{E}_\tau[N_{\rm CP}^{\rm RTE}] = c_{\rm RTE}\, \mathbb{E}_\tau[\tau^2]
\in [2.0 \times 10^2,\, 7.9 \times 10^2]$ for RTE at $c_{\rm RTE} = 1$.
RTE is therefore roughly two orders of magnitude cheaper per sample at
this $\kappa$ and $L$.

\begin{figure}[h]
    \centering
    \includegraphics[width=0.8\linewidth]{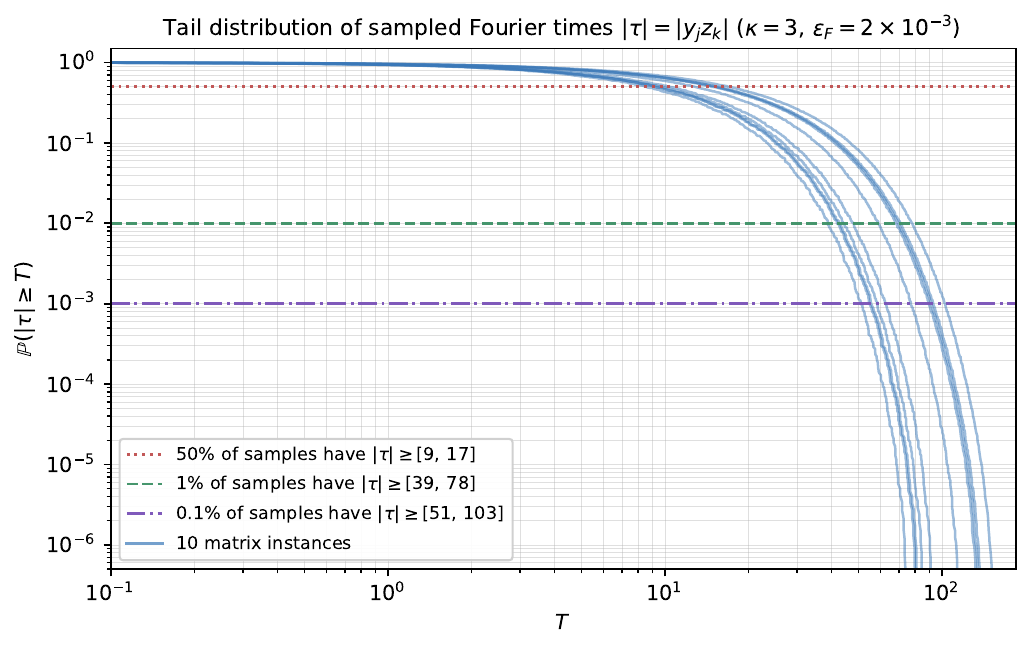}
    \caption{Tail distribution of the absolute value of the sampled Fourier times
    $|\tau|$ under the joint sampling distribution
    $p_y \otimes p_z$ from \cref{eq:probs}, for the same $10$ random
    Hermitian matrices at $\kappa = 3$ as used in
    \cref{fig:pf_rte_comparison}. Each thin curve gives the complementary
    cumulative distribution function $\mathbb{P}(|\tau| \geq T)$ for one
    matrix instance, and the horizontal lines mark the corresponding
    ranges of the $50\%$, $1\%$, and $0.1\%$ quantiles of $|\tau|$ across
    the $10$ instances. The maximum sampled time per instance ranges over
    $t_{\max} \in [74,\, 154]$, two orders of magnitude smaller than at
    $\kappa = 100$ in \cref{fig:fourier_time_distribution}.}
    \label{fig:fourier_time_distribution_kappa3}
\end{figure}
 
Combining the two effects, the total quantum-gate budget required to
reach a target RMSE $\epsilon$ in the variance-dominated regime is the
product of the sample count $N_S$ needed to drive the variance below
$\epsilon^2$ and the expected per-sample gate count of the chosen
subroutine. For PF, $N_S^{\rm PF} \propto 2 (N_y N_z)^2 / (\lambda^2 \epsilon^2)$;
for RTE at $c_{\rm RTE}$,
$N_S^{\rm RTE} \propto 2\, e^{2/c_{\rm RTE}} (N_y N_z)^2 / (\lambda^2 \epsilon^2)$,
larger by the factor $e^{2/c_{\rm RTE}}$ that captures RTE's variance
penalty (a factor of $e^2 \approx 7.4$ at $c_{\rm RTE} = 1$). Defining
the total gate budgets
$G_{\rm PF} := N_S^{\rm PF} \cdot \mathbb{E}_\tau[N_{\rm CP}^{\rm PF}(\tau)]$ and
$G_{\rm RTE}(c_{\rm RTE}) := N_S^{\rm RTE} \cdot \mathbb{E}_\tau[N_{\rm CP}^{\rm RTE}(\tau)]$,
this yields, up to instance-dependent prefactors of order unity,
\begin{align}
    G_{\rm PF} \;&\propto\; \frac{2 (N_y N_z)^2}{\lambda^2 \epsilon^2} \cdot 2 L\, \mathbb{E}_\tau[r_{\rm PF}(\tau)],
    \nonumber \\
    G_{\rm RTE}(c_{\rm RTE}) \;&\propto\; \frac{2\, e^{2/c_{\rm RTE}} (N_y N_z)^2}{\lambda^2 \epsilon^2} \cdot c_{\rm RTE}\, \mathbb{E}_\tau[\tau^2],
    \label{eq:gate_budgets}
\end{align}
with ratio
\begin{align}
    \frac{G_{\rm RTE}(c_{\rm RTE})}{G_{\rm PF}}
    \;=\; e^{2/c_{\rm RTE}} \cdot \frac{c_{\rm RTE}\, \mathbb{E}_\tau[\tau^2]}{2 L\, \mathbb{E}_\tau[r_{\rm PF}(\tau)]}.
    \label{eq:gate_budget_ratio}
\end{align}
The two factors in \cref{eq:gate_budget_ratio} pull in opposite
directions: $e^{2/c_{\rm RTE}}$ is the sample-count penalty paid by RTE
for its larger variance and exceeds unity at any finite $c_{\rm RTE}$,
while $c_{\rm RTE} \mathbb{E}_\tau[\tau^2] / (2 L\, \mathbb{E}_\tau[r_{\rm PF}(\tau)])$
is the per-sample gate-count ratio and, on our instances, is much smaller
than unity ($\sim 1/85$ at $c_{\rm RTE} = 1$). The product $e^{2/c_{\rm RTE}} \cdot c_{\rm RTE}$ is minimized
at $c_{\rm RTE}^\star = 2$. Numerically on our $10$ instances, this optimum yields
$G_{\rm RTE}/G_{\rm PF} \approx 0.06$, while the empirical
$c_{\rm RTE} = 1$ yields $G_{\rm RTE}/G_{\rm PF} \approx 0.09$. The RTE
subroutine therefore reaches a given target RMSE with roughly an order
of magnitude smaller total gate budget than PF on these instances: the
per-sample gate-count saving more than compensates for the larger sample
count.
 
This conclusion strengthens with system size. The PF per-sample cost
$N_{\rm CP}^{\rm PF}(\tau) = 2 L\, r_{\rm PF}(\tau)$ scales linearly in
$L$. The dependence on $L$ enters only through the explicit factor
$2L$: the factor $r_{\rm PF}(\tau) \propto \sqrt{f}\, \tau^{3/2}$
carries no $L$-dependence, since the Trotter constant $f$ of
\cref{lemma:trotter_number} is $\mathcal{O}(1)$, and in particular independent
of both $L$ and $\lambda$, for the $\lambda$-rescaled Hamiltonian
$\tilde A = \sum_\ell \tilde P_\ell$. This follows because the partial sums of
the rescaled Paulis are controlled by $\sum_\ell \|\tilde P_\ell\| = 1$, so the
nested-commutator norms defining $f$ (cubic and quadratic in the
$\tilde P_\ell$) admit a bound independent of the number of terms. A direct
estimate gives $f \leq 5/12$. The naive expectation that $f$ grows with $L$
applies to an unrescaled Hamiltonian with $\mathcal{O}(1)$-norm terms and does
not survive the $\ell_1$-rescaling $\tilde P_\ell = c_\ell P_\ell / \lambda$.
The per-sample RTE cost $N_{\rm CP}^{\rm RTE}(\tau) = c_{\rm RTE}\, \tau^2$, by
contrast, has no $L$-dependence at all: each LCU segment is a single controlled
Pauli rotation regardless of how many Pauli terms appear in the decomposition.
The total-gate ratio \cref{eq:gate_budget_ratio} therefore favors RTE by an
additional factor of $L$ in this regime, strengthening the quantitative case
for RTE at the larger $L$ of physically relevant problems---though our
experiments reveal that absolute gate budgets at any practical $\kappa$ remain
prohibitive on near-term hardware.

Finally, we comment on the tightness of our bounds. The empirical RMSE agrees with the analytical RMSE
bound to a factor of order unity across all instances. Taken together, the Fourier-approximation bound is loose by
a controlled, dimension-independent factor (the empirical error is $17$--$21\%$
of $\epsilon_F$) and the fact that worst-case
Trotter bound is known to be pessimistic, in practice explain the slack in our bounds.

\section{Discussion}
\label{sec:discussion}

In this work we have provided non-asymptotic, end-to-end resource
estimates for a randomized hybrid quantum algorithm that estimates
scalar properties of matrix functions $\bra{\phi} f(A) \ket{\psi}$,
specializing to the linear systems case $f(A) = A^{-1}$. Three
contributions distinguish our analysis from prior asymptotic treatments.
We provide explicit Fourier truncation and discretization
parameters $(y_{\max}, z_{\max}, J, K)$ that achieve a prescribed spectral-norm
error $\epsilon_F$ in exponentially fewer quadrature nodes for the oscillatory
$y$-integral than a low-order discretization, by combining the Gauss-Legendre rule
of~\cite{qrt24} for that integral with a trapezoidal rule for the rapidly decaying
$z$-integral. We also provide explicit
sample-count and per-sample gate-count formulas for two
Hamiltonian-simulation subroutines---a second-order product formula
(PF) and the random Taylor expansion (RTE) of~\cite{Wan:2021non,
Wang:2023els}---in both high-probability and root-mean-squared-error
convergence modes. Our numerical results validate these bounds
end-to-end on small random Hermitian matrices and, at a smaller
condition number where both subroutines are classically tractable,
place PF and RTE on the same instances to expose their relative
quantum-resource trade-off.

Two structural features dominate the resulting resource picture. First,
for both subroutines and both convergence modes the sample complexity
scales as $\widetilde{\mathcal{O}}(\kappa^2/\epsilon^2)$
(\cref{tab:resource_summary}), with the prefactor governed by
$N_y N_z / \lambda$---a quantity that grows linearly in the rescaled
condition number $\tilde\kappa^* = \lambda \kappa$ up to logarithmic
factors (\cref{lemma:normalization_bounds}) and that embodies the cost
of Fourier-coefficient sampling. Second, the per-sample non-Clifford
gate count $N_{\rm CP}$ depends asymmetrically on the subroutine: PF
carries an explicit factor $2L$ from the second-order Trotter formula
together with a $\sqrt{f}\, \tau^{3/2}$ scaling from the Trotter-error
bound, where the Trotter constant $f$ of \cref{lemma:trotter_number}
is $\mathcal{O}(1)$ for the $\lambda$-rescaled Hamiltonian
$\tilde A = \sum_\ell \tilde P_\ell$ (the commutator sum is controlled by
$\sum_\ell \|\tilde P_\ell\| = 1$ and is therefore independent of both $L$
and $\lambda$), so that the $L$-dependence of $N_{\rm CP}^{\rm PF}$ is the
single explicit power $L^1$. RTE, by contrast, carries no explicit
$L$-dependence and only a $\tau^2$ scaling under the choice
$r_{\rm RTE}(\tau) \propto \tau^2$ that uniformizes the variance prefactor.
Together these features place both
subroutines far outside near-term hardware reach at any condition number
of practical interest: even a $4 \times 4$ instance with $\kappa = 100$
and target error $\epsilon \sim 10^{-3}$ requires on the order of
$10^{15}$ controlled Pauli rotations in total to reach convergence.

The side-by-side $\kappa = 3$ comparison in our numerical results refines the picture in a direction
that the explicit non-asymptotic prefactors derived in this work make
quantitatively visible. RTE's secondary sampling step over the LCU
expansion of the time evolution operator inflates the variance prefactor
in \cref{cor:rte_rmse} by $\exp(2/c_{\rm RTE})$, demanding roughly
$7\times$ more samples than PF at $c_{\rm RTE} = 1$ to reach the same
target error. On the same matrix instances, however, RTE is roughly
$85\times$ cheaper per sample, since each LCU segment contributes a
single controlled Pauli rotation independent of the Pauli sparsity $L$.
The two effects combine into a total gate budget ratio
$G_{\rm RTE} / G_{\rm PF} \approx 0.06$ at the optimum
$c_{\rm RTE}^\star = 2$ and $\approx 0.09$ at $c_{\rm RTE} = 1$, with
the advantage scaling as $L$ in the worst case, inherited from the
explicit $2L$ factor of the product formula. The RTE subroutine
is therefore the more efficient of the two on a quantum device---a
relative ranking that the explicit non-asymptotic bounds derived in
this work surface, but that asymptotic complexity statements alone could
not have determined.

Despite this relative ordering, the absolute gate budgets remain
prohibitive at any $\kappa$ of practical interest. The bottleneck is primarily the fact that the maximum
simulation time $t_{\max}$ scales as $\kappa \log(\kappa/\epsilon_T)$,
the Fourier grid size $JK$ as a higher polynomial in $\kappa$, and the
sampling normalization $N_y N_z$ inherits the same $\kappa$-dependence. We note that the precise $\kappa^4$ scaling is tied to this particular Fourier
construction, whereas the $1/\epsilon^2$ sampling cost and the variance's
dependence on the squared $\ell_1$-normalization of the Fourier series are generic
to any Fourier-sampling randomized scheme; the polynomial-in-$1/\epsilon$ cost is
therefore structural, while the power of $\kappa$ is construction-dependent.
Several avenues could improve this picture. First, the
Gauss-Legendre--trapezoidal construction we adopt from~\cite{qrt24} is
only one possible discretization; alternative integral representations
of the inverse function with weaker $\kappa$-dependence in $t_{\max}$
would directly reduce the per-sample Trotter or LCU work for both
subroutines. Second, classical preconditioning of $A$, as suggested
in~\cite{Wang:2023els}, addresses the $\kappa$-dependence at its source
by reducing the effective condition number prior to applying the
quantum routine, at the cost of a one-time classical computation.
Third, lower-variance variants of the RTE construction---for instance,
Hamiltonian-aware sampling distributions over the LCU rather than the
importance distribution we use---could close the $\exp(2/c_{\rm RTE})$
gap to PF's variance prefactor and further widen RTE's lead. Fourth,
structured Hamiltonians with locality or sparsity admit Trotter
constants $f$ even smaller than the worst-case bound of
\cref{lemma:trotter_number}, and a sampling procedure that respects
such structure could further reduce both per-sample costs.

One natural alternative against which to weigh these costs is the family
of block-encoding-based linear systems solvers, which achieve query
complexity $\widetilde{\mathcal{O}}(\kappa \log(1/\epsilon))$ with respect
to the block-encoding access via the quantum singular value
transformation~\cite{qsvt} and its variants, by Chebyshev approximation
of the inverse function over $\domain_{\kappa^*}$
\cite{doi:10.1137/1.9781611975949}. We emphasize first that this
comparison does not reflect any advantage in the approximation step
itself: the Fourier series we construct already attains a maximum
evolution time $t_{\max} = \mathcal{O}(\kappa\log(\kappa/\epsilon))$
(\cref{eq:max_fourier_time}), matching the degree
$\mathcal{O}(\kappa\log(\kappa/\epsilon))$ of the Chebyshev approximation,
with both linear in $\kappa$ and logarithmic in $1/\epsilon$ at the level
of the approximation. The apparent disparity in cost instead originates in
how each approximation is realized: QSVT implements a degree-$d$ polynomial
using $\mathcal{O}(d)$ coherent queries to the block encoding, whereas the
randomized scheme realizes a Fourier approximation of comparable degree
through $N_{\rm S} = \widetilde{\mathcal{O}}(\kappa^2/\epsilon^2)$
incoherent Monte Carlo samples, each requiring its own per-sample
Hamiltonian simulation. The polynomial $1/\epsilon$-dependence of the
randomized total cost and the overall $\kappa^4$ scaling therefore arise
from the sampling and per-sample simulation overheads rather than from the
approximation degree.

Two caveats, however, prevent this contrast from being read as a settled
verdict. First, the two
schemes do not return the same object. The randomized scheme produces a
\emph{scalar} estimate of $\bra{\phi} A^{-1} \ket{\psi}$ already resolved to
additive precision $\epsilon$, and its $1/\epsilon^2$ cost reflects exactly
that. The QSVT query complexity $\widetilde{\mathcal{O}}(\kappa\log(1/\epsilon))$,
by contrast, prepares the solution \emph{state}
$A^{-1}\ket{\psi}/\lVert A^{-1}\ket{\psi}\rVert$; recovering the same scalar
$\bra{\phi}A^{-1}\ket{\psi}$ from that state requires an additional overlap
estimation against $\ket{\phi}$, whose cost grows polynomially in $1/\epsilon$
(linearly under coherent amplitude estimation, quadratically under direct
sampling). A like-for-like, scalar-to-scalar comparison therefore loads a
$\mathrm{poly}(1/\epsilon)$ factor onto the block-encoding side that its bare
query complexity does not display, partially closing the gap on the
$1/\epsilon$ axis. Second, the query complexity counts calls to the block
encoding as unit-cost. A single block-encoding query is itself a quantum
circuit whose gate cost depends on the construction---for linear-combination-of-unitaries
encodings it typically scales with the coefficient one-norm $\lambda$, and the
subnormalization of the encoding feeds back into the effective condition number
seen by QSVT---so a fully gate-level accounting can reintroduce $\kappa$- and
$\lambda$-dependence that the query count suppresses. Whether, after both
adjustments, the logarithmic $1/\epsilon$-dependence of the block-encoding
query complexity still translates into a genuine gate-level advantage over the
polynomial $1/\epsilon$-dependence of the randomized total cost is a question
our explicit non-asymptotic bounds make well-posed but which we do not settle
here; it is especially pertinent on the structured, data-sparse matrices for
which efficient block encodings are
known~\cite{doi:10.1137/22M1484298, Sunderhauf2024blockencoding,
Kane:2024odt, Hariprakash:2023tla, Rhodes:2024zbr, LIU2025102480}, and we leave
a head-to-head gate-level comparison on such instances to future work. The
randomized approach retains, in any case, the appealing features of
$\log(N) + 1$ qubit width and structure-agnostic applicability, but the
explicit gate budgets we have computed make clear that those features alone do
not suffice to recommend the method at problem sizes of scientific interest.

\section*{Data Availability}
Numerical data and python notebooks capable of reproducing the figures in this work can be found here: \url{https://github.com/sid-hari/RQA_paper_plots}

\section*{Code Availability}

Code used to generate data in this study is available from the corresponding authors upon reasonable request

\section*{Acknowledgements}

The authors thank Niel Van Buggenhout and Yizhi Shen for helpful discussions.
This research was supported by the U.S. Department of Energy (DOE) Office of Science (SC) under Contract No. DEAC02-05CH11231, through the National Energy Research Scientific Computing Center (NERSC), a User Facility located at Lawrence Berkeley National Laboratory, using NERSC award ASCR-ERCAP0033494. S.H.\ acknowledges funding through the Quantum Information Science Enabled Discovery (QuantISED) for High Energy Physics (KA2401032) program funded by DOE SC and the Quantum Systems Accelerator (QSA), a DOE SC National Quantum Information Science Research Center. R.V.B.\ acknowledges funding through the DOE SC Accelerated Research in Quantum Computing, Fundamental Algorithmic Research toward Quantum Utility (FAR-Qu) program. 

\section*{Author Contributions}

S.H. developed the theory, performed the numerical simulations presented in this work, and wrote the initial draft. R.V.B., K.K. and D.C. helped supervise the projects and refine the ideas. D.C. proposed the scope of the work. All authors discussed the results and contributed to the final manuscript.

\section*{Competing Interests}

All authors declare no financial or non-financial competing interests.

\printbibliography

%%%%%%%%%%%%%%%%%%%%%%%%%%%%%%%%%%%%%%%%%%%%%%%%%%%%%%%%%%%%%%%%%%%%%%%%%%%%%%%%
%%%%%%%%%%%%%%%%%%%%%%%%%  SUPPLEMENTARY INFORMATION  %%%%%%%%%%%%%%%%%%%%%%%%%%
%%%%%%%%%%%%%%%%%%%%%%%%%%%%%%%%%%%%%%%%%%%%%%%%%%%%%%%%%%%%%%%%%%%%%%%%%%%%%%%%
\clearpage

% Restart section numbering for the supplement and switch to S-prefixed
% equation / figure / table numbers.  Section numbers are re-enabled here
% (the main text runs with \setcounter{secnumdepth}{0}) so that the
% "Supplementary Information N" pointers in the main text resolve.
\setcounter{secnumdepth}{1}
\setcounter{section}{0}
\setcounter{equation}{0}
\setcounter{figure}{0}
\setcounter{table}{0}
\renewcommand{\thesection}{\arabic{section}}
\renewcommand{\theequation}{S\arabic{equation}}
\renewcommand{\thefigure}{S\arabic{figure}}
\renewcommand{\thetable}{S\arabic{table}}

\begin{center}
  {\Large\bfseries Supplementary Information}\\[0.4em]
  {\large\itshape The Practicality of Randomized Quantum Linear Systems Solvers}
\end{center}
\vspace{1.5em}

\section{Choosing the number of shots per Hadamard test}
\label{app:1_shot_Hadamard}

Suppose we wish to estimate
\begin{equation}
    F(A) := \bra{0}U_\psi^{\dagger} f(A) U_\phi\ket{0} = \sum_{s\in[S']} \alpha_s\, f_s, \qquad f_s := \bra{0}U_\psi^{\dagger} e^{-iAt_s} U_\phi \ket{0},
\end{equation}
where the amplitudes $f_s \in \mathbb{C}$ are estimated via Hadamard tests and the Fourier coefficients $\alpha_s \in \mathbb{C}$ may also be complex. We use the Monte Carlo estimator
\begin{equation}
    \hat{Z} = \frac{1}{n} \sum_{i\in[n]} \frac{\alpha_{s_i}}{p_{s_i}}\, \hat{f}_{s_i},
\end{equation}
where each $s_i$ is drawn independently with probability $p_s = |\alpha_s| / C$ (with $C := \sum_s |\alpha_s|$), and $\hat{f}_{s_i}$ is an unbiased estimator of $f_{s_i}$ obtained from $S$ Hadamard shots for the real part and $S$ shots for the imaginary part. Each individual Hadamard shot returns a $\pm 1$ outcome with mean equal to $\Re f_s$ or $\Im f_s$ and variance at most $1$, so $\mathrm{Var}(\hat f_s) = \sigma_s^2/S$ for some $\sigma_s^2 \leq 2$ that depends only on $|f_s|$.

For complex random variables we use the standard convention $\mathrm{Var}(Z) := \mathbb{E}[|Z|^2] - |\mathbb{E}[Z]|^2$. Conditioning on the sampled index $s$ and applying the law of total variance,
\begin{align}
    \mathrm{Var}[\hat{Z}] &= \frac{1}{n}\, \mathrm{Var}\!\left(\frac{\alpha_s}{p_s}\, \hat{f}_s\right) \\
    &= \frac{1}{n}\!\left(\, \mathbb{E}_s\!\left[\frac{|\alpha_s|^2}{p_s^2}\, \mathrm{Var}(\hat{f}_s)\right] + \mathrm{Var}_s\!\left(\frac{\alpha_s}{p_s}\, f_s\right) \right).
\end{align}
Substituting $p_s = |\alpha_s|/C$ and $\mathrm{Var}(\hat{f}_s) = \sigma_s^2/S$, the inner expectation evaluates to
\begin{equation}
    \mathbb{E}_s\!\left[\frac{|\alpha_s|^2}{p_s^2}\, \frac{\sigma_s^2}{S}\right] = \frac{1}{S}\sum_s p_s\, \frac{|\alpha_s|^2}{p_s^2}\, \sigma_s^2 = \frac{C}{S} \sum_s |\alpha_s|\, \sigma_s^2,
\end{equation}
and the second term evaluates to
\begin{equation}
    \mathrm{Var}_s\!\left(\frac{\alpha_s}{p_s}\, f_s\right) = \sum_s p_s\, \frac{|\alpha_s|^2 |f_s|^2}{p_s^2} - |F(A)|^2 = C \sum_s |\alpha_s|\, |f_s|^2 - |F(A)|^2.
\end{equation}
Combining these,
\begin{equation}
    \mathrm{Var}[\hat{Z}] = \frac{1}{n}\!\left( \frac{C}{S} \sum_s |\alpha_s|\, \sigma_s^2 \;+\; C \sum_s |\alpha_s|\, |f_s|^2 - |F(A)|^2 \right).
\end{equation}

Suppose now that we have a fixed total budget $T := nS$ of paired Hadamard test evaluations, where each evaluation consists of one real-part shot and one imaginary-part shot. Substituting $n = T/S$,
\begin{align}
    \mathrm{Var}[\hat{Z}] = \frac{C}{T} \sum_s |\alpha_s|\, \sigma_s^2 \;+\; \frac{S}{T}\, \mathrm{Var}_s\!\left(\frac{\alpha_s}{p_s}\, f_s\right).
\end{align}
The first term is independent of $S$, while the second is non-negative and grows linearly in $S$. The variance is therefore minimized at the smallest allowed value, $S = 1$ (with $n = T$), corresponding to a single Hadamard pair per Fourier sample.

\section{Proof of Theorem 4}
\label{app:thm3}

\begin{proof}
We use a result from the proof of Lemma 12 in~\cite{AndrewM:2015yvx}, which establishes that for all $x \neq 0$,
\begin{equation}
\label{eq:cks_bound}
\left| I(x) - \frac{\imi}{\sqrt{2\pi}} \int_{0}^{y_{\max}} dy \int_{-z_{\max}}^{z_{\max}} dz \, z e^{-z^2/2} e^{-\imi x y z} \right|
\leq \frac{e^{-(x y_{\max})^2/2}}{|x|} + \frac{2 e^{-z_{\max}^2/2}}{|x|}.
\end{equation}
We bound the two terms on the right-hand side by $\epsilon_1$ and $\epsilon_2$ respectively, requiring $\epsilon_1 + \epsilon_2 \leq \epsilon_T$ so that the total is at most $\epsilon_T$. For $x \in \domain_\kappa$ we have $1/\kappa \leq |x| \leq 1$, hence $1/|x| \leq \kappa$.

\emph{First term.} Since $e^{-(x y_{\max})^2 /2}$ is maximized over $\domain_\kappa$ at $|x| = 1/\kappa$,
\begin{equation}
\frac{e^{-(x y_{\max})^2/2}}{|x|} \leq \kappa \, e^{-y_{\max}^2 / (2\kappa^2)}.
\end{equation}
Demanding $\kappa \, e^{-y_{\max}^2 / (2\kappa^2)} \leq \epsilon_1$ yields
\begin{equation}
y_{\max} \geq \kappa \sqrt{2 \ln(\kappa/\epsilon_1)}.
\end{equation}

\emph{Second term.} Using $1/|x| \leq \kappa$,
\begin{equation}
\frac{2 e^{-z_{\max}^2/2}}{|x|} \leq 2\kappa \, e^{-z_{\max}^2/2}.
\end{equation}
Demanding $2\kappa \, e^{-z_{\max}^2/2} \leq \epsilon_2$ yields
\begin{equation}
z_{\max} \geq \sqrt{2 \ln(2\kappa/\epsilon_2)}.
\end{equation}

The choice $\epsilon_1 = \epsilon_T / 3$, $\epsilon_2 = 2\epsilon_T / 3$ saturates $\epsilon_1 + \epsilon_2 = \epsilon_T$ and renders the two logarithmic arguments equal,
\begin{equation*}
\frac{\kappa}{\epsilon_1} \;=\; \frac{2\kappa}{\epsilon_2} \;=\; \frac{3\kappa}{\epsilon_T},
\end{equation*}
producing the stated formulas for $y_{\max}$ and $z_{\max}$.
\end{proof}

\section{Proof of Theorem 5}
\label{app:thm4}
\begin{proof}
Throughout the proof we work with the rescaled form \cref{eq:GL_transformed_integral} of the truncated integral and write the discretization error as
$\epsilon_D := \sup_{x \in \domain_\kappa} | I_D^{(\epsilon_D,\epsilon_T)}(x) - I_T^{(\epsilon_T)}(x)|$.
To separate the contributions of the trapezoidal rule (in $z$) and the Gauss--Legendre rule (in $y$), we introduce an intermediate object in which only the $z$-integral has been discretized,
\begin{equation}
I_{T,D_z}(x) := \frac{\imi y_{\max}}{2\sqrt{2\pi}} \int_{-1}^{1} dy \sum_{k \in [K]} w_z^{(k)} z_k e^{-z_k^2/2} e^{-\imi x y_{\max}(1+y)z_k/2}.
\end{equation}
The triangle inequality then yields
\begin{equation}
\label{eq:disc_decomposition}
| I_D^{(\epsilon_D,\epsilon_T)}(x) - I_T^{(\epsilon_T)}(x)| \leq \underbrace{| I_{T,D_z}(x) - I_T^{(\epsilon_T)}(x)|}_{=:\epsilon_{(D,\mathrm{trap})}(x)} + \underbrace{| I_D^{(\epsilon_D,\epsilon_T)}(x) - I_{T,D_z}(x)|}_{=:\epsilon_{(D,\mathrm{GL})}(x)},
\end{equation}
and we proceed by bounding the two contributions separately, repeatedly using $|x| \leq 1$ for $x \in \domain_\kappa$.

We begin with the trapezoidal contribution. For a function $g$ analytic on the horizontal strip $\mathcal{H}_\rho = \{w \in \mathbb{C} : |\Im w| \leq \rho\}$, the trapezoidal rule on $(-\infty, \infty)$ with step $\Delta z$ satisfies~\cite{trefethen14}
\begin{equation}
\label{eq:trap_error_bound}
e_{\mathrm{trap}}^{(\Delta z)}[g \mid \mathcal{H}_\rho] \leq \frac{2 \sup_{|s| \leq \rho} \int_{-\infty}^{\infty} |g(z + \imi s)|\, dz}{e^{2\pi\rho/\Delta z} - 1}.
\end{equation}
Fixing $x \in \domain_\kappa$ and $y \in [-1, 1]$ and setting $g_{x,y}(z) := z e^{-z^2/2} e^{-\imi x y_{\max}(1+y) z / 2}$, the integrand is entire and so \cref{eq:trap_error_bound} applies for any $\rho > 0$. A direct computation gives
\begin{equation}
|g_{x,y}(z + \imi s)| = |z + \imi s| \, e^{-z^2/2 + s^2/2} \, e^{x s y_{\max}(1+y)/2},
\end{equation}
and using $\int |z|e^{-z^2/2}dz = 2$ together with $\int e^{-z^2/2} dz = \sqrt{2\pi}$ yields
\begin{equation}
\int_{-\infty}^{\infty} |g_{x,y}(z + \imi s)|\, dz \leq e^{s^2/2} e^{x s y_{\max}(1+y)/2} \left( 2 + |s|\sqrt{2\pi} \right).
\end{equation}
Taking the supremum over $|s| \leq \rho$, while recalling $|x| \leq 1$ and $0 \leq (1+y)/2 \leq 1$, we obtain
\begin{equation}
\sup_{|s| \leq \rho} \int |g_{x,y}(z + \imi s)|\, dz \leq e^{\rho^2/2} e^{|x| y_{\max}(1+y) \rho/2} \left(2 + \rho\sqrt{2\pi}\right).
\end{equation}
Substituting $\Delta z = 2 z_{\max}/(K-1)$ into \cref{eq:trap_error_bound}, integrating over $y \in [-1,1]$, and applying the elementary identity $\int_{-1}^{1} e^{a(1+y)/2} dy = 2(e^a - 1)/a$ gives
\begin{align}
\epsilon_{(D,\mathrm{trap})}(x)
&\leq \frac{y_{\max}}{2\sqrt{2\pi}} \int_{-1}^{1} \frac{2 e^{\rho^2/2}(2 + \rho\sqrt{2\pi}) e^{|x| y_{\max}(1+y)\rho / 2}}{e^{\pi \rho (K-1)/z_{\max}} - 1}\, dy \nonumber \\
&= \frac{2 e^{\rho^2/2} (2 + \rho\sqrt{2\pi}) (e^{|x| y_{\max}\rho} - 1)}{|x|\rho \sqrt{2\pi} (e^{\pi \rho (K-1)/z_{\max}} - 1)}.
\end{align}
Since $u \mapsto (e^{a u} - 1)/u$ is increasing in $u$ for $a > 0$, the right-hand side is monotonically increasing in $|x| \in [1/\kappa, 1]$, so taking the supremum over $\domain_\kappa$ gives
\begin{equation}
\label{eq:trap_bound_supx}
\sup_{x \in \domain_\kappa} \epsilon_{(D,\mathrm{trap})}(x) \leq \frac{2 e^{\rho^2/2} (2 + \rho\sqrt{2\pi}) (e^{y_{\max}\rho} - 1)}{\rho \sqrt{2\pi} (e^{\pi \rho (K-1)/z_{\max}} - 1)}.
\end{equation}
To extract a closed-form lower bound on $K$, we invoke the elementary inequality
\begin{equation}
\label{eq:exp_ratio}
\frac{e^A - 1}{e^B - 1} \leq e^{A - B} \quad \text{whenever } 0 \leq A \leq B,
\end{equation}
which follows from $(e^A - 1) e^B \leq (e^B - 1) e^A \iff e^A \leq e^B$. We will verify at the end of the proof that the choice of $K$ derived below guarantees $y_{\max} \rho \leq \pi \rho (K - 1)/z_{\max}$, justifying the use of \cref{eq:exp_ratio} with $A = y_{\max} \rho$ and $B = \pi \rho (K-1)/z_{\max}$. Under this condition, \cref{eq:trap_bound_supx} simplifies to
\begin{equation}
\sup_{x \in \domain_\kappa} \epsilon_{(D,\mathrm{trap})}(x) \leq \frac{2 e^{\rho^2/2} (2 + \rho\sqrt{2\pi})}{\rho \sqrt{2\pi}} \, e^{y_{\max} \rho - \pi \rho (K - 1)/z_{\max}},
\end{equation}
and demanding the right-hand side be at most $\epsilon'$, taking logarithms, and rearranging produces
\begin{equation}
\label{eq:K_general}
K \geq 1 + \frac{z_{\max}}{\pi \rho} \left[ \rho y_{\max} + \ln 2 + \frac{\rho^2}{2} + \ln\!\left(1 + \frac{2}{\rho \sqrt{2\pi}}\right) + \ln\frac{1}{\epsilon'} \right].
\end{equation}

We now turn to the Gauss--Legendre contribution. For a function $h$ analytic on and within the Bernstein $\sigma$-ellipse $\mathcal{E}_\sigma = \{(\sigma e^{\imi\theta} + \sigma^{-1} e^{-\imi\theta})/2 : \theta \in [-\pi, \pi]\}$, the Gauss--Legendre rule on $[-1,1]$ with $J$ nodes satisfies~\cite{qrt24}
\begin{equation}
\label{eq:gl_error_bound}
e_{\mathrm{GL}}^{(J)}[h \mid \mathcal{E}_\sigma] \leq \frac{8 \sigma^2 \sup_{t \in \mathcal{E}_\sigma} |h(t)|}{\sigma^{2J}(\sigma^2 - 1)}.
\end{equation}
Writing both $I_{T,D_z}$ and $I_D^{(\epsilon_D,\epsilon_T)}$ as a sum over the trapezoidal nodes $z_k$ and exploiting linearity of the GL error,
\begin{equation}
\epsilon_{(D,\mathrm{GL})}(x) \leq \frac{y_{\max}}{2\sqrt{2\pi}} \sum_{k \in [K]} w_z^{(k)} |z_k| e^{-z_k^2/2} \cdot e_{\mathrm{GL}}^{(J)}[h_{x,k} \mid \mathcal{E}_\sigma],
\end{equation}
where $h_{x,k}(y) := e^{-\imi x y_{\max}(1+y)z_k/2}$. This integrand is again entire, so \cref{eq:gl_error_bound} applies for any $\sigma > 1$. To bound $\sup_{t \in \mathcal{E}_\sigma}|h_{x,k}(t)|$, we use the parametrization $t = \frac{\sigma + \sigma^{-1}}{2} \cos\theta + \imi\, \frac{\sigma - \sigma^{-1}}{2} \sin\theta$ for $\mathcal{E}_\sigma$, from which $|\Im(t)| \leq (\sigma - \sigma^{-1})/2$. Since $x, z_k, y_{\max}$ are real, we have $|h_{x,k}(t)| = e^{x z_k y_{\max} \Im(t) / 2}$, which is maximized when $\Im(t)$ takes its extreme value with the appropriate sign. Hence, using $|x| \leq 1$,
\begin{equation}
\label{eq:bernstein_sup}
\sup_{t \in \mathcal{E}_\sigma} |h_{x,k}(t)| \leq e^{|z_k| y_{\max} (\sigma - \sigma^{-1})/4}.
\end{equation}
Combining \cref{eq:gl_error_bound,eq:bernstein_sup} and bounding $|z_k| \leq z_{\max}$, $e^{-z_k^2/2} \leq 1$, and $w_z^{(k)} = \Delta z = 2 z_{\max}/(K-1)$, we find
\begin{align}
\sup_{x \in \domain_\kappa} \epsilon_{(D,\mathrm{GL})}(x)
&\leq \frac{y_{\max}}{2\sqrt{2\pi}} \cdot \frac{2 z_{\max}}{K - 1} \cdot K z_{\max} \cdot \frac{8 \sigma^2 \, e^{z_{\max} y_{\max}(\sigma - \sigma^{-1})/4}}{\sigma^{2J}(\sigma^2 - 1)} \nonumber \\
&= \frac{8 K\, y_{\max} z_{\max}^2 \sigma^2 \, e^{z_{\max} y_{\max}(\sigma - \sigma^{-1})/4}}{(K-1) \sqrt{2\pi}\, \sigma^{2J} (\sigma^2 - 1)}.
\end{align}
Demanding the right-hand side be at most $\epsilon''$ and taking logarithms then yields
\begin{equation}
\label{eq:J_general}
J \geq \frac{1}{2 \ln \sigma} \left[ \ln\frac{K}{K-1} + \ln\!\left(\frac{8 y_{\max} z_{\max}^2 \sigma^2}{(\sigma^2 - 1) \sqrt{2\pi}\, \epsilon''}\right) + \frac{z_{\max} y_{\max}(\sigma - \sigma^{-1})}{4} \right].
\end{equation}

The bounds \cref{eq:K_general,eq:J_general} hold for any valid $\rho > 0$ and $\sigma > 1$; we now specialize. Splitting the discretization error budget evenly with $\epsilon' = \epsilon'' = \epsilon_D / 2$ and choosing $\rho = z_{\max}$ and $\sigma = \sqrt{2}$, we use \cref{eq:trunc_params} to write $z_{\max} y_{\max} = 2\kappa \ln(3\kappa/\epsilon_T)$ (noting $y_{\max} = \kappa z_{\max}$ and $z_{\max}^2 = 2\ln(3\kappa/\epsilon_T)$). The relevant constants evaluate to
\begin{equation*}
\frac{1}{2\ln\sqrt{2}} = \frac{1}{\ln 2}, \qquad
\frac{8 \cdot 2}{(2-1)\sqrt{2\pi}\,(\epsilon_D/2)} = \frac{32}{\epsilon_D \sqrt{2\pi}}, \qquad
\frac{\sigma - \sigma^{-1}}{4}\bigg|_{\sigma=\sqrt{2}} = \frac{1}{4\sqrt{2}},
\end{equation*}
and substituting into \cref{eq:K_general,eq:J_general} reproduces \cref{eq:fpK,eq:fpJ}, with the last term in \cref{eq:fpJ} becoming $z_{\max} y_{\max}/(4\sqrt{2}) = \kappa \ln(3\kappa/\epsilon_T)/(2\sqrt{2})$.

It remains to verify the hypothesis of \cref{eq:exp_ratio} invoked earlier. From \cref{eq:fpK} we have
\begin{equation}
K - 1 = \frac{1}{\pi}\left[\, 2\kappa \ln(3\kappa/\epsilon_T) + R \,\right], \qquad R := \ln 2 + \frac{z_{\max}^2}{2} + \ln\frac{2}{\epsilon_D} + \ln\!\left(1 + \frac{2}{z_{\max}\sqrt{2\pi}}\right).
\end{equation}
The remainder $R$ is non-negative for any meaningful tolerance ($\epsilon_D \leq 2$ suffices, and the other three terms are individually positive), so $K - 1 \geq 2\kappa\ln(3\kappa/\epsilon_T)/\pi = y_{\max} z_{\max}/\pi$. Multiplying by $\rho = z_{\max}$ recovers $A = y_{\max} \rho \leq \pi \rho (K-1)/z_{\max} = B$, completing the proof.
\end{proof}

\section{Proof of Lemma 6}
\label{app:matrix_rescaling}
    \begin{proof}
Let $A$ be Hermitian with $\|A\| > 1$, upper bound $\kappa^*$ on the condition number, and Pauli weight $\lambda$, and denote the eigenvalues of $A$ by $\Lambda(A)$. Since $A$ is Hermitian,
\begin{equation}
    \|A\| = \max_{\mu \in \Lambda(A)} |\mu|.
\end{equation}
The Pauli decomposition \cref{eq:paulidecomp} and the unitarity of Pauli operators give $\|A\| \leq \sum_{\ell \in [L]}|c_\ell| = \lambda$ via the triangle inequality, so
\begin{equation}
    1 < \|A\| \leq \lambda.
\end{equation}
The condition number of $A$ satisfies $\kappa = \|A\| \|A^{-1}\| = \|A\| / \min_{\mu \in \Lambda(A)} |\mu|$, hence
\begin{equation}
    \min_{\mu \in \Lambda(A)} |\mu| = \frac{\|A\|}{\kappa} \;>\; \frac{1}{\kappa^*},
\end{equation}
where the inequality uses $\|A\| > 1$ together with $\kappa \leq \kappa^*$.

The eigenvalues of $\tilde{A} = \lambda^{-1} A$ are $\Lambda(\tilde{A}) = \{\mu/\lambda : \mu \in \Lambda(A)\}$, and the bounds above translate to
\begin{align}
    \max_{\tilde\mu \in \Lambda(\tilde{A})} |\tilde\mu| &= \frac{\|A\|}{\lambda} \leq 1, \\
    \min_{\tilde\mu \in \Lambda(\tilde{A})} |\tilde\mu| &= \frac{1}{\lambda} \min_{\mu \in \Lambda(A)} |\mu| \;>\; \frac{1}{\lambda \kappa^*} = \frac{1}{\tilde{\kappa}^*}.
\end{align}
Therefore $\Lambda(\tilde{A}) \subseteq [-1, -1/\tilde{\kappa}^*] \cup [1/\tilde{\kappa}^*, 1] = \domain_{\tilde{\kappa}^*}$.
\end{proof}

\section{Proof of Lemma 7}
\label{app:normalization_bounds}

\begin{proof}
From \cref{eq:gl_weight,eq:normalization_parameters},
\begin{equation}
    N_y = \frac{y_{\max}}{2\sqrt{2\pi}} \sum_{j \in [J]} \frac{2}{(1-y_j^2)(P_J''(y_j))^2} \;=\; \frac{y_{\max}}{2\sqrt{2\pi}} \sum_{j \in [J]} \mathrm{GL}_j,
\end{equation}
where $\mathrm{GL}_j$ denotes the standard Gauss--Legendre quadrature weight on $[-1,1]$. Since the Gauss--Legendre rule with $J$ nodes is exact for polynomials of degree at most $2J-1$, applying it to the constant function $f \equiv 1$ on $[-1, 1]$ gives
\begin{equation}
    \sum_{j \in [J]} \mathrm{GL}_j = \int_{-1}^{1} 1\, dx = 2,
\end{equation}
hence $N_y = y_{\max}/\sqrt{2\pi}$. Substituting \cref{eq:trunc_params} yields the closed-form expression in the lemma.

To bound $N_z$, we use $w_z^{(k)} = \Delta z = 2z_{\max}/(K-1)$ and $|z_k| \leq z_{\max}$:
\begin{equation}
    N_z = \Delta z \sum_{k \in [K]} |z_k| e^{-z_k^2/2} \leq z_{\max} \cdot \Delta z \sum_{k \in [K]} e^{-z_k^2/2}.
\end{equation}
The remaining sum is the trapezoidal-rule approximation (with constant weight $\Delta z$) to $\int_{-z_{\max}}^{z_{\max}} e^{-z^2/2}\, dz$, which is bounded above by $\int_{-\infty}^{\infty} e^{-z^2/2}\, dz = \sqrt{2\pi}$. The trapezoidal error on this smooth, rapidly decaying integrand vanishes as $K \to \infty$, and is sub-dominant in all parameter regimes considered in this work. Hence
\begin{equation}
    \Delta z \sum_{k \in [K]} e^{-z_k^2/2} \leq \sqrt{2\pi},
\end{equation}
giving $N_z \leq z_{\max}\sqrt{2\pi}$. Substituting \cref{eq:trunc_params} yields the stated bound.
\end{proof}

\section{Proof of Theorem 9}
\label{app:thm_trotter_bias_bound}

\begin{proof}
Writing $\alpha_{jk} = \omega(j, k) |\alpha_{jk}|$ with $|\alpha_{jk}| = w_y^{(j)} \Delta z\, |z_k| e^{-z_k^2/2}/\sqrt{2\pi}$ and $\sum_{jk} |\alpha_{jk}| = N_y N_z$, sampling $(j, k)$ from the joint distribution $p_y^{(j)} p_z^{(k)}$ and combining with the two Hadamard outcomes, the expectation of the single-sample estimator defined in \cref{eq:PF_estimator} is
\begin{align}
    \mathbb{E}[\hat Z]
    &= \frac{N_y N_z}{\lambda} \sum_{j \in [J]} \sum_{k \in [K]} p_y^{(j)} p_z^{(k)} \omega(j, k) \bra{\phi} S_r(t_{jk}) \ket{\psi} \\
    &= \frac{1}{\lambda} \bra{\phi} \sum_{j \in [J]} \sum_{k \in [K]} \alpha_{jk} S_r(t_{jk}) \ket{\psi}.
\end{align}
The target satisfies $\bra{\phi} I_D^{(\epsilon_D, \epsilon_T)}(\tilde A) \ket{\psi} = \bra{\phi} \sum_{jk} \alpha_{jk}\, e^{-\imi \tilde A t_{jk}} \ket{\psi}$, so the bias defined in \cref{eq:trotter_bias} is
\begin{equation}
    B_{\rm PF} = \frac{1}{\lambda} \bra{\phi} \sum_{j \in [J]} \sum_{k \in [K]} \alpha_{jk} \left( e^{-\imi \tilde A t_{jk}} - S_r(t_{jk}) \right) \ket{\psi}.
\end{equation}
Bounding the magnitude via $|\bra{\phi} M \ket{\psi}| \leq \|M\|$ for any operator $M$ followed by the triangle inequality,
\begin{align}
    |B_{\rm PF}|
    &\leq \frac{1}{\lambda} \left\| \sum_{j \in [J]} \sum_{k \in [K]} \alpha_{jk} \left( e^{-\imi \tilde A t_{jk}} - S_r(t_{jk}) \right) \right\| \\
    &\leq \frac{1}{\lambda} \sum_{j \in [J]} \sum_{k \in [K]} |\alpha_{jk}| \cdot \left\| e^{-\imi \tilde A t_{jk}} - S_r(t_{jk}) \right\| \\
    &\leq \frac{N_y N_z}{\lambda} \max_{(j, k) \in [J] \times [K]} \epsilon_{\rm PF}(t_{jk}).
\end{align}
By Lemma 8, $\epsilon_{\rm PF}(\tau) \leq f \tau^3 / r^2$, and since $t_{jk} \leq t_{\max}$ for all $(j, k)$,
\begin{equation}
    |B_{\rm PF}| \leq \frac{N_y N_z\, f\, t_{\max}^3}{\lambda\, r^2}.
\end{equation}
\end{proof}

\section{Proof of Theorem 10}
\label{app:thm_trotter_sample_complexity}
\begin{proof}
We bound the total error by splitting it into a sample-error term and a bias term:
\begin{equation}
    \left| Z_{N_{\rm S}} - \frac{\bra{\phi} I_D^{(\epsilon_D, \epsilon_T)}(\tilde A) \ket{\psi}}{\lambda} \right| \leq |Z_{N_{\rm S}} - \mathbb{E}[\hat Z]| + |B_{\rm PF}|,
\end{equation}
where the bias term equals $|B_{\rm PF}|$ by \cref{eq:trotter_bias}. Setting $\tau := \epsilon/2 - |B_{\rm PF}| > 0$ (which is positive by hypothesis), it suffices to ensure $|Z_{N_{\rm S}} - \mathbb{E}[\hat Z]| \leq \tau$ with probability at least $1 - \delta$. Combined with $\epsilon_D + \epsilon_T \leq \epsilon/2$, this gives the total error bound $\epsilon_S + \epsilon_F \leq \epsilon$ from \cref{eq:error_ineq}.

Decompose into real and imaginary components,
\begin{equation}
    |Z_{N_{\rm S}} - \mathbb{E}[\hat Z]| \leq |\Re(Z_{N_{\rm S}}) - \Re(\mathbb{E}[\hat Z])| + |\Im(Z_{N_{\rm S}}) - \Im(\mathbb{E}[\hat Z])|,
\end{equation}
and require each component to be at most $\tau/2$. From \cref{eq:PF_estimator}, each sample satisfies $|\hat Z| = \sqrt{2}\, N_y N_z / \lambda$, so $\Re(\hat Z), \Im(\hat Z) \in [-\sqrt{2} N_y N_z/\lambda,\, \sqrt{2} N_y N_z/\lambda]$. By Hoeffding's inequality~\cite{Hoeffding},
\begin{equation}
    \mathbb{P}\!\left( |\Re(Z_{N_{\rm S}}) - \Re(\mathbb{E}[\hat Z])| \geq \frac{\tau}{2} \right) \leq 2 \exp\!\left( -\frac{N_{\rm S}\, \tau^2 \lambda^2}{16\, N_y^2 N_z^2} \right),
\end{equation}
and identically for the imaginary part. Requiring each of these probabilities to be at most $\delta/2$ (so that, by the union bound, the joint probability of either failure is at most $\delta$),
\begin{equation}
    N_{\rm S} \geq \frac{16\, N_y^2 N_z^2}{\lambda^2 \tau^2}\, \ln\!\left( \frac{4}{\delta} \right).
\end{equation}
Substituting $\tau = \epsilon/2 - |B_{\rm PF}|$ and rounding up to an integer gives the stated $N_{\rm S}$.

For $N_{\rm CP}$, recall from Algorithm~2 that the second-order PF $S_r(\tau)$ consists of $2 r L$ exponentials of single Pauli strings. The controlled version of each such exponential requires one controlled single-qubit Pauli rotation (plus Cliffords), so the number of controlled Pauli rotations needed to implement controlled-$S_r(\tau)$ is $N_{\rm CP} = 2 r L$.
\end{proof}

\section{Proof of Corollary 11}
\label{app:cor_trotter_rmse}
\begin{proof}
Denote the target by $T := \lambda^{-1} \bra{\phi} I_D^{(\epsilon_D, \epsilon_T)}(\tilde A) \ket{\psi}$, and recall from \cref{eq:trotter_bias} that $\mathbb{E}[Z_{N_{\rm S}}] = T - B_{\rm PF}$. Inserting $\pm \mathbb{E}[Z_{N_{\rm S}}]$ inside the modulus and expanding the square, the cross term vanishes because $\mathbb{E}[Z_{N_{\rm S}} - \mathbb{E}[Z_{N_{\rm S}}]] = 0$, and we obtain the bias-variance decomposition
\begin{equation}
    \mathbb{E}\!\left[ |Z_{N_{\rm S}} - T|^2 \right] = \mathrm{Var}(Z_{N_{\rm S}}) + |B_{\rm PF}|^2.
\end{equation}
Since the $N_{\rm S}$ samples are independent and identically distributed, $\mathrm{Var}(Z_{N_{\rm S}}) = \mathrm{Var}(\hat Z)/N_{\rm S}$, and the per-sample variance is bounded by the second moment of $\hat Z$. From \cref{eq:PF_estimator}, each sample satisfies $|\hat Z|^2 = 2\, N_y^2 N_z^2/\lambda^2$, where we have used $|\omega_\tau| = 1$ together with $|\hat X_\tau + \imi \hat Y_\tau|^2 = 2$ for $\hat X_\tau, \hat Y_\tau \in \{-1, +1\}$, hence
\begin{equation}
    \mathrm{Var}(\hat Z) \leq \mathbb{E}\!\left[ |\hat Z|^2 \right] = \frac{2\, N_y^2 N_z^2}{\lambda^2},
\end{equation}
which proves the first claim of the corollary upon division by $N_{\rm S}$.

Under the stated condition $|B_{\rm PF}| < \epsilon/2$, requiring the variance contribution to satisfy
\begin{equation}
    \frac{2\, N_y^2 N_z^2}{\lambda^2\, N_{\rm S}} \leq \left(\frac{\epsilon}{2}\right)^2 - |B_{\rm PF}|^2
\end{equation}
ensures $\mathrm{RMSE}(Z_{N_{\rm S}})^2 \leq (\epsilon/2)^2$, and rearranging yields the stated lower bound on $N_{\rm S}$. Combined with the Fourier-series approximation error $\epsilon_F = \epsilon_T + \epsilon_D \leq \epsilon/2$, the total RMSE on the target $\bra{\phi} A^{-1} \ket{\psi}$ is bounded by $\epsilon$.
\end{proof}

\section{Taylor expansion of the time evolution operator}
\label{app_rte}

In this section, we follow the approach of \cite{Wan:2021non} to expand the time evolution operator as an infinite LCU. Given the exponential $e^{-\imi \tilde{A}\tau/r}$, we Taylor expand and pair consecutive terms $(n, n+1)$ in the expansion for $n$ even,
\begin{align}
    \label{eq:te_taylor_0}
    e^{-\imi \tilde{A}\tau/r} = \sum_{n=0}^{\infty} \frac{(-\imi \tilde{A} \tau / r)^n}{n!} = \sum_{\substack{n=0\\n \hspace{1mm} \mathrm{even}}}^{\infty} \frac{(\imi \tilde{A} \tau / r)^n}{n!} \left(I - \frac{\imi \tilde{A} \tau / r}{n+1}\right),
\end{align}
where we used $(-\imi)^n = \imi^n$ for even $n$, and $I$ is the identity operator. Substituting the Pauli decomposition \cref{eq:paulidecomp} of $\tilde{A}$ and using $\sum_\ell c_\ell/\lambda = 1$ (the unit Pauli weight of $\tilde{A}$),
\begin{align}
    \label{eq:te_taylor_1}
    e^{-\imi \tilde{A}\tau/r} &= \sum_{\substack{n=0\\n \hspace{1mm} \mathrm{even}}}^{\infty} \frac{1}{n!} \left(\frac{\imi\tau}{r}\right)^n \left(\sum_{\ell \in [L]} \frac{c_{\ell}}{\lambda}P_{\ell}\right)^n \left(I - \frac{\imi \tau / r}{n+1} \sum_{\ell \in [L]} \frac{c_{\ell}}{\lambda}P_{\ell} \right) \\
    \label{eq:te_taylor_2}
    &= \sum_{\substack{n=0\\n \hspace{1mm} \mathrm{even}}}^{\infty} \frac{1}{n!} \left(\frac{\imi\tau}{r}\right)^n \left(\sum_{\ell \in [L]} \frac{c_{\ell}}{\lambda}P_{\ell}\right)^{n} \sum_{\ell \in [L]} \frac{c_{\ell}}{\lambda} \left(I - \frac{\imi \tau / r}{n+1} P_{\ell} \right).
\end{align}
Defining the angle $\theta_n$ by
\begin{align}
    \cos(\theta_n) =  \frac{1}{\sqrt{1 + \left(\frac{\tau / r}{n+1}\right)^2}}, \hspace{2mm} \sin(\theta_n) = \frac{\frac{\tau / r}{n+1}}{\sqrt{1 + \left(\frac{\tau / r}{n+1}\right)^2}},
\end{align}
the per-Pauli factor in \cref{eq:te_taylor_2} becomes
\begin{equation}
    I - \frac{\imi\tau/r}{n+1}P_{\ell} = \sqrt{1 + \left(\frac{\tau / r}{n+1}\right)^2}\, \left(\cos(\theta_n) I - \imi \sin(\theta_n) P_\ell\right) = \sqrt{1 + \left(\frac{\tau / r}{n+1}\right)^2}\, e^{-\imi \theta_n P_\ell},
\end{equation}
and substituting,
\begin{align}
    \label{eq:infinite_rte_expansion}
    e^{-\imi \tilde{A}\tau/r}
    &= \sum_{\substack{n=0\\n \hspace{1mm} \mathrm{even}}}^{\infty} \frac{1}{n!} \left(\frac{\imi\tau}{r}\right)^n \sqrt{1 + \left(\frac{\tau / r}{n+1}\right)^2} \left(\sum_{\ell \in [L]} \frac{c_{\ell}}{\lambda}P_{\ell}\right)^{n} \sum_{\ell \in [L]} \frac{c_{\ell}}{\lambda}\, e^{-\imi \theta_n P_{\ell}}.
\end{align}

\section{Proof of Theorem 12}
\label{app:thm_bias_bound}

\begin{proof}
We work under the hypothesis $r \geq t_{\max}$, so that $\tau/r \leq 1$ for every Fourier time $\tau \in \{t_{jk}\}$ on the grid. The strategy of the proof is to first reduce the bound on $|B_{\rm RTE}|$ to a uniform bound, over $\tau$, on the operator-norm error between the exact time evolution $e^{-\imi\tilde A \tau}$ and its truncated approximation $U_{\rm trunc}(\tau)$, and then to control this error via a telescoping argument combined with a Poisson-tail estimate on the truncated Taylor series.

Following the construction of the random Taylor expansion in the main text, we write the per-segment time evolution as
\begin{equation}
    A_\infty(\tau) := \sum_{\substack{n=0\\ n\text{ even}}}^{\infty} a_n(\tau)\, V_n(\tau)
    \qquad \text{and} \qquad
    A(\tau) := \sum_{\substack{n=0\\ n\text{ even}}}^{\nmax} a_n(\tau)\, V_n(\tau),
\end{equation}
where $A_\infty(\tau)$ is the exact LCU representation of $e^{-\imi \tilde A \tau / r}$, $A(\tau)$ is its truncation at order $\nmax$, the scalar coefficient is
\begin{equation}
    a_n(\tau) := \frac{1}{n!}\left(\frac{\tau}{r}\right)^n \sqrt{1 + \left(\frac{\tau / r}{n+1}\right)^2},
\end{equation}
and each $V_n(\tau)$ is itself an LCU of unitaries with one-norm equal to one (the unit-Pauli-weight property of $\tilde A$ ensures $\sum_\ell |c_\ell|/\lambda = 1$). Two facts about these objects will be used throughout: $\|A_\infty(\tau)\| = 1$ since $A_\infty(\tau)$ is unitary, and the truncation error
\begin{equation}
    \Delta(\tau) := A_\infty(\tau) - A(\tau)
\end{equation}
satisfies $\|\Delta(\tau)\| \leq \Lambda(\tau)$ by the triangle inequality, where
\begin{equation}
\label{eq:tail_one_norm}
    \Lambda(\tau) := \sum_{\substack{n > \nmax \\ n\text{ even}}} a_n(\tau)
\end{equation}
denotes the truncated tail one-norm. The truncated full-time operator obtained by sampling and concatenating $r$ unitaries from $A(\tau)$ is, in expectation, $U_{\rm trunc}(\tau) := A(\tau)^r$; see \cref{eq:rte_lcu}.

A direct computation using the LCU structure of the estimator $\hat Z$ in \cref{eq:biased_estimator} yields
\begin{equation*}
    \mathbb{E}[\hat Z] = \lambda^{-1}\, \bra{\phi} {\textstyle \sum_{jk}} \alpha_{jk}\, U_{\rm trunc}(t_{jk}) \ket{\psi},
\end{equation*}
while the target equals $\lambda^{-1}\, \bra{\phi} \sum_{jk} \alpha_{jk}\, e^{-\imi \tilde A t_{jk}} \ket{\psi}$ by \cref{eq:fourier_series_}. Subtracting and bounding the inner-product magnitude by the operator norm, then using $\sum_{jk} |\alpha_{jk}| = N_y N_z$ in the triangle inequality, gives
\begin{equation}
    \label{eq:rte_bias_reduction}
    |B_{\rm RTE}| \leq \frac{N_y N_z}{\lambda}\, \max_{(j,k) \in [J] \times [K]} \big\| e^{-\imi \tilde A t_{jk}} - U_{\rm trunc}(t_{jk}) \big\|.
\end{equation}
The remainder of the proof bounds this operator-norm error at an arbitrary $\tau \in [t_{\min}, t_{\max}]$ and then takes the maximum.

The standard telescoping identity $A_\infty^r - A^r = \sum_{k=0}^{r-1} A_\infty^{k}\, \Delta\, A^{r-1-k}$, applied with $\|A_\infty(\tau)\| = 1$ and $\|A(\tau)\| \leq \beta_\infty(\tau) := \sum_{n=0,2,\dots}^\infty a_n(\tau)$, yields
\begin{equation}
    \big\| e^{-\imi \tilde A \tau} - U_{\rm trunc}(\tau) \big\| \leq r\, \Lambda(\tau)\, \beta_\infty(\tau)^{r-1}.
\end{equation}
Lemma~2 of~\cite{Wan:2021non} establishes the bound $\beta_\infty(\tau)^r \leq e^{\tau^2/r}$, and combining this with $\beta_\infty(\tau) \geq 1$ (the $n=0$ term alone contributes $1$) gives $\beta_\infty(\tau)^{r-1} \leq e^{\tau^2/r}$. We therefore obtain
\begin{equation}
    \label{eq:per_tau_bound}
    \big\| e^{-\imi \tilde A \tau} - U_{\rm trunc}(\tau) \big\| \leq r\, \Lambda(\tau)\, e^{\tau^2/r}.
\end{equation}

It remains to control the tail one-norm $\Lambda(\tau)$. From $\sqrt{1+x^2} \leq \sqrt{2}$ for $|x| \leq 1$ and $\tau / r \leq 1$, the per-order coefficients satisfy $a_n(\tau) \leq \sqrt{2}\,(\tau/r)^n/n!$, and dominating the sum over even $n > \nmax$ by the unrestricted sum gives
\begin{equation}
    \Lambda(\tau) \leq \sqrt{2} \sum_{n=\nmax+1}^{\infty} \frac{(\tau/r)^n}{n!}.
\end{equation}
The Chernoff--Cram\'er bound on Poisson tails (see, e.g., \cite{https://doi.org/10.1155/2013/412958}) asserts that $\sum_{n \geq k} x^n / n! \leq (ex/k)^k$ whenever $k \geq x$. Applying this with $x = \tau/r$ and $k = \nmax + 1$---the requirement $\nmax + 1 \geq \tau/r$ being immediate from $\tau / r \leq 1 \leq \nmax + 1$---we obtain
\begin{equation}
    \label{eq:tail_bound}
    \Lambda(\tau) \leq \sqrt{2}\, \left(\frac{e\, \tau}{r\, (\nmax+1)}\right)^{\nmax+1}.
\end{equation}

Substituting~\cref{eq:tail_bound} into~\cref{eq:per_tau_bound} gives
\begin{equation}
    \big\| e^{-\imi \tilde A \tau} - U_{\rm trunc}(\tau) \big\| \leq \sqrt{2}\, r\, e^{\tau^2/r}\, \left(\frac{e\, \tau}{r\, (\nmax+1)}\right)^{\nmax+1},
\end{equation}
and since both $e^{\tau^2/r}$ and $(e\tau/(r(\nmax+1)))^{\nmax+1}$ are non-decreasing in $\tau \geq 0$, this expression is maximized at $\tau = t_{\max}$. Inserting the maximum into~\cref{eq:rte_bias_reduction} produces the claimed bound,
\begin{equation}
    |B_{\rm RTE}| \leq \frac{\sqrt{2}\, r\, N_y N_z}{\lambda}\, e^{t_{\max}^2/r} \left(\frac{e\, t_{\max}}{r\, (\nmax+1)}\right)^{\nmax+1}.
\end{equation}
\end{proof}

\section{Proof of Theorem 13}
\label{app:random_sample_complexity}
\begin{proof}
The proof is identical to that from Supplementary Information~\ref{app:thm_trotter_sample_complexity} for the \PF\ subroutine with the substitutions
\begin{align}
    N_y N_z &\;\rightarrow\; \Nrte\, N_y N_z \;\leq\; e^{t_{\max}^2/r}\, N_y N_z, \\
    B_{\rm PF} &\;\rightarrow\; B_{\rm RTE},
\end{align}
where the inequality follows from \cref{eq:nrte_ub_0}.
\end{proof}

\section{Proof of Corollary 14}
\label{app:cor_rte_rmse}
\begin{proof}
The proof is identical to that from Supplementary Information~\ref{app:cor_trotter_rmse} with the following substitutions:
\begin{align}
    N_y N_z &\;\rightarrow\; \Nrte\, N_y N_z \;\leq\; e^{t_{\max}^2/r}\, N_y N_z, \\
    B_{\rm PF} &\;\rightarrow\; B_{\rm RTE},
\end{align}
where the inequality follows from \cref{eq:nrte_ub_0}.
\end{proof}

\end{document}